\newcommand{\ie}{i.e.\@\xspace}
\newcommand{\eg}{e.g.\@\xspace}
\newcommand{\CSharp}{C$^{\#}$\xspace}
\newcommand{\CPP}{C$^{++}$\xspace}
\let\javaId\texttt
\newcolumntype{O}{>\!c<\!} 
\newcolumntype{T}{>$r<$}
\newcommand{\hll}[1]{
  \colorlet{hlbgcolor}{gray!35}%
  \ifmmode%
    \mathchoice{%
      \colorbox{hlbgcolor}{$\displaystyle#1$}%
    }{%
      \colorbox{hlbgcolor}{$#1$}%
    }{%
      \colorbox{hlbgcolor}{$\scriptstyle#1$}%
    }{%
      \colorbox{hlbgcolor}{$\scriptscriptstyle#1$}%
    }%
  \else%
    \colorbox{hlbgcolor}{#1}%
  \fi
}
\newcommand{\setOf}[1]{\{\mkern1.5mu #1\mkern1.5mu \}}
\let\mathscr\relax
\newcommand{\alt}{\; | \;}
\DeclareMathAlphabet\mathbfcal{OMS}{cmsy}{b}{n}
\newcommand{\edgelbl}{\ensuremath{l}}
\newcommand{\EOP}{\ensuremath{\mathdollar}}
\newcommand{\re}{\ensuremath{R}}
\newcommand{\reclos}[1]{\ensuremath{#1^\ast}}
\newcommand{\reopt}[1]{\ensuremath{#1{}^?}}
\newcommand\@eatpar{\@ifnextchar\par{\expandafter\@eatpar\@gobble}\relax}
\newcommand{\figuresection}[2][]{%
  \par%
  {\sffamily\bfseries #2}\hfill{#1}%
  \smallskip%
}
\let\oldoperatorname\operatorname
\renewcommand{\operatorname}[1]{\oldoperatorname{\mathsf{#1}}}
\newcommand{\splitAt}[2]{\operatorname{split-at}(#1, #2)}
\newcommand{\removelatexerror}{\let\@latex@error\@gobble}
\newcommand{\hatL}{\mkern-3.5mu\hat{\mkern4mu\mathsf{L}}}
\newcommand{\withRTNR}[1]{}
\colorlet{query-1}{blue!50!green}
\colorlet{query-2}{colorblind-muted-5}
\colorlet{query-3}{colorblind-muted-3}
\colorlet{query-4}{colorblind-muted-8}
\let\old@lstKV@SwitchCases\lstKV@SwitchCases
\def\lstKV@SwitchCases#1#2#3{}
\let\lstKV@SwitchCases\old@lstKV@SwitchCases
    \def\lst@PlaceNumber{\lst@linebgrd}%
\def\lst@PlaceNumber{\llap{\normalfont
                \lst@numberstyle{\thelstnumber}\kern\lst@numbersep}\lst@linebgrd}\\%
\def\lst@PlaceNumber{\rlap{\normalfont
                \kern\linewidth \kern\lst@numbersep
                \lst@numberstyle{\thelstnumber}}\lst@linebgrd}%
\lstdefinestyle{defaultstyle}{
  basicstyle=\ttfamily\footnotesize,
  showstringspaces=false,
  commentstyle=\color{green!50!black},
  keywordstyle=\bfseries\color{purple},
  frame = single,
  backgroundcolor = \color{white},
  escapeinside={`}{`},
}
\lstdefinelanguage{PCF}{
  keywords={let,fun,in,true,false},
  moredelim={**[is][\color{query1}]{@1}{@}},
  moredelim={**[is][\color{query-2}]{@2}{@}},
  moredelim={**[is][\color{query-3}]{@3}{@}},
  moredelim={**[is][\color{query-4}]{@4}{@}},
}
\lstdefinelanguage{LM}{
  keywords={module,import,def,fun,fix,let,letrec,letpar,in,Int},
  moredelim={**[is][\color{query1}]{@1}{@}},
  moredelim={**[is][\color{query-2}]{@2}{@}},
  moredelim={**[is][\color{query-3}]{@3}{@}},
  moredelim={**[is][\color{query-4}]{@4}{@}},
}
\lstdefinelanguage{Statix}{
  keywords={scope,query,filter,and,min,in},
  moredelim={**[is][\color{query1}]{@1}{@}},
  moredelim={**[is][\color{query-2}]{@2}{@}},
  moredelim={**[is][\color{query-3}]{@3}{@}},
  moredelim={**[is][\color{query-4}]{@4}{@}},
}
\lstdefinelanguage{SptAML}{
  keywords={test,class,private,var,public,analysis,succeeds,run},
  moredelim={**[is][\color{query1}]{@1}{@}},
  moredelim={**[is][\color{query-2}]{@2}{@}},
  moredelim={**[is][\color{query-3}]{@3}{@}},
  moredelim={**[is][\color{query-4}]{@4}{@}},
}
\lstdefinelanguage{AML}{
  keywords={%
    module,import,%
    var,%
    class,extends,new,%
    private,protected,public,internal,%
    return,%
    int,%
  },
  moredelim={**[is][\color{query1}]{@1}{@}},
  moredelim={**[is][\color{query-2}]{@2}{@}},
  moredelim={**[is][\color{query-3}]{@3}{@}},
  moredelim={**[is][\color{query-4}]{@4}{@}},
}
\newcommand{\lbldef}[1]{\ensuremath{{\scriptstyle\mathsf{#1}}}\xspace}
\newcommand{\lblLEX}{\lbldef{LEX}}
\newcommand{\lblIMP}{\lbldef{IMP}}
\newcommand{\lblEXT}{\lbldef{EXT}}
\newcommand{\lblMOD}{\lbldef{MOD}}
\newcommand{\lblCLS}{\lbldef{CLS}}
\newcommand{\lblVAR}{\lbldef{VAR}}
\newcommand{\lblTHISMOD}{\lbldef{THIS_M}}
\newcommand{\lblTHIS}{\lbldef{THIS_C}}
\newcommand{\lblEXTPRIV}{\lbldef{EXT_{PRV}}}
\newcommand{\lblEXTPROT}{\lbldef{EXT_{PRT}}}
\newcommand{\reLEX}{\reclos{\lblLEX}\reclos{\lblEXT}\lblVAR}
\newcommand{\reLEXCPP}{\reclos{\lblLEX}\reclos{{(\lblEXT|\lblEXTPROT|\lblEXTPRIV)}}\lblVAR}
\newcommand{\reMEM}{\reclos{\lblEXT}\lblVAR}
\newcommand{\reMEMCPP}{\reclos{{(\lblEXT|\lblEXTPROT|\lblEXTPRIV)}}\lblVAR\xspace}
\newcommand{\moddecl}[2]{\lit*{mod}\: #1 \mathop{\lit*{:}} #2}
\newcommand{\clsdecl}[2]{\lit*{cls}\: #1 \mathop{\lit*{:}} #2}
\newcommand{\defdecl}[3]{\lit*{var}\: #1 \mathop{\lit*{:}} #2 \mathop{\lit*{@}} #3}
\newcommand{\tyINT}{\lit*{int}}
\newcommand{\tyINST}[1]{\lit*{inst}\;#1}
\colorlet{query1}{blue!85!black}
\colorlet{query2}{green!80!black}
\newenvironment{inferrule}[1][\textwidth]{%
    \begin{minipage}{#1}
    \setpremisesend{0.25ex}%
    \addtolength{\jot}{0.25em}%
    \[%
    \hfill%
}{%
    \hfill%
    \]%
    \end{minipage}
    \ignorespacesafterend
}
\newenvironment{inferruleraw}{%
    \setpremisesend{0.25ex}%
    \addtolength{\jot}{0.25em}%
    \[%
    \hfill%
}{%
    \hfill%
    \]%
    \ignorespacesafterend
}
\newcommand{\ifdisplaystyle}[2]{\mathchoice{#1}{#2}{#2}{#2}}
\newcommand{\spacedkw}[1]{\:\mathbf{#1}\:}
\newcommand{\figquery}[4]{#1 \xreaches{#3} {\scriptstyle #2} \ifstrempty{#4}{}{\mathop{/} \text{\smaller #4}}}
\newcommand{\query}[6]{\spacedkw{query_{#1}} #2 \xreaches{#3} #4 \ifstrempty{#5}{}{\mathop{/} \text{\smaller #5}} \mapsto #6}
\newcommand\twoheaduparrow{\raisebox{-0.15em}{\rotatebox{90}{$\twoheadrightarrow$}}}
\newcommand\upexclaim{\rotatebox[origin=c]{180}{$!$}}
\newcommand{\polLit}[1]{\lit*{#1}\xspace}
\newcommand{\PUB}{\polLit{PUB}}
\newcommand{\PRT}{\polLit{PRT}}
\newcommand{\PRV}{\polLit{PRV}}
\newcommand{\MOD}{\polLit{MOD}}
\newcommand{\MODof}[1]{\MOD\; #1}
\newcommand{\SMC}{\polLit{SMC}}
\newcommand{\SMCof}[1]{\SMC\; #1}
\newcommand{\SMD}{\polLit{SMD}}
\newcommand{\SMDof}[1]{\SMD\; #1}
\newcommand{\setVar}[1]{\overline{#1}}
\newcommand{\@resolvearrow}[1]{\overset{\scriptstyle\mathsf{#1}}{\rightsquigarrow}}
\newcommand{\resolvearrow}[1]{\mathbin{\ifdisplaystyle{\@resolvearrow{#1}}{\raisebox{-0.11em}{\ensuremath{\@resolvearrow{#1}}}}}}
\newcommand{\resolvemod}{\resolvearrow{M}}
\newcommand{\resolvecls}{\resolvearrow{C}}
\newcommand{\premProgOk}[2]{\vdash_{#1} #2\:  {\textsc{\small prog}}}
\newcommand{\premModOk}[3]{#2 \vdash_{#1} #3 \: {\textsc{\small mod}}}
\newcommand{\premModDeclOk}[3]{#2 \vdash_{#1} #3 \: {\textsc{\small md}}}
\newcommand{\premClassOk}[3]{#2 \vdash_{#1} #3 \: \textsc{\small cls}}
\newcommand{\premClassDeclOk}[3]{#2 \vdash_{#1} #3 \: \textsc{\small cd}}
\newcommand{\premExtOk}[3]{#2 \vdash_{#1} #3 \: \textsc{\small ext}}
\newcommand{\premAcc}[4]{#2 \vdash_{#1} #3 \Rightarrow #4}
\newcommand{\premCMem}[3]{#2 \vdash_{#1} #3 \: \textsc{ok}}
\newcommand{\premQCls}[4]{#2 \vdash_{#1} #3 \resolvecls #4}
\newcommand{\premQMod}[4]{#2 \vdash_{#1} #3 \resolvemod #4}
\newcommand{\premExp}[4]{#2 \vdash_{#1} #3 \mathbin{:} #4}
\newcommand{\premMAcc}[4]{#2 \vdash_{#1} #3 \mathbin{!} #4}
\newcommand{\premEncC}[3]{\vdash_{#1} #2 \mathbin{\twoheaduparrow_\mathsf{C}} #3}
\newcommand{\premEncCI}[3]{\vdash_{#1} #2 \mathbin{\uparrow_\mathsf{C}} #3}
\newcommand{\premEncM}[4][\twoheaduparrow_\mathsf{M}]{\vdash_{#2} #3 \mathbin{#1} #4}
\newcommand{\premEncMI}[4][\uparrow_\mathsf{M}]{\vdash_{#2} #3 \mathbin{#1} #4}
\newcommand{\premMPth}[3]{#2 \vdash_{#1} #3  \mathbin{\upexclaim} {}}
\newcommand{\premMatchRE}[2]{#1 \mathrel{\sim} #2}
\newcommand{\premPathOrder}[4]{#2 \vdash_{#1} #3 \mathrel{<_p} #4}
\theoremstyle{plain}
\newtheorem{assumption}[theorem]{Assumption}
\crefname{assumption}{Assumption}{Assumptions}
\definecolor{ruleref}{HTML}{8A0087}
\newcounter{rlabel} 
\def\@rrefjump{2cm}
\newcommand{\rlabel}[2]{%
  \stepcounter{rlabel} 
  \edef\rref@index{\the\value{rlabel}} 
  \global\expandafter\edef\csname rule#1@idx\endcsname {\rref@index}
  \global\expandafter\edef\csname rule#1@rref\endcsname {#2}
  \smash{\raisebox{\@rrefjump}{\hypertarget{rule:#1-\rref@index}{\raisebox{-\@rrefjump}{\textsc{#2}}}}}
}
\newcommand{\rref@inner}[1]{%
  \ifcsname rule#1@rref\endcsname%
    \ifcsname rule#1@idx\endcsname%
      \edef\rref@index{\csname rule#1@idx\endcsname}
      \hyperlink{rule:#1-\rref@index}{\color{ruleref} {\textsc{\csname rule#1@rref\endcsname}}}
    \else
      \PackageError{preamble}{No index defined for rule #1.}{Internal error.}
    \fi
  \else
    \PackageError{preamble}{Rule #1 not defined.}{Use |\rlabel{|#1|}{...} to define it.|}
    \textbf{??}
  \fi
}
\newcommand{\rref}[1]{%
  \noexpand{\rref@inner{#1}}%
}
\newcommand{\AppendixRef}[2]{%
  \iftoggle{extended}{%
    \cref{#1}%
  }{%
    Appendix~#2~\cite{ZwaanP24artifact}%
  }%
}
  \title{Defining Name Accessibility using Scope Graphs (Extended Edition)}
  \title{Defining Name Accessibility using Scope Graphs}
\keywords{access modifier, visibility, scope graph, name resolution}
\author{Aron Zwaan}{Delft University of Technology, Delft, Netherlands}{a.s.zwaan@tudelft.nl}{0000-0002-1818-4245}{}
\author{Casper Bach Poulsen}{Delft University of Technology, Delft, Netherlands}{c.b.poulsen@tudelft.nl}{0000-0003-0622-7639}{}
\authorrunning{A.\@ Zwaan and C.\@ Bach Poulsen}
\begin{document}

\maketitle

\begin{abstract}
%
%
%
%
%
%
Many programming languages allow programmers to regulate \emph{accessibility};
i.e., annotating a declaration with keywords such as \texttt{export} and \texttt{private} to indicate where it can be accessed.
Despite the importance of name accessibility for, e.g., compilers, editor auto-completion and tooling, and automated refactorings,
few existing type systems provide a formal account of name accessibility.








We present a declarative, executable, and language-parametric model for name accessibility, which provides a formal specification of name accessibility in Java, \CSharp, \CPP, Rust, and Eiffel.
We achieve this by defining name accessibility as a predicate on \emph{resolution paths} through \emph{scope graphs}.
Since scope graphs are a language-independent model of name resolution, our model provides a uniform approach to defining different accessibility policies for different languages.
%
%
%
%
%

Our model is implemented in Statix, a logic language for executable type system specification using scope graphs.
We evaluate its correctness on a test suite that compares it with the \CSharp, Java, and Rust compilers,
and show we can synthesize access modifiers in programs with holes accurately.
%
%

\end{abstract}


\section{Introduction}
\label{sec:introduction}

Many programming languages, especially object-oriented ones, support \emph{information hiding}, \ie, regulating from which positions in a program a declaration can be accessed.
Information hiding is used to enforce invariants of particular code units, implement design patterns (e.g. the singleton pattern), improve modularization, limit public APIs to offer guidance to library users and guarantee forward compatibility.
Support for information hiding is usually provided using \emph{access modifier keywords}%
\footnote{Other common names include `access specifier' or `visibility modifier'.}
(\emph{access modifiers} for short), such as \texttt{public}, \texttt{protected}, \texttt{internal} and \texttt{private}.
Each of these corresponds with a particular accessibility policy that is validated by the type checker.

Although recent research has not paid much attention to access modifiers,
there are still good reasons to study their semantics.
First, understanding access modifiers is required to implement (alternative) compilers and editor services correctly.
In particular, disregarding accessibility may result in incorrect name binding, and hence incorrect program behavior.
Second, formalizing access modifiers enables reasoning about the meaning of programs.
Finally, program transformation tools, such as automated refactorings, must handle the semantics of accessibility correctly.
This is especially relevant for research on large-scale automated transformations, aimed at dealing with large (legacy) codebases.
It is often infeasible to check transformations performed with such tools manually.
Thus, the correctness of these transformations must be guaranteed through other means.

\begin{figure}[tb]
  \begin{subfigure}[t]{0.32\textwidth}
    \input{figures/example1-packages}
    \vspace{-1em}
    \caption{Inheritance through Packages.}
    \label{fig:inheritance-packages}
  \end{subfigure}%
  \hspace{0.04\textwidth}%
  \begin{subfigure}[t]{0.28\textwidth}
    \input{figures/example2-shadowing}
    \vspace{-1em}
    \caption{Inaccessible or Shadowed?}
    \label{fig:shadowing-inaccessible}
  \end{subfigure}
  \hspace{0.04\textwidth}%
  \begin{subfigure}[t]{0.3\textwidth}
    \input{figures/example3-shadowing}
    \vspace{-1em}
    \caption{Accessibility and Shadowing.}
    \label{fig:shadowing-inaccessible-nested}
  \end{subfigure}
  \caption{Examples of intricate Access Modifier semantics. Classes are assumed to be public.}
  \label{fig:intricate-examples}
\end{figure}

The meaning of access modifiers can be intricate in corner cases.
We illustrate that using the examples in~\cref{fig:intricate-examples}.
In~\cref{fig:inheritance-packages}, there is an inheritance chain, where class \javaId{C} extends class~\javaId{B}, which itself extends \javaId{A}.
Classes~\javaId{A} and \javaId{C} reside in package \javaId{p1}, while \javaId{B} is in \javaId{p2}.
Class~\javaId{A} defines a package-accessible field \javaId{x}, which is accessed in \javaId{C}.
The question here is whether that access is actually allowed.
One could reason that it is correct, as the access occurs in the same package as the declaration, so a package-level declaration should be visible.
On the other hand, one could consider \javaId{x} not inherited by \javaId{B}~\cite[{\S}8.2]{JLS8}, and thus not inherited by \javaId{C} either.
In fact, the Java language designers chose the second option, rejecting this program~\cite[{\S}4.2]{SchaferTST12}.
Using \texttt{((A) this).x} is accepted however.

Something similar happens in~\cref{fig:shadowing-inaccessible}.
Here, one can consider the reference \javaId{x} in class~\javaId{C} to be invalid, as the field in class \javaId{B} is inaccessible.
Alternatively, under the assumption that \javaId{B.x} is \emph{out of scope}, the reference can be valid, pointing to \javaId{A.x}.
In this case, Java checks accessibility \emph{after shadowing}, so this program is again rejected.
However, in~\cref{fig:shadowing-inaccessible-nested}, accessibility does influence the binding.
The reference~\javaId{x} binds to the field of the \emph{enclosing} class~\javaId{B}, as the field inherited from class~\javaId{A} is inaccessible.
However, reference~\javaId{y} binds to the field inherited from~\javaId{A}.
Thus, in this case, the \emph{accessibility} of the inherited fields determines the resolution of \javaId{x} and \javaId{y}; \ie, accessibility is checked \emph{before shadowing}.
This shows that specifying accessibility is essential to defining the name binding of a language correctly.

Unintuitive semantics of accessibility occurs in non-object-oriented languages as well.
For example, the accessibility scheme of Agda seems simple: definitions are either public or module-private, and imported definitions can be re-exported.
However, issue \#5461\footnote{\url{https://github.com/agda/agda/issues/5461}} reports that re-exports in a private block are still exposed to the outside world.
While this intuitively seems wrong to most commenters, an argument is made that this is actually the intended behavior.
The discussion stalls shortly after a remark that talking about intended behavior is ``meaningless without a specification''.

These examples show that the meaning of access modifiers is not always obvious.
Hence, language designers should define their semantics unambiguously.
Ideally, that is done through \emph{specifications} containing \emph{inference rules}.
Inference rules allow unambiguous interpretation of the meaning of programming language constructs, including name binding.
However, perhaps surprisingly, a general model for defining access modifiers has never been proposed.

Perhaps closest is the work of Steimann and Thies~\cite{SteimannT09} (later incorporated in the JRRT refactoring tool~\cite{SchaferTST12}).
They propose a constraint-based approach to automating refactorings in Java, by collecting and solving \emph{accessibility constraints}.
These constraints are generated using \emph{constraint generation rules}, which cover the access rules the Java compiler enforces.
By solving these constraints, changes in accessibility implied by the refactoring can be inferred, yielding type- and behavior-preserving refactorings.


Steimann and Thies' work solves the problem of making refactorings in Java sound regarding accessibility.
However, it does not yet give a high-level explanation of the meaning of access modifiers.
This is partly because the constraint generation rules need several low-level details to catch some intricate corner cases,
but also because the function that computes the minimal required accessibility level is not given, as it was ``unpleasant to specify'' and ``of no theoretical interest''~\cite[{\S}5.2]{SteimannT09}.
Therefore, their work cannot easily be adapted to a different language or a different application (\eg, a type checker).

To advance the state of the art, we pursue the following goals:
\begin{itemize}
  \item Explain the meaning of access modifiers.
  \item Explain the (subtle) differences between access modifiers in different languages.
  \item Provide a framework for experimenting with feature combinations that do not (yet) exist in other languages.
\end{itemize}
To this end, we do not fully formalize one particular language, but rather define a toy language that incorporates and combines a large number of accessibility features.
To abstract over low-level name resolution details, we use \emph{scope graphs}~\cite{NeronTVW15,AntwerpenPRV18,RouvoetAPKV20,ZwaanA23}.
In this paper, we demonstrate this is a natural fit, because accessibility can be expressed as a predicate over paths in a scope graph.
%
%
The specification is written in the logic language Statix~\cite{AntwerpenPRV18,RouvoetAPKV20}, which has a well-defined declarative semantics and also supports generating executable type-checkers automatically.

We compare these executable type checkers with reference compilers of Java, \CSharp, and Rust, showing that we accurately captured the semantics of access modifiers in some real-world languages.
Moreover, using Statix/scope graphs as a basis for \emph{(language-parametric) refactorings} is an active topic of research~\cite{Misteli21,Gugten22,MiljakPS23,PoulsenLM24}.
We envision that this will provide accessibility-aware refactorings similar to Steimann et al., without requiring significant additional effort.
This is substantiated by the fact that Statix-based code completion~\cite{PelsmaekerAPV22} proposes an access modifier if and only if it would not cause accessibility errors elsewhere in the program.

In summary, the contributions of this paper are as follows:
\begin{itemize}
  \item We provide a systematic classification of accessibility features (\cref{sec:acc-mod-real-world});
  \item we apply our taxonomy to Java, \CPP, \CSharp, Rust, and Eiffel (\cref{sec:acc-mod-real-world});
  \item we present a specification of (various versions of) accessibility on
    modules (\cref{sec:modules}),
    subclasses (\cref{sec:subclass}), and
    their conjunctive and disjunctive combination (\cref{sec:combining-subclass-module});
  \item we extend our specification with accessibility-restricting inheritance (\cref{sec:inheritance-restriction});
  \item we prove some theorems about our model, showing it is well-behaved (\cref{sec:analysis}); and
  \item we implement our specification in Statix, and compare it with the standard compilers of Java, \CSharp, and Rust.
        Moreover, we show access modifiers can be synthesized accurately using Statix-based Code Completion~\cite{PelsmaekerAPV22} (\cref{sec:evaluation}).
\end{itemize}
\iftoggle{extended}{}{This paper comes with an artifact that allows reproducing the evaluation~\cite{ZwaanP24artifact}, and appendices containing a full specification of the access modifiers and proofs of the stated theorems~\cite{ZwaanP24ext}.}

\section{Access Modifiers in Real-World Languages}
\label{sec:acc-mod-real-world}

In this section, we explore the design space of access modifiers as they occur in real-world languages.
We first motivate why languages have access modifiers (\cref{sec:why-accessibility}).
After that, we discuss common accessibility features (\cref{subsec:accessibility-practise}), summarizing them in a feature model (\cref{subsec:classification}).

\subsection{Why Accessibility?}
\label{sec:why-accessibility}

Most programming languages allow programmers to define entities (variables, functions, types, etc.), and assign a name to them.
That name can then be used to refer to the introduced entity from other positions in the program.
However, as there is typically a large number of entities within a software project, most languages offer a notion of modularization to group related definitions.
Equally named definitions in different modules can be distinguished by qualifying them with the name of the module in which they reside.
Unqualified (or partially qualified) names by default resolve within their enclosing module, or imported modules.
Details of this scheme differ from language to language, but generally aim to make definitions easy to refer to (\eg, by minimizing the number of required qualifiers), while trying to be unambiguous to the compiler and the programmer.

However, these rules may often be too lenient with respect to the intention of the programmer.
A definition may be accessible from scopes where it is not intended to be used.
This can have detrimental effects on the quality of a software artifact.
For example, exposing all internal definitions of a library makes it
(1) less intuitive to its users,
(2) prone to forward compatibility issues and technical dept (\eg strong coupling).

For these reasons, many programming languages provide constructs that give \emph{the programmer} control over the regions of code where a definition can be accessed.
For example, in many object-oriented languages, a class can access fields from its ancestor classes by default (language-controlled).
However, if the programmer does not want a field to be accessible from subclasses, they can add a \texttt{private} access modifier.
This modifier \emph{prevents} access from all other classes (programmer-controlled).
Although many constructs that provide access control to the programmer can be envisioned, most languages settle on a limited set of keywords that can be attached to a definition.
In practice, this relatively simple scheme has proven powerful enough to cover most use cases.

\subsection{Accessibility in Practice}
\label{subsec:accessibility-practise}

Next, we explore how languages typically provide modularization and accessibility features.

\subparagraph*{Modules}
A common feature that provides modularization is \emph{modules} (also called `package' or `namespace').
A module is a syntactic
construct that introduces a named collection of definitions.
Members of modules can be accessed using the name of the module, for example in a preceding import statement, or as a qualifier to the name of the member that is accessed.

Hiding a definition from other modules is the simplest accessibility restriction that can be applied with respect to modules.
For example, Java declarations without an access modifier can only be accessed within the same package.
Rust items without a modifier behave similarly, except that declarations can still be accessed from submodules.

Some languages have multiple notions of modularization.
For example, \CSharp has assemblies, namespaces, and files, where a namespace can comprise multiple files, and/or a file can contain multiple namespaces.
The \texttt{internal} keyword in \CSharp restricts accessibility to the \emph{assembly}, and the \texttt{file} keyword (introduced in \CSharp 11~\cite{CSfile}) to the current file.
Similarly, Java 9 introduces \emph{modules}~\cite{JSR376}, with features to restrict access from external modules.

Some languages give some more control over \emph{which} modules a declaration can be accessed from.
For example, Rust has the \texttt{pub(in} \textit{path}\texttt{)} access modifier, where \textit{path} refers to some enclosing module.
This enables programmers to expose items to an arbitrary ancestor.

\emph{Imports} usually do not affect the visibility of a declaration.
A notable exception to this rule is \emph{re-exporting} (e.g., as implemented in Rust), which can actually \emph{change} the visibility of a declaration, as shown in~\cref{fig:reexport}.
In this program, the module \texttt{inner} is accessible in \texttt{outer}, but not in its parent (the root scope).
Therefore, the function \texttt{main} cannot access its field~\texttt{x}.
However, \texttt{outer} re-exports \texttt{inner::x}, which gives rise to a new definition \texttt{outer::x}.
As \texttt{outer} is accessible in the root scope, so is this definition.
Hence, via the re-export, \texttt{main} can access \texttt{x}, although the original declaration was hidden.

\begin{figure}[t]
\vspace{-1em}
\begin{minipage}{0.5\textwidth}
\begin{lstlisting}[language=Rust]
mod outer {
  mod inner {
    pub x = 42;
  }
  pub use inner::x;
}
\end{lstlisting}
\end{minipage}%
\begin{minipage}{0.5\textwidth}
\begin{lstlisting}[language={Rust}]
fn main() {
  // ERROR: inner is inaccessible:
  // let x = outer::inner::x;
  let x = outer::x;
  println!("{x}")
}
\end{lstlisting}
\end{minipage}%
\vspace{-1.2em}
\caption{Re-exports can change Accessibility}
\label{fig:reexport}
\end{figure}

From an accessibility point of view, re-exporting can typically be considered as a combination of an import and a declaration, where the declaration always points to the imported member.
The re-exported item (\texttt{inner::x} in the example) should be accessible from the location of the \emph{re-export}.
References to the re-export should have access to the location of the re-export, but not necessarily to the location of the original declaration.
In fact, for any access path, it does not matter whether the declaration is a re-export or not.

\subparagraph*{Classes}
A special modularization concept is the notion of \emph{classes},
which represent composite data types with associated operations (methods).
Where simple modules only have a static interpretation, an arbitrary number of class instances can exist at runtime.%
\footnote{
  At this point, we slightly over-simplify the reality.
  For example, neither parameterized modules (ML) nor \texttt{objects} (e.g. Scala/Kotlin) fit in this scheme.
  We made this choice deliberately, to cover the most prevalent cases.
  We conjecture that the techniques we develop for classes can be applied to parameterized modules
  (and vice versa for modules and objects)
  but leave explicating that to future~work.
}
While modules can implicitly be related to each other by their relative position, such a relation does not exist for classes.
However, classes can extend other classes, ensuring the subclass inherits the fields of its parent class.
This creates an inheritance hierarchy orthogonal to the module~hierarchy.

Object-oriented languages usually provide modifiers to control accessibility over the inheritance chain.
For example, Java and \CSharp have a \texttt{private} keyword, which prevents access outside the defining class.
Additionally, the \texttt{protected} keyword allows access from subclasses, but prevents access from any other location.

In Java and \CSharp, the accessibility level is inherited with the field.
That means, if a field in the superclass is \texttt{protected}, it will be protected in the subclass as well.
However, \CPP allows restricting the accessibility of members of the parent class.
A \texttt{private} modifier on extends-clauses will make all inherited public/protected members private on instances of the subclass.
Similarly, a \texttt{protected} modifier will make all inherited public members protected.

Finally, some languages allow specifying `friend' classes, which grant the friend access to its members.
This enables fine-grained access control, independent from module and class hierarchies.
While discouraged in \CPP, Eiffel provides only this access control mechanism.

\subparagraph*{Interaction}
Accessibility restrictions on modules and classes be combined.
This is very explicit in \CSharp, which 
has \texttt{protected internal} and \texttt{private protected} as additional modifiers.
The former permits access from within the assembly (similar to \texttt{internal}) \emph{and} to subclasses (similar to \texttt{protected}), even if they live outside the assembly.
Analogously, \texttt{private protected} grants access to subclasses in the same assembly only, which is equivalent to the conjunction of \texttt{internal} and \texttt{protected}.

\subsection{Classification}
\label{subsec:classification}

These concepts are organized
and related
in the feature model in~\cref{fig:feature-model}.
Following the previous discussion, the main features are modules and classes.
We have only a single feature for modules, because the different variants are (apart from \CSharp{s} files and namespaces) typically not mutually nested.
The \texttt{internal} keyword can either relate to the containing module (Direct) or an arbitrary parent module (Ancestor).
We explore this further in~\cref{sec:modules}.

\begin{figure}[!b]
  \centering
  \small
  \begin{forest}%
  for tree={
        %
        parent anchor=south,
        child anchor=north,
        tier/.pgfmath=level(),
        draw=black,
        delay={
          content={\strut #1}
        },
        inner xsep=2pt,
        inner ysep=2pt,
        align=center
  },
  blackcircle/.style={tikz={\node[fill=black!60,inner sep=2pt,circle]at(.north){};}},
  whitecircle/.style={tikz={\node[draw,fill=white,inner sep=2pt,circle]at(.north){};}},
  [Programming Language
    [Modules,whitecircle
      [internal,whitecircle
        [Direct,tier=lower]
        [Ancestor,tier=lower]
      ]
    ]
    [Classes,whitecircle
      [Friends,whitecircle,xshift=-2em]
      [Subclass Access\\Modifiers,whitecircle,tier=lower,name=ha
        [private,tier=ham,name=priv]
        [protected,tier=ham]
        [protected OR internal,tier=ham,name=protoi]
        [protected AND internal,tier=ham,name=protai]
      ]
      [Extends Clause\\Access Modifier,whitecircle,name=ext
        [private,tier=lower]
        [protected]
      ]
    ]
  ]
  %
  \foreach \i/\j/\k in {!111/!11/!112}
  {
    \coordinate (A) at (\i.north);
    \coordinate (O) at (\j.south);
    \coordinate (B) at (\k.north);
    \path  (A)--(O)--(B)
      pic [fill=white, draw=black, angle radius=3mm] {angle = A--O--B};
    \draw (O) -- (A);
    \draw (O) -- (B);
  }
  %
  \foreach \i/\j/\k in {priv/ha/protai,!231/!23/!232}
  {
    \coordinate (A)at (\i.north);
    \coordinate (O)at (\j.south);
    \coordinate (B)at (\k.north);
    \path  (A)--(O)--(B)
      pic [fill=black!80, angle radius=3mm] {angle = A--O--B};
  }
  %
  \coordinate (MR) at ($(!111.west) + (-2em,0)$);
  \draw[dashed] (protoi.south)
             -- ([yshift=-1em]protoi.south)
             node[fill=white, xshift=-8em] {\flqq{}requires\frqq{}}
             -- ([yshift=-1em]protoi.south -| MR)
             -- (MR)
             -- (MR |- !1.west)
             edge[->] (!1.west);
  \draw[dashed] ([yshift=-1em]protoi.south)
             -- ([yshift=-1em]protai.south)
             -- (protai.south);
  %
  \node[above right = 2.475em and 3em of ext] (lbl-opt) {Optional};
  \draw ([xshift=-0.5em, yshift=0.7em]lbl-opt.west) -- ([xshift=-0.5em, yshift=-0.3em]lbl-opt.west);
  \node[draw,fill=white,inner sep=2pt,circle, below left = 0.3em and 0.2675em of lbl-opt.west] {};
  \node[below = 2em of lbl-opt.west, anchor = west] (lbl-or) {Or};
  \coordinate (or-top)   at ($(lbl-or.west) + (-0.5em, 0.55em)$);
  \coordinate (or-left)  at ($(or-top)      + (-0.5em, -1em)$);
  \coordinate (or-right) at ($(or-top)      + (0.5em, -1em)$);
  \path[draw] (or-left)--(or-top)--(or-right)
    pic [fill=black!80, angle radius=3mm] {angle = or-left--or-top--or-right};
  \node[below = 2em of lbl-or.west, anchor = west] (lbl-alt) {Alternative (xor)};
  \coordinate (alt-top)   at ($(lbl-alt.west) + (-0.5em, 0.55em)$);
  \coordinate (alt-left)  at ($(alt-top)      + (-0.5em, -1em)$);
  \coordinate (alt-right) at ($(alt-top)      + (0.5em, -1em)$);
  \path[draw] (alt-left)--(alt-top)--(alt-right)
    pic [fill=white, draw=black, angle radius=3mm] {angle = alt-left--alt-top--alt-right};
  \node[above left = 1em and 1.5em of lbl-opt.north west, anchor = west] (header) {Legend};
  \draw (header.north east) -- (header.south east) -- (header.south west);
  \node[draw = black, fit=(header)(lbl-alt), inner sep = 0] {};
  \end{forest}
  
  \caption{Feature Model for Access Control.}
  \label{fig:feature-model}

\end{figure}

\begingroup
\newcommand{\sLang}[1]{#1}
\newcommand{\sHead}[1]{\textbf{#1}\vphantom{1.2em}}
\newcommand{\sMid}[1]{#1\vphantom{1.2em}}
\newcommand{\sBott}[1]{\textit{#1}\vphantom{1.2em}}
\newcommand{\sBot}[1]{\hspace{1.2ex}\sBott{#1}}

\def\checkmark{\tikz\fill[scale=0.4](0,.35) -- (.2,0) -- (0.8,.7) -- (.2,.15) -- cycle;}
\newcommand{\x}{\checkmark}
\newcommand{\self}{Direct}
\newcommand{\anc}{Ancestor}
\newcommand{\fnm}{%
  \tablefootnote{Either the most direct enclosing \emph{file} (\texttt{file}), or most directly enclosing \emph{assembly} (\texttt{internal}), possibly bypassing some namespaces.}%
}

\begin{table}[!b]
  \caption{Languages classified according to the feature model in~\cref{fig:feature-model}.}
  \label{tab:feature-table}
  \centering
  \begin{tabular}{l c c c c c}
    \toprule
                                          & \sLang{Java}  & \sLang{\CSharp} & \sLang{\CPP} & \sLang{Eiffel} & \sLang{Rust} \\
    \midrule
    \sHead{Modules}                       & \x            & \x              & \x           &                & \x           \\
    \sMid{Internal}                       & \self         & \self\fnm       & \self        &                & \anc         \\
    \sHead{Classes}                       & \x            & \x              & \x           & \x             &              \\
    \sMid{Friends}                        &               &                 & \x           & \x             &              \\
    \sMid{Subclass Acc.\@ Mod.}           & \x            & \x              & \x           & \x             &              \\
    \sBot{private}                        & \x            & \x              & \x           & \x             &              \\
    \sBot{protected}                      &               & \x              & \x           &                &              \\
    \sBot{protected} | \sBott{internal}   & \x            & \x              &              &                &              \\
    \sBot{protected} \& \sBott{internal}  &               & \x              & \x           &                &              \\
    \sMid{Extends Clause Acc.\@ Mod.}     &               &                 & \x           &                &              \\
    \sBot{private}                        &               &                 & \x           &                &              \\
    \sBot{protected}                      &               &                 & \x           &                &              \\
    \bottomrule
  \end{tabular}
  \vspace{-1em}
\end{table}

\endgroup

In the Classes category, the three subfeatures denote the three mechanisms for access control:
Friends allow access to other classes by name,
Subclass Access Modifiers are access modifiers on definitions that determine how it is accessible within the class hierarchy (\cref{sec:subclass,sec:combining-subclass-module}), and
Extends Clause Access Modifiers (\cref{sec:inheritance-restriction}) are access modifiers on extends clauses, as seen in \CPP.
The latter two have subfeatures for each concrete keyword associated with the access control mechanism.
For that reason, \texttt{private} and \texttt{protected} occur twice: once on definitions and once on extends clauses.
\Cref{tab:feature-table} classifies several languages according to this scheme.
In the remainder of this paper, we develop AML (Access Modifier Language), a language that covers all features.
To this end, we first introduce scope graphs~(\cref{sec:scope-graphs}), and a base language for AML~(\cref{sec:setup}).

\section{Using Scope Graphs to Model Name Binding in Programs}
\label{sec:scope-graphs}

In the previous section, we sketched the landscape of access modifiers.
This discussion was based largely on prose specifications as well as experiments with compiler implementations.
No language specification we are aware of provides a more rigorous model of accessibility (or even non-lexical name binding).
In this section, we introduce \emph{scope graphs}~\cite{NeronTVW15,AntwerpenPRV18,RouvoetAPKV20,ZwaanA23}, and argue that they provide a suitable framework for such a model.
\Cref{sec:setup} introduces AML (Access Modifier Language), a toy language with a type system defined using scope graphs.
\Cref{sec:modules,sec:subclass,sec:inheritance-restriction,sec:combining-subclass-module} will extend this language with all accessibility features from~\cref{fig:feature-model}.

\subsection{Scope Graphs as A Model for Name Binding}

From a name binding perspective, classes and modules have some similarities.
Each of these constructs can be thought of as introducing a `scope' (region of code), in which declarations live, and in which names can be resolved.
Scopes are related to each other in various ways.
First, modules are related according to their relative position in the abstract syntax tree.
In addition, imports and extends clauses relate arbitrary modules and classes, respectively.
Resolving a reference corresponds to finding a matching declaration in a scope that is reachable from the scope of the reference.
For example, a reference may resolve to a declaration if it lives in a lexically enclosing scope, or in a module that is imported in an enclosing scope.

\emph{Scope graphs}~\cite{NeronTVW15,AntwerpenPRV18,RouvoetAPKV20,ZwaanA23} make this more precise.
In this model, the name binding structure of a program is represented by a graph.
\Cref{fig:transitive} (adapted from Poulsen et al.~\cite[Fig.\@ 1]{PoulsenZH23}) gives an example program and its corresponding scope graph.
A scope is represented by a circular node in the graph.
For example, $\xrootscope$ represents the global scope, and~$\xscoperef{A}$,~$\xscoperef{B}$ and~$\xscoperef{C}$ represent the bodies of modules \id{A}, \id{B}, and \id{C}, respectively.
Scopes are related using labeled, directed edges.
For example, $\xscoperef{A}$ is lexically enclosed by $\xrootscope$, and thus the graph contains an edge from $\xscoperef{A}$ to $\xrootscope$ with label $\lblLEX$.
Similarly,~$\xscoperef{B}$ imports $\xscoperef{A}$, and thus the graph contains an edge $\xscoperef{B} \scopeedget[\lblIMP] \xscoperef{A}$.
Finally, scope graphs contain declarations.
For example, a declaration of~\texttt{i} in scope $\xscoperef{C}$ is represented by the $\xscoperef{C}\ \typeedgeg[\lblVAR]\ \texttt{i} : \tyINT$ edge/node pair.
Similarly, the modules are declared in the root scope (\eg, $\xrootscope \typeedget[\lblMOD] \id{A} \sim \xscoperef{A}$).
The language specification determines which data is included in the declaration.
Similarly, the labels for edges and declarations can be chosen to match the (binding) constructs of the language.

\begin{figure*}[t]
\centering
\begin{minipage}[c]{0.205\textwidth}
\centering
\begin{lstlisting}[language=AML]
module A {
  var i = 5
}
module B {
  import A
}
module C {
  import B
  var j = @1i@
}
\end{lstlisting}
\end{minipage}%
\hspace{0.0125em}%
\vrule%
\hspace{0.5em}%
\begin{minipage}[c]{0.775\textwidth}
\centering
\begin{tikzpicture}[scopegraph, node distance = 2em and 3em]
  \node[scope]              (s0) {$s_0$};

  \node[type,  left  = of s0] (mF) {$\texttt{A} \sim \xscoperef{A}$};
  \draw (s0) edge[type, lbl={\lblMOD}] (mF);

  \node[type,  above = of s0] (mG) {$\texttt{B} \sim \xscoperef{B}$};
  \draw (s0) edge[type, lbl={\lblMOD}] (mG);

  \node[type,  right = of s0] (mH) {$\texttt{C} \sim \xscoperef{C}$};
  \draw (s0) edge[type, lbl={\lblMOD}] (mH);

  \node[scope, below = of s0] (s2) {$s_{\texttt{B}}$};
  \draw (s2) edge[lbl={\lblLEX}] (s0);

  \node[scope, left = of s2] (s1) {$s_\texttt{A}$};
  \draw (s1) edge[lbl={\lblLEX}] (s0);

  \node[scope, right = of s2] (s3) {$s_\texttt{C}$};
  \draw (s3) edge[lbl={\lblLEX}] (s0);

  \draw (s2) edge[lbl={\lblIMP}] (s1);
  \draw (s3) edge[lbl={\lblIMP}] (s2);

  \node[type,  left = of s1] (Hj) {$\texttt{i} : \tyINT$};
  \draw (s1) edge[type, lbl={\lblVAR}] (Hj);

  \node[type, right = of s3] (Fi) {$\texttt{j} : \tyINT$};
  \draw (s3) edge[type, lbl={\lblVAR}] (Fi);

  \node[ref, right = of mG, draw=query1] (qy) {\color{query1}
    $\figquery{\xscoperef{C}}{\mathsf{isVar}_{\id{i}}}{\mathsf{\lblLEX}^{*}\mathsf{\lblIMP}^{?}\mathsf{\lblVAR}}{}$
  };
  \draw (qy) edge[term, color=query1, bend left] (s3);
  \draw (s3) edge[term, color=query1, bend right=20] (s2);
  \draw (s2) edge[term, color=query1, bend right=20, strike through] (s1);
  \draw (s1) edge[term, color=query1, bend right=20] (Hj);
\end{tikzpicture}
\end{minipage}
\caption{Reachability example. The $\mathsf{\lblIMP}^{?}$ part in the regular expression prevents traversal over the second $\lblIMP$ edge.}
\label{fig:transitive}
\end{figure*}

\subparagraph*{Reachability}
To resolve a reference, a \emph{query} is executed to find a valid path in the scope graph from the scope of the reference to a matching declaration.
Queries give specification writers several options to filter paths, to retain only valid paths.
First, a unary predicate selects valid declarations.
Usually, this predicate matches declarations with the name of the reference.
Second, a \emph{regular expression} over labels is used to select valid paths.
This regular expression can, for example, be used to prevent transitive imports, or accessing members in a lexical parent of an imported module.

\Cref{fig:transitive} illustrates this with the query for \id{i} in module \id{C} (dashed blue box).
The parameter on the arrow ($\mathsf{\lblLEX}^{*}\mathsf{\lblIMP}^{?}\mathsf{\lblVAR}$), is a regular expression that defines which paths to declarations are valid.
The $\mathsf{\lblLEX}^{*}$ indicates that a path may traverse an arbitrary number of $\lblLEX$-edges.
This corresponds to looking for variables in enclosing scopes.
Next, the $\mathsf{\lblIMP}^{?}$ part indicates that zero or one $\lblIMP$-edges can be traversed.
Finally, the regular expression ends with $\lblVAR$ to ensure all paths resolve in variable declarations only, excluding \eg modules.
The $\mathsf{isVar}_{\id{i}}$ parameter matches all variable definitions with name \id{i} ($\mathsf{isVar}$ is defined in the next section).
The candidate path (shown as blue edges) does not match this regular expression.
Because $\lblIMP$-labeled edges may only be traversed one time, the step to \xscoperef{A} cannot be made.
In other words: the declaration of \id{i} in \id{A} is not \emph{reachable} from \id{C}.

\subparagraph*{Visibility}
Not every declaration that is reachable (\ie, for which a valid access path exists) can actually be referenced, due to \emph{shadowing}.
For example, in most languages, local definitions have higher priority than imported ones.
We call reachable declarations that are not shadowed by any other declaration \emph{visible}.

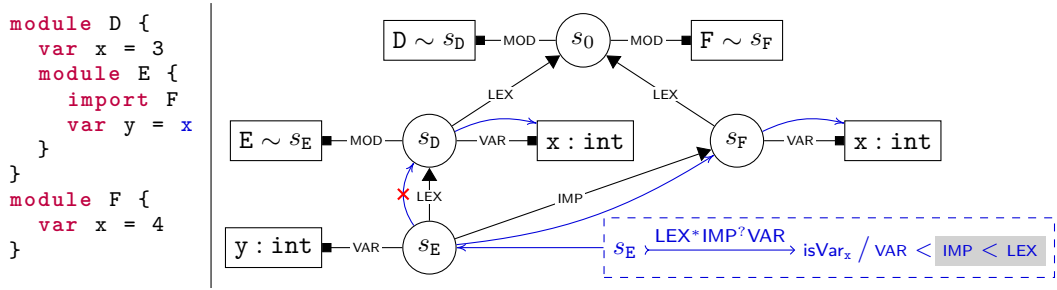
\begin{figure*}[t]
\centering
\begin{minipage}[c]{0.205\textwidth}
\centering
\begin{lstlisting}[language=AML]
module D {
  var x = 3
  module E {
    import F
    var y = @1x@
  }
}
module F {
  var x = 4
}
\end{lstlisting}
\end{minipage}%
\hspace{0.0125em}%
\vrule%
\hspace{0.5em}%
\begin{minipage}[c]{0.775\textwidth}
\centering
\begin{tikzpicture}[scopegraph, node distance = 2em and 3em]
  \node[scope]              (s0) {$s_0$};

  \node[type,  left  = of s0] (mC) {$\texttt{D} \sim \xscoperef{D}$};
  \draw (s0) edge[type, lbl={\lblMOD}] (mC);

  \node[type,  right = of s0] (mE) {$\texttt{F} \sim \xscoperef{F}$};
  \draw (s0) edge[type, lbl={\lblMOD}] (mE);

  \node[scope, below = of mC] (s1) {$\xscoperef{D}$};
  \draw (s1) edge[lbl={\lblLEX}] (s0);

  \node[scope, below = of mE] (s2) {$\xscoperef{F}$};
  \draw (s2) edge[lbl={\lblLEX}] (s0);

  \node[type,  right = of s1] (Cx) {$\texttt{x} : \tyINT$};
  \draw (s1) edge[type, lbl={\lblVAR}] (Cx);

  \node[type,  right = of s2] (Ex) {$\texttt{x} : \tyINT$};
  \draw (s2) edge[type, lbl={\lblVAR}] (Ex);

  \node[type,  left  = of s1] (mD) {$\texttt{E} \sim \xscoperef{E}$};
  \draw (s1) edge[type, lbl={\lblMOD}] (mD);

  \node[scope, below = of s1] (s3) {$\xscoperef{E}$};
  \draw (s3) edge[lbl={\lblLEX}] (s1);
  \draw (s3) edge[lbl={\lblIMP}] (s2);

  \node[type,   left = of s3] (y) {$\texttt{y} : \tyINT$};
  \draw (s3) edge[type, lbl={\lblVAR}] (y);

  \node[ref, right = of s3, draw=query1, xshift = 2.5em] (qx) {\color{query1}
      $\figquery{\xscoperef{E}}{\mathsf{isVar}_{\id{x}}}{\lblLEX^{*}\lblIMP^{?}\lblVAR}{\lblVAR < \hspace{-8pt} \hll{\lblIMP < \lblLEX}}$
  };
  \draw (qx) edge[term, color=query1] (s3);
  \draw (s3) edge[term, color=query1, bend  left=35, strike through] (s1);
  \draw (s1) edge[term, color=query1, bend  left=20] (Cx);
  \draw (s3) edge[term, color=query1, bend right=13] (s2);
  \draw (s2) edge[term, color=query1, bend  left=20] (Ex);
\end{tikzpicture}
\end{minipage}
\caption{Shadowing example. The highlighted label order causes the edge to $\xscoperef{F}$ to have priority.
}
\label{fig:shadowing}
\end{figure*}

In scope graphs, visibility can be encoded using a partial order on labels.
For example, an order $\lblVAR < \lblIMP$ encodes that (local) variable declarations shadow imported declarations.
This is illustrated in~\cref{fig:shadowing}.
The reference \id{x} in module \id{F} can refer to the declaration in module \id{D} as well as the one in module \id{E}.
Because the label order (third argument) indicates that imports shadow lexically enclosing scopes ($\lblIMP < \lblLEX$).
Thus, the variable resolves to the declaration in $\xscoperef{F}$.
Alternatively, if $\lblLEX < \lblIMP$, it would resolve to \id{x} in $\xscoperef{D}$.
Finally, if neither $\lblLEX \not< \lblIMP$ nor $\lblIMP \not< \lblLEX$, both paths would be included in the query result.

In summary, scope graphs model the name binding structure of a program using nodes for scopes and declarations, and edges for relations between those.
Queries can be used to model reference resolution.
A query selects a declaration when
(1) it matches some predicate, and
(2) there exists a path to it of which the labels match a regular expression, and
(3) no other paths that traverse labels with higher priority exist.
The result of a query is a set of paths that lead to these matching declarations.

\subparagraph*{Accessibility}

We can model extensibility using plain scope graphs by including accessibility information in the \emph{declaration}.
In other words, a declaration of a variable in a scope graph contains not only a name and a type, but also its accessibility level.
\emph{After resolution}, we check if the path that the query returns is actually valid according to the accessibility level of the declaration.
For example, if a variable is private, but an $\lblEXT$-edge (for class \emph{extension}) is traversed, an error is emitted.
With this pattern, we can model all accessibility features.


\subparagraph*{Notation}

\Cref{fig:transitive,fig:shadowing} introduce the graphical notation of scope graphs.
In text, variable~$s$ ranges over scopes, and $S$ over sets of scopes.
Moreover, we use the following notation for assertions on scope graphs:
$s_1 \scopeedget[\mathsf{L}] s_2 \in \SG$ means `scope graph $\SG$ has an $\mathsf{L}$-labeled edge from~$s_1$ to $s_2$', and
\smash{$s \typeedget[\mathsf{D}] d \in \SG$} means that $\SG$ has a declaration with data~$d$ under label~$\mathsf{D}$ in scope $s$.
Moreover, we write queries in the following way:
\begin{inferruleraw}
  \centering
  \query{\SG}{s}{\re}{\mathcal{P}}{$\mathcal{O}$}{R}
\end{inferruleraw}
where $\SG$ is the scope graph in which the query is resolved,
$s$ is the scope in which the resolution starts,
$\re$ is the regular expression that paths must adhere to,
and $\mathcal{P}$ is the predicate that declarations must match.
$\mathcal{O}$ is the strict partial order on labels used for shadowing.
It is usually written as $\mathsf{L}_1 < \mathsf{L}_2 < \cdots < \mathsf{L}_n$.
We omit the label order when there is no shadowing.
$R$ is the result set containing tuples of paths and declarations.
When we expect a single result, we use $\setOf{ \langle p, d \rangle }$ to match on the value in the set.
Paths are alternating sequences of scopes and labels, written as $s_1 \pathstept[\mathsf{L}_1] s_2 \cdots s_m$.
Paths do not include the declaration it resolved to, but stop at the scope in which the declaration occurs.
The functions $\mathsf{src}(p)$, $\mathsf{tgt}(p)$ refer to the source and target scope of a path, respectively.
$\mathsf{scopes}(p)$ denotes all scopes in a path.

\section{AML: The Base Language}
\label{sec:setup}

In the next sections, we show how scope graphs support intuitive formalization of accessibility.
We will do so by defining \emph{AML} (Access Modifier Language).
The base syntax (which will be extended later) is given in~\cref{fig:aml-syntax}.
In AML, a program consists of a list of modules.
Each module can define other modules, import other modules, and contain class definitions.
A~class can optionally extend another class, and contains a list of field declarations.
Each field has an access modifier, and is initialized by some expression.
Possible expressions include references, integer constants, class instance creation, field access, and binary operations.

\begin{figure}[!p]
  \setlength{\grammarindent}{5em}
    \begin{minipage}{0.63\textwidth}
      \addtolength{\grammarparsep}{-1pt}
      \begin{grammar}
        <prog> ::= <mod>$^{*}$

        <mod>  ::= "module" <x> "\{" <md>$^{*}$ "\}"

        <md>   ::= <mod> | "import" <x> | <cls>

        <cls>  ::= "class" <x> (":" <acc> <x>)$^{?}$ "\{" <cd>$^{*}$ "\}"

        <cd>   ::= <acc> "var" <x> "=" <e> | <cls>

        <acc>  ::= "public" | \colorbox{gray!33!white}{\vphantom{x}$\ldots$}

        <e>    ::= <n> | <x> | "new" <x> "()" | <e> "." <x> | \ldots
      \end{grammar}
    \end{minipage}%
    \begin{minipage}{0.36\textwidth}
      \addtolength{\grammarparsep}{-3pt}
      \begin{grammar}
        <l>    ::= \lblLEX | \lblIMP | \lblEXT
              \alt \lblMOD | \lblCLS | \lblVAR
              \alt \lblTHISMOD | \lblTHIS

        <d>    ::= "mod" <x> ":" <s>
              \alt "cls" <x> ":" <s>
              \alt "var" <x> ":" <T> "@" <A>
              \alt <s>

        <T>    ::= "int" | "inst" <s>

        <A>    ::= "PUB" | \colorbox{gray!33!white}{\vphantom{x}$\ldots$}
      \end{grammar}%
    \end{minipage}

    \caption{
      Syntax of AML.
      The highlighted positions indicate extensions in later sections.
      The syntax of the complete language can be found in~\AppendixRef{sec:aml-full-spec}{A}.
    }
    \label{fig:aml-syntax}
\end{figure}
\begin{figure}[!p]
    \figuresection[\fbox{$\mathcal{P}(d)$}]{Data Matching Predicates}
      \vspace{-1em}
      \begin{align*}
        \mathsf{isMod}_x(\moddecl{x'}{s}) \, &\Leftarrow \, x \mathbin{=} x' &
        \mathsf{isCls}_x(\clsdecl{x'}{s}) \, &\Leftarrow \, x \mathbin{=} x'\\
        \mathsf{isVar}_x(\defdecl{x'}{T}{A}) \, &\Leftarrow \, x \mathbin{=} x' & %
        \mathsf{isScope}_s(s') \, &\Leftarrow \, s = s'
      \end{align*}
    \figuresection[\fbox{$\premCMem{\SG}{s}{\mathit{cd}}$}]{Class Members}
    \begin{inferrule}
      \inference[\rlabel{d-def}{D-Def}]{
        \premExp{\SG}{s}{e}{T}
        \quad
        \hll{\premAcc{\SG}{s}{\mathit{acc}}{A}}
        \quad
        s \typeedget[\lblVAR] (\defdecl{x}{T}{A}) \in \SG
      }{
        \premCMem{\SG}{s}{\mathit{acc} \mathrel{\lit*{var}} x \mathrel{\lit*{=}} e}
      }%
    \end{inferrule}
    \figuresection[\fbox{$\premExp{\SG}{s}{e}{T}$}]{Type of Expression}
    \begin{inferrule}
      \hspace{-1.75em}
      \inference[\rlabel{t-var}{T-Var}]{
        %
        %
        \query{\SG}{s}{\reLEX}{\mathsf{isVar}_x}{\lblVAR < \lblEXT < \lblLEX}{\setOf{ \langle p, \defdecl{x}{T}{A} \rangle }}
        &
        \hll{\premMAcc{\SG}{s}{p}{A}}
      }{
        \premExp{\SG}{s}{x}{T}
      }
    \end{inferrule}
    \begin{inferrule}
      \inference[\rlabel{t-fld}{T-Fld}]{
        %
        %
        \premExp{\SG}{s}{e}{\tyINST{s_c}}
        \\
        \query{\SG}{s_c}{\reMEM}{\mathsf{isVar}_x}{\lblVAR < \lblEXT}{\setOf{ \langle p, \defdecl{x}{T}{A} \rangle }}
        \\
        \hll{\premMAcc{\SG}{s}{p}{A}}
      }{
        \premExp{\SG}{s}{e.x}{T}
      }
    \end{inferrule}
    \begin{minipage}{\textwidth}
      \vphantom{XXXX}
    \end{minipage}
    \begin{minipage}{0.45\textwidth}
      \figuresection[\fbox{$\premAcc{\SG}{s}{\mathit{acc}}{\mathit{A}}$}]{Access Modifier}
      \vspace{-0.5em}%
      \begin{inferrule}
        %
        %
        \inference[\rlabel{a-pub}{A-Pub}]{
        }{
          \premAcc{\SG}{s}{\lit*{public}}{\PUB}
        }
      \end{inferrule}
    \end{minipage}
    \hfill
    \begin{minipage}{0.45\textwidth}
      \figuresection[\fbox{$\premMAcc{\SG}{s}{p}{A}$}]{Access Policy}
      \begin{inferrule}
        %
        %
        \inference[\rlabel{ap-pub}{AP-Pub}]{
        }{
          \premMAcc{\SG}{s}{p}{\PUB}
        }
      \end{inferrule}
    \end{minipage}
    \vspace{1em}
    \figuresection[\fbox{$\premQMod{\SG}{s}{x}{s_m} \qquad \premQCls{\SG}{s}{x}{s_c}$}]{Module and Class References}
    \begin{inferrule}
      \inference[\rlabel{q-mod}{Q-Mod}]{
        \query{\SG}{s}{
          \reclos{\lblLEX}\lblMOD}{\mathsf{isMod}_x
        }{
          \lblMOD < \lblLEX
        }{
          \setOf{\langle p, \moddecl{x}{s_m} \rangle}
        }
      }{
        \premQMod{\SG}{s}{x}{s_m}
      }
    \end{inferrule}
    \begin{inferrule}
      \inference[\rlabel{q-cls}{Q-Cls}]{
        \query{\SG}{s}{
          \reclos{\lblLEX}\reopt{\lblIMP}\lblCLS
        }{
          \mathsf{isCls}_x
        }{
          \lblCLS < \lblIMP < \lblLEX
        }{
          \setOf{\langle p, \clsdecl{x}{s_c} \rangle}
        }
      }{
        \premQCls{\SG}{s}{x}{s_c}
      }
    \end{inferrule}

    \caption{
      Typing Rules of AML. 
      Accessibility is integrated at the highlighted positions.
      The full type system specification can be found in~\AppendixRef{sec:aml-full-spec}{A}.
    }
    \label{subfig:aml-base-typing}

  \label{fig:aml-base}
\end{figure}

At the right-hand side of~\cref{fig:aml-syntax}, the scope graph parameters are shown.
There are three labels that connect scopes.
\lblLEX denotes lexical scoping, \lblIMP denotes imports, and \lblEXT class extension.
The other three labels are used for declarations.
\lblMOD is used for module declarations, \lblCLS for classes, and \lblVAR for variables/fields.
Next, we assume that each \emph{module scope} has a \lblTHISMOD edge pointing to itself, and similarly, each class has a \lblTHIS scope pointing to itself.
This will be used to resolve enclosing classes or modules.
The sort \synt{d} denotes the data that can be associated with scopes.
Modules and classes are characterized by their name and the scope of their body.
A field has a name, a type \synt{T}, and an accessibility level~\synt{A}.
Scopes that are not declarations implicitly map to themselves.
To query declarations, we use the four predicates shown at the top of~\cref{subfig:aml-base-typing}, which each match a single kind of declaration.
Depending on the type of access control we formalize, different access modifiers will be used.
Therefore, we have left the \synt{acc} and \synt{A} productions partially unspecified.
Each section will instantiate those appropriately.

\subparagraph{Typing Rules}
\Cref{subfig:aml-base-typing} presents some typing rules of AML.
The rules are written in a declarative style, where a scope graph $\SG$ that models the program is assumed.
Constraints over the scope graph are used as premises.
The highlighted premises show where accessibility is integrated into the type system.
We now discuss each of the presented rules.

The~\rref{d-def} rule asserts a declaration is well-typed if the initialization expression $e$ has some type $T$ (first premise),
the access modifier \textit{acc} corresponds to some accessibility policy~\textit{A} (second premise), and
an appropriate declaration exists in the scope graph (third premise).
The accessibility policy is included in the declaration, which enables us to validate accessibility when type checking references.

Next, rule~\rref{t-var} defines how references are type checked in a current scope $s$.
First, it performs a query that looks into the lexical context ($\reclos{\lblLEX}$), parent classes ($\reclos{\lblEXT}$), and eventually resolves to a variable declaration ($\lblVAR$).
It matches only variables with the same name as the reference ($\mathsf{isVar}_x$).
Regarding shadowing, it prefers local variables over variables from a parent class ({$\lblVAR < \lblEXT$}), and
variables from parent classes over variables from enclosing classes ({$\lblEXT < \lblLEX$}).
The query should return a single result, as the name would otherwise be ambiguous.
From this result, the access path $p$, type $T$, and accessibility policy~$A$ are extracted.
The path and the accessibility policy are used in the second (highlighted) premise ($\premMAcc{\SG}{s}{p}{A}$), which asserts that `\emph{accessibility policy} $A$ grants access via path $p$ in scope $s$'.
In future sections, we will define new accessibility policy rules, that may prohibit access of a variable, even if the query premise resolved properly.

Note that, by having accessibility separated from the resolution, we do not capture the interaction between accessibility as shown in~\cref{fig:shadowing-inaccessible-nested}.
We made this choice because the place where accessibility is integrated does not influence the access rules themselves, and this presentation allows more concise derivations, which makes the explanations more accessible.
\AppendixRef{subsec:full-path-order}{A.1} shows how to integrate accessibility in the shadowing policy of a query, and is incorporated in the evaluation (\cref{sec:evaluation}).

For this base language, we only have the \texttt{public} access modifier.
The~\rref{a-pub} rule shows that this keyword corresponds to the \PUB policy.
The meaning of this policy is that access is allowed from any location, with any access path.
This is encoded in the~\rref{ap-pub} rule, which has no premises.

Finally, the last two rules define how references to classes and modules are resolved.
Rule~\rref{q-mod} indicates that module reference $x$ resolves to scope $s_m$ if that scope is included in the closest module declaration with name $x$ in the lexical context.
Similarly, a class reference resolves to the scope of the closest class declaration $s_c$, preferring (non-transitively) imported classes over classes in the lexical context (\rref{q-cls}).

\begin{figure}[!t]
\begin{subfigure}[t]{\textwidth}
\begin{minipage}{0.35\textwidth}
\begin{lstlisting}[language=AML]
class A {
  public var i = 42
}
class B : public A {
  public var j = @1i@
}
\end{lstlisting}
\end{minipage}%
\vrule
\begin{minipage}{0.645\textwidth}
\centering
\begin{tikzpicture}[scopegraph, node distance = 2em and 3em]
  \node[scope] (sA) {$s_\id{A}$};
  \draw (sA) edge[lbl={\lblTHIS}, loop left] (sA);

  \node[scope, below = of sA] (sB) {$s_\id{B}$};
  \draw (sB) edge[lbl={\lblTHIS}, loop left] (sB);
  \draw (sB) edge[lbl={\lblEXT}] (sA);

  \node[type,  right = of sA] (Ax) {$\defdecl{\texttt{i}}{\tyINT}{\PUB}$};
  \draw (sA) edge[type, lbl={\lblVAR}] (Ax);

  \node[type,  right = of sB] (By) {$\defdecl{\texttt{j}}{\tyINT}{\PUB}$};
  \draw (sB) edge[type, lbl={\lblVAR}] (By);

  \node[ref, below = 1em of sB, color=query1] (qy) {\color{query1}
    $\figquery{\xscoperef{B}}{\mathsf{isVar}(\id{i})}{\reLEX}{\lblVAR < \lblEXT < \lblLEX}$
  };
  \draw (qy) edge[term, color=query1] (sB);
  \draw (sB) edge[term, color=query1, bend right=30] (sA);
  \draw (sA) edge[term, color=query1, bend right=20] (Ax.south west);
\end{tikzpicture}
\end{minipage}
\caption{Example program and (partial) scope graph.}
\end{subfigure}

\begin{subfigure}[t]{\textwidth}
  \begin{inferrule}
    \inference{
        \raisebox{-0.75em}{$
          \query{\SG}{\xscoperef{B}}{\ldots}{\mathsf{isVar}_{\color{query1}\texttt{i}}}{\ldots}{
            \setOf{ \langle \xscoperef{B} \pathstept[\lblEXT] \xscoperef{A}, \defdecl{\texttt{i}}{\tyINT}{\PUB} \rangle }
          }
        $}
        &
        \hll{
          \inference{
          }{
            \premMAcc{\SG}{\xscoperef{B}}{\xscoperef{B} \pathstept[\lblEXT] \xscoperef{A}}{\PUB}
          }
        }
    }{
      \premExp{\SG}{\xscoperef{B}}{{\color{query1}\texttt{i}}}{\tyINT}
    }
  \end{inferrule}
  \caption{Part of typing derivation that shows how access is granted by the \PUB accessibility policy.}
  \label{subfig:pub-typing-derivation}
\end{subfigure}

\caption{Example AML program demonstrating the scope graph structure and name resolution with accessibility checking.}
\label{fig:ex-pub}

\end{figure}

\begin{figure}[!t]
  \vspace{1em}
  \figuresection[\fbox{$\premEncM{\SG}{s}{S} \qquad \premEncMI{\SG}{s}{s}$}]{Enclosing Modules}
  \begin{minipage}{\textwidth}
    \begin{inferrule}
      \inference[\rlabel{enc-m}{Enc-M}]{
        \query{\SG}{s}{\reclos{\lblLEX}\lblTHISMOD}{\top}{}{R}
        &
        S_M = \setOf{ s_m \alt \langle p_m, s_m \rangle \in R }
      }{
        \premEncM{\SG}{s}{S_M}
      }
    \end{inferrule}
    \begin{inferrule}
      \inference[\rlabel{enc-mi}{Enc-MI}]{
        \query{\SG}{s}{\reclos{\lblLEX}\lblTHISMOD}{\top}{\lblTHISMOD < \lblLEX}{\setOf{\langle p, s_m \rangle}}
      }{
        \premEncMI{\SG}{s}{s_m}
      }
    \end{inferrule}
  \end{minipage}

  \figuresection[\fbox{$\premEncC{\SG}{s}{S} \qquad \premEncCI{\SG}{s}{s}$}]{Enclosing Classes}
  \begin{minipage}{\textwidth}    
    \begin{inferrule}
      \inference[\rlabel{enc-c}{Enc-C}]{
        \query{\SG}{s}{\reclos{\lblLEX}\lblTHIS}{\top}{}{R}
        \quad
        S_C = \setOf{ s_c \alt \langle p_c, s_c \rangle \in R }
      }{
        \premEncC{\SG}{s}{S_C}
      }
    \end{inferrule}
    \begin{inferrule}
      \inference[\rlabel{enc-ci}{Enc-CI}]{
        \query{\SG}{s}{\reclos{\lblLEX}\lblTHIS}{\top}{\lblTHIS < \lblLEX}{\setOf{\langle p, s_c \rangle}}
      }{
        \premEncCI{\SG}{s}{s_c}
      }
    \end{inferrule}
  \end{minipage}

  \caption{Auxiliary relations for AML scope graphs.}
  \label{fig:aml-auxiliary}

\end{figure}

\subparagraph*{Example}
The example in~\cref{fig:ex-pub} shows two classes \id{A} and \id{B}.
Both classes have a $\lblTHIS$-edge pointing to itself.
Class \id{B} extends class \id{A}, which is represented by the $\xscoperef{B} \scopeedget[\lblEXT] \xscoperef{A}$ edge in the scope graph.
Class \id{A} has a public field \id{i} with type $\tyINT$.
The type as well as the corresponding \PUB access policy are included in the scope graph declaration.
Similarly, class \id{B} has a field \id{j}.
The initialization expression of \id{j} references \id{i}, which is represented with the query shown in the dashed box.

\Cref{subfig:pub-typing-derivation} shows the part of the typing derivation that checks the highlighted reference.
Reference~\id{i} is type checked in scope \xscoperef{B}, and has type $\tyINT$.
The first premise repeats the query shown in the scope graph, with the parameters and result made explicit.
In particular, the resolution path is $\xscoperef{B} \pathstept[\lblEXT] \xscoperef{A}$.
The validity of this path is checked by the second premise, which is satisfied by the~\rref{ap-pub} rule.

\subparagraph*{Auxiliary Relations}
Finally, \cref{fig:aml-auxiliary} presents some auxiliary relations that we will use later.
First, the $\premEncM{\SG}{s}{S_M}$ relation asserts that $S_M$ is the set of scopes of the enclosing modules of~$s$.
It is defined as a query that looks for a $\lblTHISMOD$ edge in the lexically enclosing scopes.
There is no shadowing, so $R$ can contain multiple results in the case of multiple nested modules.
The result $R$ is translated to the set of module scopes by discarding the access paths.

This relation is inhabited for any enclosing module scope.
The second relation $\premEncMI{\SG}{s}{s_m}$ is only inhabited for the \emph{innermost} enclosing module $s_m$.
The query in its definition finds the closest $\lblTHISMOD$-edge, which is enforced by the shadowing policy $\lblTHISMOD < \lblLEX$.
Thus, the query returns only one result, from which the module scope $s_m$ is extracted.
Analogously, $\premEncC{\SG}{s}{S_C}$ relates $s$ to all enclosing \emph{class} scopes $S_C$, and
$\premEncCI{\SG}{s}{s_c}$ is satisfied if~$s_c$ is the \emph{innermost} enclosing class of $s$.

\section{Defining Module Visibility}
\label{sec:modules}

Some languages have access modifiers that regulate the visibility of a declaration in other modules.
For example, in Rust, it is possible to write \texttt{pub(in ...)} to indicate in which module a declaration is visible.
Similarly, some languages support giving particular classes access to an item.
It is the primary accessibility mechanism for Eiffel, and \CPP's \texttt{friend} modifier enables this as well.
Less flexible approaches, such as Java's package visibility and~\CSharp's \texttt{internal} keyword can be seen as special instances of this mechanism.

To demonstrate how these access policies can be encoded using scope graphs, we extend our base language as follows.
\Cref{subfig:internal-syntax} introduces an additional modifier keyword \lit*{internal}, which can contain references to modules.
The declaration is visible in these modules only.
The corresponding accessibility policy \MOD has a set of scopes, each corresponding to a name given in the keyword argument.

\begingroup
\def\endinfskip{0em}

\begin{figure}[!b]
  \begin{subfigure}[t]{\textwidth}
    \centering
    \begin{minipage}{0.5\textwidth}
      \begin{grammar}
        <acc> ::= \ldots | "internal" "(" <x>$^{*}$ ")"
      \end{grammar}
    \end{minipage}%
    \begin{minipage}{0.2\textwidth}
      \begin{grammar}
        <A> ::= "MOD" $S$
      \end{grammar}
    \end{minipage}
    \caption{Syntax of \texttt{internal} keyword.}
    \label{subfig:internal-syntax}
  \end{subfigure}

  \begin{subfigure}[t]{\textwidth}
    \begin{inferrule}
      \inference[\rlabel{a-int}{A-Int}]{
        S = \left\{
          s'  \mathbin{\Big|}
          x_i \in \setVar{x}_{0 \ldots n}, 
          s \vdash_{\SG} x_i \resolvemod s'
        \right\}
      }{
        \premAcc{\SG}{s}{\lit*{internal(} \setVar{x}_{0 \ldots n} \lit*{)}}{\MODof{S}}
      }
      \qquad
      \inference[\rlabel{ap-int}{AP-Int}]{
        \premEncM{\SG}{s}{S_M}
        &
        s_m \in S_M
        &
        s_m \in S
      }{
        \premMAcc{\SG}{s}{p}{\MODof{S}}
      }
    \end{inferrule}
    \vspace{\endinfskip}

    \caption{Semantics of \texttt{internal} keyword.}
    \label{subfig:internal-var-1}
  \end{subfigure}

  \begin{subfigure}[b]{0.645\textwidth}
    \begin{minipage}{0.64\textwidth}
    \begin{inferrule}
      \inference[\rlabel{a-int1}{A-Int'}]{
        \hll{\premEncM{\SG}{s}{S_M}}
        \\
        S = \left\{
          s'  \mathbin{\Big|}
          x_i \in \setVar{x}_{0 \ldots n}, 
          s \vdash_{\SG} x_i \resolvemod s'
          ,
          \hll{s' \in S_M}
        \right\}
      }{
        \premAcc{\SG}{s}{\lit*{internal(} \setVar{x}_{0 \ldots n} \lit*{)}}{\MODof{S}}
      }
    \end{inferrule}
    \end{minipage}

    \caption{Variant 1: Ancestor module only.}
    \label{subfig:internal-var-2}
  \end{subfigure}%
  %
  \begin{subfigure}[b]{0.34\textwidth}
    \begin{minipage}{0.34\textwidth}
    \begin{inferrule}
      \setpremisesend{0.5ex}%
      \addtolength{\jot}{0.45em}%
      \inference[\rlabel{ap-int1}{AP-Int'}]{
        \hll{\premEncMI{\SG}{s}{s_m}}
        \quad
        s_m \in S
      }{
        \premMAcc{\SG}{s}{p}{\MODof{S}}
      }
    \end{inferrule}
    \end{minipage}

    \caption{Variant 2: Innermost module.}
    \label{subfig:internal-var-3}
  \end{subfigure}%

  \begin{subfigure}[b]{\textwidth}
    \begin{inferrule}
      \inference[\rlabel{ap-int2}{AP-Int'{'}}]{
        \cdots
        &
        \hll{
          \left[
            \
            \premEncM{\SG}{s}{S_{M_i}}
            \quad
            s_{m}' \in S_{M_i}
            \quad
            s_{m}' \in S
            \
          \right]_{s' \in \left(\mathsf{scopes}(p) \setminus \setOf{\mathsf{tgt}(p)} \right)}
        }
      }{
        \premMAcc{\SG}{s}{p}{\MODof{S}}
      }
    \end{inferrule}
    \vspace{\endinfskip}

    \caption{Variant 3: Definition exposed to all classes in path.}
    \label{subfig:internal-var-4}
  \end{subfigure}%

  \caption{Extending AML (\cref{fig:aml-base}) with module-level visibility.}
  \label{fig:internal}
\end{figure}

\endgroup

Next, we explain how this keyword is interpreted.
An \lit*{internal} declaration is accessible if the reference occurs in a module that the arguments to the \lit*{internal} modifier give access to.
This is formalized in the rules given in~\cref{subfig:internal-var-1}.
Rule~\rref{a-int} translates an \lit*{internal} access modifier to the \MOD policy.
Each module name argument to the modifier ($x_i$) is resolved relative to the current scope $s$.
This yields a collection of module scopes $s_i$, which are included in the resulting policy.
The~\rref{ap-int} rule encodes that accessing an \lit*{internal} variable is valid if~$s_m$, the scope of some enclosing module of $s$ (the scope of the reference), is in the list of scopes to which access is granted.

\begin{figure}[!t]
\begin{subfigure}[t]{\textwidth}
\begin{minipage}{0.375\textwidth}
\begin{lstlisting}[language=AML]
class A {
  internal(M) var x = 42
}
module M {
  module N {
    class B {
      public var y =
        new C().x
    }
  }
  class C : public A { }
}
\end{lstlisting}
\vspace{-0.5em}
\end{minipage}%
\vrule
\begin{minipage}{0.62\textwidth}
\centering
\vspace{-1em}
\begin{tikzpicture}[scopegraph, node distance = 2em and 3em]
  \node[scope] (s0) {$s_0$};
  \draw (s0) edge[lbl={\lblTHISMOD}, loop left] (s0);

  \node[scope, above right = of s0] (sA) {$s_\id{A}$};
  \draw (sA) edge[lbl={\lblLEX}] (s0);

  \node[type,  right = of sA] (Ax) {$\defdecl{\texttt{x}}{\tyINT}{\MODof{\setOf{\xscoperef{M}}}}$};
  \draw (sA) edge[type, lbl={\lblVAR}] (Ax);

  \node[scope, below right = of s0] (sM) {$s_\id{M}$};
  \draw (sM) edge[lbl={\lblTHISMOD}, loop above] (sM);
  \draw (sM) edge[lbl={\lblLEX}] (s0);

  \node[scope, above right = 2em and 1.5em of sM] (sC) {$s_\id{C}$};
  \draw (sC) edge[lbl={\lblLEX}] (sM);
  \draw (sC) edge[lbl={\lblEXT}] (sA);

  \node[scope, right = of sM] (sN) {$s_\id{N}$};
  \draw (sN) edge[lbl={\lblTHISMOD}, loop above] (sN);
  \draw (sN) edge[lbl={\lblLEX}] (sM);

  \node[scope, right = of sN] (sB) {$s_\id{B}$};
  \draw (sB) edge[lbl={\lblLEX}] (sN);

  \begin{scope}
    \node[type, above = of s0] (dA) {$\clsdecl{\texttt{A}}{\xscoperef{A}}$};
    \draw (s0) edge[type, lbl={\lblCLS}] (dA);
    \draw (dA) edge[dotted, term] (sA);

    \node[type, below = of s0] (dM) {$\moddecl{\texttt{M}}{\xscoperef{M}}$};
    \draw (s0) edge[type, lbl={\lblMOD}] (dM);
    \draw (dM) edge[dotted, term] (sM);

  \end{scope}
\end{tikzpicture}
\end{minipage}
\caption{
  Example program and partial scope graph demonstrating the \lit*{internal} access modifier.
}
\label{subfig:ex-internal}
\end{subfigure}

\begin{subfigure}[t]{\textwidth}
  \begin{inferrule}
    \inference[\rref{a-int}]{
      \xscoperef{A} \vdash_{\SG} \texttt{M} \resolvemod \xscoperef{M}
    }{
      \premAcc{\SG}{\xscoperef{A}}{\lit*{internal(M)}}{\MODof{\setOf{\xscoperef{M}}}}
    }
  \end{inferrule}
  \caption{Part of typing derivation that shows how accessibility policy is derived.}
  \label{subfig:internal-policy-derivation}
\end{subfigure}

\begin{subfigure}[t]{\textwidth}
  \begin{inferrule}
    \inference[\rref{ap-int}]{
      \inference{
        \cdots
      }{
        \premEncM{\SG}{\xscoperef{B}}{
          \setOf{
            \xrootscope, \xscoperef{M}, \xscoperef{N}
          }
        }
      }
      &
      \xscoperef{M} \in \setOf{\xrootscope, \xscoperef{M}, \xscoperef{N}}
      &
      \xscoperef{M} \in {\setOf{\xscoperef{M}}}
    }{
      \premMAcc{\SG}{\xscoperef{B}}{\left(\xscoperef{C} \pathstept[\lblEXT] \xscoperef{A} \right)}{\MODof{\setOf{\xscoperef{M}}}}
    }
  \end{inferrule}
  \caption{Part of typing derivation that shows how access is granted by the \MOD accessibility policy.}
  \label{subfig:internal-typing-derivation}
\end{subfigure}

\caption{Example program demonstrating the meaning of the \lit*{internal} access modifier.}
\label{fig:ex-internal}

\end{figure}

\subparagraph*{Example.}
\Cref{fig:ex-internal} gives an example of an \emph{internal} variable.
Class \id{A} has a field \id{x} that can be accessed from module \id{M}.
In the scope graph, this is indicated with the access policy $\MODof{\setOf{\xscoperef{M}}}$ on the corresponding declaration in $\xscoperef{A}$.
The derivation of this policy is shown in~\cref{subfig:internal-policy-derivation}.
Module \id{M} contains a nested module \id{N}, which contains a class \id{B}.
In class \id{B}, the field \id{x} is accessed on an instance of \id{A}.
The (partial) typing derivation in~\cref{subfig:internal-typing-derivation} shows this access is allowed by the~\rref{ap-int} rule.
The first premise asserts that $\xrootscope$, $\xscoperef{M}$ and $\xscoperef{N}$ are the enclosing modules of $\xscoperef{B}$.
This can be seen in the scope graph, as those scopes are reachable via paths with regular expression $\reclos{\lblLEX}\lblTHISMOD$ (\cref{fig:aml-auxiliary}).
As~$\xscoperef{M}$ occurs both in the enclosing modules and in the access policy, access is allowed.

\subparagraph*{Variant 1.}
Several variations on this scheme are conceivable.
For example, languages can restrict the modules to which an \lit*{internal} modifier may expose a declaration.
For example, Rust has the \syntax{\lit*{pub}\lit*{(}\lit*{in} \synt{path}\lit*{)}} visibility modifier, similar to how we defined \lit*{internal}.
However, at the \synt{path} position, only ``an ancestor module of the item whose visibility is being declared'' is allowed~\cite[\S12.6]{RustVisPriv}.
This is encoded in~\cref{subfig:internal-var-2}.
Compared to~\rref{a-int}, this rule adds premises (highlighted) that guarantee that the arguments of the \lit*{internal} modifier ($x_i$) resolve to an enclosing module ($s_i \in S_M$).

Note how these premises would make the example fail to type-check.
Only $\xrootscope$ is an enclosing module of $\xscoperef{A}$.
In particular, the derivation in~\cref{subfig:internal-policy-derivation} would have an additional premise $\xscoperef{A} \in \setOf{s_0}$, which is clearly unsatisfiable.

\subparagraph*{Variant 2.}
Next, consider the example in~\cref{subfig:ex-internal} again.
In the system above, \texttt{x} is accessible in \texttt{B}, because \texttt{x} is exposed to one of its enclosing modules (\id{M}).
However, $\xscoperef{M}$ is not its \emph{innermost} enclosing module.
Such a more lenient accessibility scheme might be desirable (\eg, Rust has this behavior), but languages such as Java do not allow this.
To model these languages, we instead use the premise that asserts $s_m$ is the \emph{innermost} enclosing module scope.
The rule for this variant is given in~\cref{subfig:internal-var-3}.

With this addition, the example would fail to type-check as well.
The access validation (\cref{subfig:internal-typing-derivation}) would now have to satisfy $\premEncMI{\SG}{\xscoperef{B}}{\xscoperef{M}}$, which is impossible, as $\xscoperef{N}$ is the innermost enclosing module.

\subparagraph*{Variant 3.}
Finally, consider example~\cref{fig:inheritance-packages} from the introduction again.
In this example, the reference to~\id{x} in class~\id{C} was not valid, as~\id{B} (by virtue of residing in a different package), did not inherit~\id{x}.
The (partial) rule in~\cref{subfig:internal-var-4} covers this case.
For each scope in the path (apart from the target), it adds premises that assert that the definition is exposed to that scope (similar to $s$ in~\cref{subfig:internal-var-1}).%
\footnote{Alternatively, the premises of~\cref{subfig:internal-var-3} can be used when direct exposure is required.}
The target is excluded because it is not inheriting the accessed field, but rather defining it.
(Recall that paths move from reference to declaration, so the target is the scope of the defining class.)
For that reason, there is no need to assert it inherits the field.

When adding this rule fragment to the derivation in~\cref{subfig:internal-typing-derivation}, there will be additional premises that validate that class~\id{C} inherits~\id{x}.
This is the case, as~\id{C} resides in module~\id{M}.

\section{Defining Subclass Visibility}
\label{sec:subclass}

Next, we consider how to define access modifiers that regulate access from other \emph{classes}:
the \lit*{private} modifier (\cref{subsec:private}), and the \lit*{protected} keyword (\cref{subsec:protected}).

\subsection{Private Access}
\label{subsec:private}

The \texttt{private} access modifier is slightly challenging to define, as languages implement it differently.
For example, \CSharp allows accessing private variables on instances of \emph{subclasses}, whereas Java does not.
Consider the example programs in~\cref{fig:diff-priv-sub}.
In the Java case, the access \texttt{b.x} is invalid, because it only allows access on instances of \texttt{A}.

On the other hand, Java exposes \texttt{private} members to the \emph{outermost} enclosing class\footnote{
  ``{[When]} the member or constructor is declared \texttt{private}, (...) access is permitted if and only if it occurs within the body of the top level class [sic!] that encloses the declaration of the member or constructor."~\cite[\S6.6.1]{JLS8}
}, while \CSharp only exposes members to the \emph{defining} (\ie, innermost enclosing) class (including nested classes), as shown in~\cref{fig:diff-priv-enc}.

\begin{figure}[t]
\begin{minipage}{0.5\textwidth}
\begin{lstlisting}[language=Java]
class A {
    private int x = 42;
    public int accessX(B b) {
        return b.x; // ERROR!
    }
}
class B extends A { }
\end{lstlisting}
\vspace{-1em}
\end{minipage}%
\begin{minipage}{0.5\textwidth}
\begin{lstlisting}[language={[Sharp]C}]
class A {
    private int x = 42;
    public int AccessX(B b) {
        return b.x; // fine
    }
}
class B : A { }
\end{lstlisting}
\vspace{-1em}
\end{minipage}%
\caption{Difference in \texttt{private} member access of subclass instances between Java and \CSharp.}
\label{fig:diff-priv-sub}
\end{figure}

\begin{figure}[t]
\begin{minipage}{0.5\textwidth}
\begin{lstlisting}[language=Java]
class A {
    class B {
        private int x = 42;
    }
    int accessX(B b) {
        return b.x; // fine
    }
}
\end{lstlisting}%
\vspace{-1em}
\end{minipage}%
\begin{minipage}{0.5\textwidth}
\begin{lstlisting}[language={[Sharp]C}]
class A {
    class B {
        private int x = 42;
    }
    int AccessX(B b) {
        return b.x; // ERROR!
    }
}
\end{lstlisting}%
\vspace{-1em}
\end{minipage}%
\caption{Difference in \texttt{private} member access from enclosing class between Java and \CSharp.}
\label{fig:diff-priv-enc}
\end{figure}%

\begin{figure}[!t]
  \begin{subfigure}[t]{0.55\textwidth}
    \begin{minipage}{\textwidth}
      \centering
      \begin{minipage}{0.525\textwidth}
        \begin{grammar}
          <acc> ::= $\ldots$ | "private"
        \end{grammar}
      \end{minipage}%
      \begin{minipage}{0.35\textwidth}
        \begin{grammar}
          <A> ::= $\ldots$ "PRV"
        \end{grammar}
      \end{minipage}
    \end{minipage}%

    \vspace{0.45em}

    \caption{Syntax of \texttt{private} keyword.}
    \label{subfig:private-syntax}
  \end{subfigure}%
  \begin{subfigure}[t]{0.4\textwidth}
    \vspace{-2.5em}
    \begin{inferrule}[0.4\textwidth]
      \inference[\rlabel{a-priv}{A-Priv}]{
      }{
        \premAcc{\SG}{s}{\lit*{private}}{\PRV}
      }
    \end{inferrule}

    \caption{\texttt{private} to \PRV access policy.}
    \label{subfig:private-policy}
  \end{subfigure}
  \begin{subfigure}[b]{0.55\textwidth}
    \begin{inferrule}[0.55\textwidth]
      \inference[\rlabel{ap-priv}{AP-Priv}]{
        \premEncC{\SG}{s}{S_C}
        &
        \mathsf{tgt}(p) \in S_C
      }{
        \premMAcc{\SG}{s}{p}{A}
      }
    \end{inferrule}

    \caption{Semantics of \texttt{private} keyword.}
    \label{subfig:private-semantics}
  \end{subfigure}%
  \begin{subfigure}[b]{0.45\textwidth}
    \begin{inferrule}[0.45\textwidth]
      \inference[\rlabel{ap-priv1}{AP-Priv'}]{
        \ldots
        &
        \hll{\premMatchRE{p}{\reclos{\lblLEX}}}
      }{
        \premMAcc{\SG}{s}{p}{A}
      }
    \end{inferrule}

    \caption{Prevent access on instances of subclasses.}
    \label{subfig:current-instance}
  \end{subfigure}

  \begin{subfigure}[t]{\textwidth}
    \centering
    \begin{inferrule}
      \inference[\rlabel{ap-priv2}{AP-Priv''}]{
        \premEncC{\SG}{s}{S_{C_{\mathit{ref}}}}
        &
        \hll{\premEncC{\SG}{\mathsf{tgt}(p)}{S_{C_{\mathit{decl}}}}}
        &
        s_c \in S_{C_{\mathit{ref}}}
        &
        s_c \in S_{C_{\mathit{decl}}}
      }{
        \premMAcc{\SG}{s}{p}{A}
      }
    \end{inferrule}

    \caption{Allow access from enclosing classes.}
    \label{subfig:private-enclosing-class}
  \end{subfigure}

  \caption{Extending AML (\cref{fig:aml-base}) with \texttt{private} visibility.}
  \label{fig:private}
\end{figure}

We start with modeling the \CSharp semantics 
in~\cref{subfig:private-syntax,subfig:private-policy,subfig:private-semantics}.
Rule \rref{ap-priv} states that the class in which the field is declared (which is the target of the
path $\mathsf{tgt}(p)$) should be an enclosing class of the scope in which the access occurs.
This permits access from nested classes of $\mathsf{tgt}(p)$, but does not expose it to enclosing classes.
On the other hand, access on instances of subclasses is allowed, as there are no constraints on the structure of the path.

Note that we did not specify that this rule matches on the~\PRV policy specifically,
but rather applies to \emph{any} access policy $A$.
This is a deliberate choice; it adds the possibility of using this rule as a fallback in case no other rule works.
This ensures other accessibility policies will never be more strict than~\PRV, which corresponds to general intuition.
By matching on an arbitrary $A$ in \rref{ap-priv}, we simplify the definition of the other policies, as they otherwise would need to define special rules for \texttt{private}-like access.

\subparagraph*{Current Instance.}
Now, we adapt these rules to match the Java semantics.
First,~\cref{subfig:current-instance} shows how to prevent access to the private field on instances of subclasses (\cref{fig:diff-priv-sub}).
It uses a new type of constraint, $\premMatchRE{p}{\re}$, which holds when the sequence of labels in path $p$ is in the language described by the regular expression $\re$.
In this case, we assert that the access path $p$ must adhere to the regular expression $\reclos{\lblLEX}$.
This prevents access from instances of subclasses of the defining class, as that requires traversing an $\lblEXT$ edge.
For example, the access path in~\cref{fig:diff-priv-sub} would be $\premMatchRE{\xscoperef{B} \scopeedget[\lblEXT] \xscoperef{A}}{\reclos{\lblLEX}}$, which is not satisfiable.

\subparagraph{Outermost Class.}
Finally,~\cref{subfig:private-enclosing-class} shows how to expose \texttt{private} fields to the outermost enclosing class.
In this rule, the set $S_{C_\mathit{ref}}$ contains the scope of the enclosing classes of the reference location,
and $S_{C_\mathit{decl}}$ contains the scope of the enclosing classes of the class in which the declaration occurs.
These sets should share a scope $s_c$, which represents the shared enclosing class of the reference and the declaration.

Note how this rule enables type-checking the program in~\cref{fig:diff-priv-enc}.
Using \rref{ap-priv} does not work, as $\premEncC{\SG}{\xscoperef{A}}{\setOf{\xscoperef{A}}}$,
which does not include $\mathsf{tgt}(p) = \xscoperef{B}$.
However, we can check it with \rref{ap-priv2}, as $\premEncC{\SG}{\mathsf{tgt}(p)}{\setOf{\xscoperef{B}, \xscoperef{A}}}$, which includes the shared enclosing class~$\xscoperef{A}$.

\begin{figure}[!t]
  \begin{subfigure}[b]{\textwidth}
    \begin{minipage}{\textwidth}
      \centering
      \begin{minipage}{0.4\textwidth}
        \begin{grammar}
          <acc> ::= $\ldots$ | "protected"
        \end{grammar}
      \end{minipage}%
      \begin{minipage}{0.3\textwidth}
        \begin{grammar}
          <A> ::= $\ldots$ | "PRT"
        \end{grammar}
      \end{minipage}
    \end{minipage}

    \caption{Syntax of \texttt{protected} keyword.}
    \label{subfig:protected-syntax}
  \end{subfigure}

  \begin{subfigure}[b]{\textwidth}
    \begin{inferrule}
      \inference[\rlabel{a-prot}{A-Prot}]{
      }{
        \premAcc{\SG}{s}{\lit*{protected}}{\PRT}
      }
      \qquad
      \inference[\rlabel{ap-prot}{AP-Prot}]{
        \premEncC{\SG}{s}{S_C}
        &
        s_c \in S_C
        &
        s_c \in \mathsf{scopes}(p)
      }{
        \premMAcc{\SG}{s}{p}{\PRT}
      }
    \end{inferrule}

    \caption{Semantics of \texttt{protected} keyword.}
    \label{subfig:protected-semantics}
  \end{subfigure}%

  \caption{Extending AML (\cref{fig:aml-base}) with \texttt{protected} visibility.}
  \label{fig:protected}
\end{figure}

\subsection{Protected Access}
\label{subsec:protected}

The \texttt{protected} access modifier (\cref{subfig:protected-syntax}) grants access to subclasses of the defining class, including classes nested in subclasses.
For field access expressions (\synt{e.x}), \textit{e} must be an instance of a class that encloses \emph{the reference}~\cite[\S6.6.2.1]{JLS8}.
This semantics (\cref{subfig:protected-semantics}) can be modeled by asserting that there should be some class $s_c$ that is both
(a) an enclosing scope of the reference location ($\premEncC{\SG}{s}{S_C}$), and
(b) occurs in the in the access path ($s_c \in \mathsf{scopes}(p)$).
The last condition implies that the enclosing class $s_c$ is a subclass of the \emph{defining class},
which is the intuitive understanding of the \texttt{protected} keyword.

\begin{figure}[!t]
\begin{subfigure}[t]{\textwidth}
\begin{minipage}{0.35\textwidth}
\begin{lstlisting}[language=AML]
class A {
  protected var x = 42
}
class B : public A {
  class I {
    public int f(b: B) {
      return b.x;
    }
  }
}
\end{lstlisting}
\end{minipage}%
\vrule
\begin{minipage}{0.645\textwidth}
\centering
\begin{tikzpicture}[scopegraph, node distance = 2em and 3em]
  \node[scope] (sA) {$s_\id{A}$};
  \draw (sA) edge[lbl={\lblTHIS}, loop left] (sA);

  \node[type,  right = of sA] (Ax) {$\defdecl{\texttt{x}}{\tyINT}{\PRT}$};
  \draw (sA) edge[type, lbl={\lblVAR}] (Ax);

  \node[scope, below = of sA] (sB) {$s_\id{B}$};
  \draw (sB) edge[lbl={\lblEXT}] (sA);
  \draw (sB) edge[lbl={\lblTHIS}, loop left] (sB);

  \node[scope, right = of sB] (sI) {$s_\id{I}$};
  \draw (sI) edge[lbl={\lblLEX}] (sB);
  \draw (sI) edge[lbl={\lblTHIS}, loop above] (sI);

  \node[scope, right = of sI] (sf) {$s_\id{f}$};
  \draw (sf) edge[lbl={\lblLEX}] (sI);

  \node[type,  below = of sf] (fb) {$\defdecl{\texttt{b}}{\tyINST{\xscoperef{B}}}{\PUB}$};
  \draw (sf) edge[type, lbl={\lblVAR}] (fb);
\end{tikzpicture}
\end{minipage}
\caption{
  Example program and partial scope graph demonstrating the \lit*{protected} access modifier.
}
\label{subfig:prot-scope-graph}
\end{subfigure}

\begin{subfigure}[t]{\textwidth}
  \begin{inferrule}
    \inference{
      \inference{
        \cdots
      }{
        \premEncC{\SG}{\xscoperef{f}}{\setOf{\xscoperef{I}, \xscoperef{B}}}
      }
      &
      \xscoperef{B} \in \setOf{\xscoperef{I}, \xscoperef{B}}
      &
      \xscoperef{B} \in \mathsf{scopes}\left(\xscoperef{B} \pathstept[\lblEXT] \xscoperef{A}\right)
    }{
      \premMAcc{\SG}{\xscoperef{f}}{\left(\xscoperef{B} \pathstept[\lblEXT] \xscoperef{A}\right)}{\PRT}
    }
  \end{inferrule}
  \caption{
    Part of typing derivation that shows how access is granted by the \PRT accessibility policy.
  }
  \label{subfig:prot-typing-derivation}
\end{subfigure}

\caption{Example program demonstrating the meaning of the \lit*{protected} access modifier.}
\label{fig:ex-protected}

\end{figure}

\Cref{fig:ex-protected} demonstrates how this rule works.
In this program, there is a class~\id{A} which has a subclass~\id{B}.
Class~\id{B} has a nested class~\id{I}, which has a method~\id{f} with a parameter~\id{b} of type~\id{B}.
The body of~\id{f} accesses field~\id{x} on the instance of \id{B}.
On the right-hand side of the picture, a part of the corresponding scope graph is shown.
The scopes for classes~$\id{A}$ and~\id{B} are connected by an $\lblEXT$-edge again.
The fact that class~\id{I} is nested in class~\id{B} is represented by the $\xscoperef{I} \scopeedget[\lblLEX] \xscoperef{B}$ edge, similar to other lexically nested constructs.
Likewise, scope $\xscoperef{f}$, which represents the body of the method~\id{f}, has a $\lblLEX$-edge to $\xscoperef{I}$.

\Cref{subfig:prot-typing-derivation} shows how the access to~\texttt{b.x} is validated.
The first premise states that~$\xscoperef{I}$ and~$\xscoperef{B}$ are the enclosing classes of~$\xscoperef{f}$.
The other premises assert that~$\xscoperef{B}$ is in the enclosing classes as well as in the access path.
Together, this allows access to the protected member.
Note how access to an instance of~\id{A} in~$\xscoperef{f}$ would not be allowed.
In that case, the access path would have been just~$\xscoperef{A}$, which is not an enclosing class of $\xscoperef{f}$.

\section{Combining Subclass and Module Visibility}
\label{sec:combining-subclass-module}

Access modifiers regulating both the module and subclass dimensions occur in real-world languages as well.
For example (as noticed earlier), Java's \texttt{protected} keyword also exposes a definition in the same package, similar to \CSharp's \texttt{protected internal}.
In addition, \CSharp has a \texttt{private protected} modifier, which allows access to subclasses in the same assembly only.
In fact, those two keywords denote the two main ways in which access modifiers can interact.
First, \texttt{protected internal} denotes \emph{disjunctive} interaction, where a declaration is accessible from the subclasses \emph{or} the same module.
Second, \texttt{private protected} denotes \emph{conjunctive} interaction, where a declaration is accessible from the subclasses \emph{in} the same module only.
These interactions are straightforward to define, with one intricate case discussed below.

\begin{figure}[!t]
  \begin{subfigure}[b]{\textwidth}
    \begin{minipage}{\textwidth}
      \centering
      \begin{minipage}{0.85\textwidth}
        \setlength{\grammarindent}{5em}
        \begin{grammar}
          <acc> ::= \ldots | "protected" "internal" "(" <x>$^{*}$ ")" | "private" "protected" "(" <x>$^{*}$ ")"

          <A>   ::= \ldots  | "SMD" $S$ | "SMC" $S$
        \end{grammar}%
      \end{minipage}
    \end{minipage}

    \caption{Syntax of policy interaction keywords.}
    \label{subfig:interaction-syntax}
  \end{subfigure}

  \begin{subfigure}[b]{\textwidth}
    \vspace{0.25em}
    \begin{inferrule}
      \inference[\rlabel{a-pprot}{A-PProt}]{
        %
        S = \left\{
          s'  \mathbin{\Big|}
          x_i \in \setVar{x}_{0 \ldots n}, 
          s \vdash_{\SG} x_i \resolvemod s'
        \right\}
      }{
        \premAcc{\SG}{s}{\lit*{private protected(} \setVar{x}_{0 \ldots n} \lit*{)}}{\SMCof{S}}
      }
    \end{inferrule}
    \vspace{0.5em}
    \begin{inferrule}
      \inference[\rlabel{a-pint}{A-PInt}]{
        %
        S = \left\{
          s'  \mathbin{\Big|}
          x_i \in \setVar{x}_{0 \ldots n}, 
          s \vdash_{\SG} x_i \resolvemod s'
        \right\}
      }{
        \premAcc{\SG}{s}{\lit*{protected internal(} \setVar{x}_{0 \ldots n} \lit*{)}}{\SMDof{S}}
      }
    \end{inferrule}

    \caption{Translation of composite keywords to their policies.}
    \label{subfig:interaction-translation}
  \end{subfigure}

  \begin{subfigure}[b]{\textwidth}
    \begin{inferrule}
      \inference[\rlabel{ap-smc}{AP-SMC}]{
        \premMAcc{\SG}{s}{p}{\MODof{S}}
        &
        \premMAcc{\SG}{s}{p}{\PRT}
      }{
        \premMAcc{\SG}{s}{p}{\SMCof{S}}
      }
    \end{inferrule}
    \vspace{0.5em}
    \begin{inferrule}
      \inference[\rlabel{ap-smd-prot}{AP-SMD-Prot}]{
        \premMAcc{\SG}{s}{p}{\PRT}
      }{
        \premMAcc{\SG}{s}{p}{\SMDof{S}}
      }
      \qquad
      \inference[\rlabel{ap-smd-mod}{AP-SMD-Mod}]{
        \premMAcc{\SG}{s}{p}{\MODof{S}}^{(*)}
      }{
        \premMAcc{\SG}{s}{p}{\SMDof{S}}
      }
    \end{inferrule}

    \caption{Semantics of interaction policies.}
    \label{subfig:interaction-semantics}
  \end{subfigure}%

  \caption{Extending AML (\cref{fig:aml-base}) with keywords to combine module-level and subclass-level accessibility.}
  \label{fig:interaction}

  \az{Lot of redundant (v)space in this figure.}
\end{figure}

\Cref{subfig:interaction-syntax} defines the syntax of the two new keywords (based on their name in \CSharp) and policies.
We add \SMD (Subclass/Module, Disjunctive) and \SMC (Subclass/Module, Conjunctive) policies, which each contain a list of module scopes to which they are exposed.
The translation from keyword to policy is given in~\cref{subfig:interaction-translation}.
Both rules resolve their module arguments, similar to~\rref{a-int}.
The \SMC policy has one rule (\rref{ap-smc}), which simply asserts that access is granted by the module (\MOD) and protected (\PRV) policies.
There are two rules for the \SMD policy.
The first one simply delegates to the \PRT access policy, permitting access wherever a \texttt{protected} member would have been accessible.
The other rule delegates to the \MOD policy, but more careful attention must be paid here (hence the $(*)$ mark).
Recall that the semantics of this policy has a variant that asserts that the whole inheritance chain has access to the declaration (\cref{subfig:internal-var-4}).
However, this extension should \emph{not} be applied here, because the \texttt{protected} part of this modifier already grants access, regardless of the module-level exposure.

\section{Defining Extends-Clause Accessibility Restriction}
\label{sec:inheritance-restriction}

Until now, we have only considered inheritance as it exists in Java and \CSharp.
In this section, we shift our focus to \CPP, in particular the access modifiers on extends clauses.
In \CPP, it is possible to add a \texttt{private} modifier to an extends clause, which reduces the accessibility of \texttt{public} and \texttt{protected} members to \texttt{private} in the derived class.
Similarly, the \texttt{protected} keyword can be used to reduce the accessibility of \texttt{public} members to \texttt{protected}.
For qualified accesses, \CPP imposes the additional constraint that the inheritance chain leading to class in which the variable is declared should be accessible from the class in which the access occurs~\cite[\S11.9.3 (4)]{CPP20ISO}.

\subparagraph*{Setup.}
In contrast to the previous sections, we cannot encode inheritance-imposed access control in our accessibility policy $A$.
Instead, we encode it in the scope graph directly.
For that purpose, we introduce two new labels: $\lblEXTPRIV$ and $\lblEXTPROT$, which model private and protected extension, respectively.
Similar to the previous sections, $\lblEXT$ will model public extension; \ie inheritance without access restriction.

Fortunately, we can validate path access independently from the declaration-level access policy.%
\footnote{
  That also holds for the subtle interaction between \texttt{internal} and \texttt{protected} discussed in~\cref{sec:combining-subclass-module}.
  \texttt{protected} or \texttt{private} inheritance in subclasses of the reference class can still compromise these access modes, and must therefore be validated.
}
We require two adaptations to the rules~\rref{t-var} and \rref{t-fld} (\cref{fig:aml-base}).
First, the regular expressions of the queries must be changed to also traverse these new edges.
Thus, in \rref{t-var}, $\reLEX$ must be changed to $\reLEXCPP$.
Similarly, \rref{t-fld} now has \reMEMCPP as regular expression instead of \reMEM.
Second, we add a premise $\premMPth{\SG}{s}{p}$ to both rules.
This premise asserts that the labels in the path $p$ do not hide the accessed definition in scope $s$.

 \begin{figure}[!t]
  \begin{inferrule}
    \inference[\rlabel{p-pub}{P-Pub}]{
      \premMatchRE{p}{\reclos{\lblLEX}\reclos{\lblEXT}}
    }{
      \premMPth{\SG}{s}{p}
    }
    \qquad
    \inference[\rlabel{p-priv-prot}{P-Priv-Prot}]{
      \premEncC{\SG}{s}{S_C}
      &
      s_c \in S_C
      &
      \splitAt{s_c}{p} = \langle p_1, p_2 \rangle
      \\
      \premMatchRE{p_1}{\reclos{\lblLEX}\reclos{\lblEXT}}
      &
      \premMatchRE{p_2}{\reopt{\lblEXTPRIV}\reclos{(\lblEXT|\lblEXTPROT)}}
    }{
      \premMPth{\SG}{s}{p}
    }
  \end{inferrule}

  \caption{Extending AML (\cref{fig:aml-base}) with path-level visibility.}
  \label{fig:path-access}
\end{figure}

Path accessibility can be captured in two rules, shown in~\cref{fig:path-access}.
First, \rref{p-pub} asserts that a path is valid when there is only public inheritance.
With this rule, the semantics of the programs that do not use private or protected inheritance has not changed.
Second, rule \rref{p-priv-prot} covers the other two cases.
This rule looks intricate, but the intuition behind it is not too complicated.
Similar to the \texttt{private} and \texttt{protected} modifiers (\cref{subsec:private,subsec:protected}), access must occur within the class where the member is \texttt{private}/\texttt{protected}.
This is now not necessarily the defining class, but rather the last class in the inheritance chain that has a non-public modifier on the extends clause.
In the rule, this is encoded as follows.
The first two premises introduce a scope $s_c$, which is an enclosing scope of the reference location $s$.
The third premise asserts that the path $p$ can be split into two segments at scope~$s_c$.
That is, $p$ consists of two segments: a part $p_1$ from~$s_1$ to~$s_c$ and a part $p_2$ from~$s_c$ to~$s_n$.
This implies that $s_c$ is in the access path.
To validate that all subclasses of $s_c$ in the path have public inheritance, $p_1$ should match regular expression $\reclos{\lblLEX}{\lblEXT}$.%
\footnote{Alternatively, one can encode the requirement that the instance type must be $s_c$ itself by using $\reclos{\lblLEX}$, similar to~\cref{subfig:current-instance}.}
The path leading from the current class to the declaration ($p_2$) may start with a private inheritance step ($\reopt{\lblEXTPRIV}$), but may have only public and protected inheritance higher in the access path.

\begin{figure}[!t]
\begin{subfigure}[t]{\textwidth}
\begin{minipage}{0.45\textwidth}
\begin{lstlisting}[language=AML]
class A {
  public var x = 42
}
class B : private A {
  public var y = new C().x
}
class C : public B { }
\end{lstlisting}
\end{minipage}%
\vrule
\begin{minipage}{0.545\textwidth}
\centering
\begin{tikzpicture}[scopegraph, node distance = 2em and 4em]
  \node[scope] (sA) {$s_\id{A}$};
  \draw (sA) edge[lbl={\lblTHIS}, loop left] (sA);

  \node[type,  above = of sA] (Ax) {$\defdecl{\texttt{x}}{\tyINT}{\PUB}$};
  \draw (sA) edge[type, lbl={\lblVAR}] (Ax);

  \node[scope, right = of sA] (sB) {$s_\id{B}$};
  \draw (sB) edge[lbl={\vphantom{\lblEXT}\smash{$\lblEXTPRIV$}}] (sA);

  \draw (sB) edge[lbl={\lblTHIS}, loop above] (sB);

  \node[scope, right = of sB] (sC) {$s_\id{C}$};
  \draw (sC) edge[lbl={\lblEXT}] (sB);
  \draw (sC) edge[lbl={\lblTHIS}, loop above] (sC);
\end{tikzpicture}
\end{minipage}
\caption{
  Example program and partial scope graph demonstrating path access restrictions.
}
\label{subfig:path-scope-graph}
\end{subfigure}

\begin{subfigure}[t]{\textwidth}
  \begin{inferrule}
    \inference{
      \inference{
        \cdots
      }{
        \premEncC{\SG}{\xscoperef{B}}{\setOf{\xscoperef{B}}}
      }
      &
      \xscoperef{B} \in \setOf{\xscoperef{B}}
      \\
      \raisebox{0.0em}{$\displaystyle\vphantom{\frac{A}{\frac{B}{C}}}$}
      \splitAt{\xscoperef{B}}{
        \xscoperef{C} \pathstept[\lblEXT] \xscoperef{B} \pathstept[\lblEXTPRIV] \xscoperef{A}
      } = \langle
        \xscoperef{C} \pathstept[\lblEXT] \xscoperef{B},
        \xscoperef{B} \pathstept[\lblEXTPRIV] \xscoperef{A}
      \rangle
      \\
      \premMatchRE{\left(\xscoperef{C} \pathstept[\lblEXT] \xscoperef{B}\right)}{\reclos{\lblLEX}\reclos{\lblEXT}}
      &
      \premMatchRE{\left(\xscoperef{B} \pathstept[\lblEXTPRIV] \xscoperef{A}\right)}{\reopt{\lblEXTPRIV}\reclos{(\lblEXT|\lblEXTPROT)}}
    }{
      \premMPth{\SG}{\xscoperef{B}}{\left(\xscoperef{C} \pathstept[\lblEXT] \xscoperef{B} \pathstept[\lblEXTPRIV] \xscoperef{A}\right)}
    }
  \end{inferrule}
  \vspace{-0.5em}
  \caption{
    Part of typing derivation that shows how access is granted by the \rref{p-priv-prot} rule.
  }
  \label{subfig:path-typing-derivation}
\end{subfigure}

\caption{Example program demonstrating path accessibility.}
\label{fig:ex-path}

\end{figure}

\Cref{fig:ex-path} gives an example that uses this rule.
There is a class \id{A} with a field \id{x}.
Class \id{A} is inherited privately by class~\id{B}, which makes~\id{x} private in~\id{B}.
Next, class~\id{C} extends~\id{B} publicly.
In class~\id{B},~\id{x} is accessed on an instance of~\id{C}.
This access should be allowed, as class~\id{B} is the class in which~\id{x} is private as well as the class in which the reference occurs.
The partial derivation in~\cref{subfig:path-typing-derivation} asserts this.
$\xscoperef{B}$ is the scope that encloses the reference.
Splitting the access path from $\xscoperef{C}$ to $\xscoperef{A}$ at that $\xscoperef{B}$ yields two segments of a single step.
The segment leading up to $\xscoperef{B}$ ($\xscoperef{C} \pathstept[\lblEXT] \xscoperef{B}$) does indeed match the regular expression $\reclos{\lblLEX}\reclos{\lblEXT}$.
Likewise, the other segment also matches its regular expressions, showing that this access is valid.
Note that, when class~\id{C} would have extended class~\id{B} with \texttt{protected} or \texttt{private} visibility instead,
the premise on the first section would not hold anymore.
This corresponds with the behavior in~\cref{sec:subclass} (the field must be accessible as if it was defined on the instance type) as well as the specification of \CPP cited above.

\section{Analysis}
\label{sec:analysis}

A comprehensive model of accessibility can be made by composing the system fragments we discussed so far (\cref{fig:aml-base,fig:internal,fig:private,fig:protected,fig:interaction,fig:path-access}).
In this section, we discuss a few properties that our system adheres to.

\subsection{Soundness of Access Policies}
\label{subsec:soundness-theorems}

First, we claim some soundness theorems for \texttt{private}, \texttt{protected} and \texttt{internal} access.
There is no soundness theorem for \texttt{public}, as access is allowed unconditionally.
Soundness theorems for \texttt{private protected} and \texttt{protected internal} are easily derived from~\cref{def:prot-sound,def:int-sound}, and hence omitted.
In the theorems, $P_\SG$ ranges over valid typing derivation for an AML program with scope graph $\SG$,
$x_r$ over references, and $x_d$ over declarations.
\AppendixRef{sec:characterizing-predicates}{D} defines the predicates used in these theorems, and proves them.

\newpage
First, soundness for \texttt{private} access is stated as follows:
\begin{theorem}[Soundness of \texttt{private} member access]
  \label{def:priv-sound}
  \begin{align*}
    &\mathsf{resolve}_{P_\SG}(x_r) = x_d 
      \mathrel{\land} 
      \mathsf{private}_{P_\SG}(x_d)\ 
    \Rightarrow{}\\
    &\hspace{1em} 
    \exists s_d.\ 
      \mathsf{definingClass}_{P_\SG}(x_d) = s_d 
      \mathrel{\land} 
      \mathsf{enclosingClass}_{P_\SG}(x_r, s_d)
  \end{align*}
\end{theorem}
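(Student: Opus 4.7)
The plan is to unfold the hypothesis $\mathsf{resolve}_{P_\SG}(x_r) = x_d$ into the sub-derivation of $P_\SG$ that types the expression containing $x_r$, extract the accessibility premise $\premMAcc{\SG}{s}{p}{A}$ it must contain, and then case-analyse on the rule that discharged that premise to exhibit the witness $s_d$. No induction on derivations should be necessary, because both the resolution and the accessibility judgement are non-recursive at this point.

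\textbf{Step 1 (expose the access check).} Inspection of \cref{subfig:aml-base-typing} shows that the only typing rules that tie a reference to a variable declaration are \rref{t-var} and \rref{t-fld}. In either case, the typing derivation contains a query whose singleton result $\setOf{\langle p, \defdecl{x}{T}{A}\rangle}$ yields a path $p$ from the scope $s$ of the reference to the scope of the class declaring $x_d$, together with a premise $\premMAcc{\SG}{s}{p}{A}$. Because $\mathsf{private}_{P_\SG}(x_d)$ holds by hypothesis, the access policy stored in the declaration must be $\PRV$, i.e.\ $A = \PRV$. Since paths stop at the scope of the declaring class, $\mathsf{tgt}(p)$ is exactly the defining class of $x_d$; I therefore set $s_d := \mathsf{tgt}(p)$, which discharges the conjunct $\mathsf{definingClass}_{P_\SG}(x_d) = s_d$.

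\textbf{Step 2 (case-analyse the access policy).} The remaining obligation is $\mathsf{enclosingClass}_{P_\SG}(x_r, s_d)$. The rules for $\premMAcc{\SG}{s}{p}{\PRV}$ reduce to the \rref{ap-priv}-family rules, since \rref{ap-pub}, \rref{ap-int}, \rref{ap-smc}, \rref{ap-smd-prot} and \rref{ap-smd-mod} all pattern-match on a different policy constructor. For \rref{ap-priv} and its refinement \rref{ap-priv1}, the premises $\premEncC{\SG}{s}{S_C}$ and $\mathsf{tgt}(p) \in S_C$ directly place $s_d$ among the enclosing classes of the reference scope, which matches the characterization of $\mathsf{enclosingClass}$ given in \AppendixRef{sec:characterizing-predicates}{D}. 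For the Java-style rule \rref{ap-priv2}, the premises provide a shared enclosing class $s_c$ of the reference scope and of $\mathsf{tgt}(p) = s_d$, which realises the broader notion of enclosure captured by the same predicate.

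\textbf{Main obstacle.} The delicate point lies in Step~2: one must verify that the predicate $\mathsf{enclosingClass}$ as formalised in Appendix~D is exactly the disjunction of situations admitted by the three \rref{ap-priv}-family rules, covering both the direct containment of \rref{ap-priv}/\rref{ap-priv1} and the shared-ancestor containment of \rref{ap-priv2} without over-approximating. Once that characterization is pinned down, each sub-case reduces to a direct rewriting of premises already present in the derivation. A minor technical point to check is that \rref{t-fld} indeed routes its query target through the defining class in the same way as \rref{t-var}, so that the identification $s_d = \mathsf{tgt}(p)$ is uniform across both resolution sites.
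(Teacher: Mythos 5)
Your proof follows the same skeleton as the paper's: invert $\mathsf{resolve}$ to expose the query and the $\premMAcc{\SG}{s}{p}{A}$ premise, identify $\mathsf{tgt}(p)$ with the defining class, and read $\mathsf{enclosingClass}$ off the premises of \rref{ap-priv}. Two points need attention, though. First, the step you treat as immediate --- ``the access policy stored in the declaration must be $\PRV$, i.e.\ $A=\PRV$'', and likewise $\mathsf{tgt}(p)=s_d$ --- is where most of the paper's actual work lives. A priori the query could have returned a declaration edge for $x_d$ other than the one introduced by the \lit*{private} syntax; ruling this out requires relating the declaration found by the query (via \cref{lem:query-decl}) to the declaration asserted by \rref{d-def} inside the derivation of $\mathsf{private}_{P_\SG}(x_d)$, using the uniqueness of declaration indices and of the corresponding scope-graph assertions (\cref{lem:decl-ctx-unique,lem:decl-params-unique}). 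Without that machinery the equalities $A=\PRV$ and $\mathsf{tgt}(p)=s_d$ are asserted, not derived.

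Second, your handling of the \rref{ap-priv2} case is wrong, and the reason the theorem nevertheless holds is that the appendix specification being proven about deliberately fixes the \CSharp{} variant, so only \rref{ap-priv} is in scope and inversion yields a single case. The predicate $\mathsf{enclosingClass}_{P_\SG}(x_r,s_d)$ as defined in \cref{def:defEnc} requires $s_d$ itself to be among the enclosing classes of the scope where $x_r$ resolves; it does not admit the shared-ancestor situation of \rref{ap-priv2}. Indeed, the Java program in \cref{fig:diff-priv-enc} is exactly a configuration where \rref{ap-priv2} grants access but the reference does not occur inside the defining class, so under that rule the theorem as stated would fail rather than being ``realised by the broader notion of enclosure.'' Your ``main obstacle'' paragraph is therefore aimed at the wrong place: the delicate point is not matching $\mathsf{enclosingClass}$ against a family of rules, but pinning the resolved declaration to the annotated one, and recognising that the soundness statement is tied to the single-rule \CSharp{} instantiation.
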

This should be read as `when $x_r$ resolves to $x_d$, and $x_d$ is private, then $x_r$ must occur in the class $s_c$ that defines $x_d$'.

Likewise, soundness for \texttt{protected} access is stated as:
\begin{theorem}[Soundness of \texttt{protected} member access]
  \label{def:prot-sound}
  \begin{align*}
    &\mathsf{resolve}_{P_\SG}(x_r) = x_d 
      \mathrel{\land} 
      \mathsf{protected}_{P_\SG}(x_d)\ 
    \Rightarrow {}\\
    &\hspace{1em}
    \exists s_c, s_d.\ 
      \mathsf{definingClass}_{P_\SG}(x_d) = s_d 
      \mathrel{\land}
      \mathsf{enclosingClass}_{P_\SG}(x_r, s_c)
      \mathrel{\land}
      \mathsf{subClass}_{P_\SG}(s_c, s_d)
  \end{align*}
\end{theorem}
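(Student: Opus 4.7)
The plan is to unfold the hypothesis $\mathsf{resolve}_{P_\SG}(x_r) = x_d$ into its underlying typing derivation. Since resolving a reference uses either \rref{t-var} or \rref{t-fld}, in either case the derivation supplies a query returning a single result $\langle p, \defdecl{x_d}{T}{A} \rangle$ together with an accessibility premise $\premMAcc{\SG}{s}{p}{A}$, where $s$ is the scope of $x_r$. Combining $\mathsf{protected}_{P_\SG}(x_d)$ with \rref{a-prot} pins down $A = \PRT$. I then take $s_d := \mathsf{tgt}(p)$; because a query result path terminates at the scope in which the matched declaration resides, $\mathsf{definingClass}_{P_\SG}(x_d) = s_d$ follows immediately.

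The substantive step is case analysis on the inference rule used to derive $\premMAcc{\SG}{s}{p}{\PRT}$. The principal case is \rref{ap-prot}, which hands me $S_C$ with $\premEncC{\SG}{s}{S_C}$, some $s_c \in S_C$, and $s_c \in \mathsf{scopes}(p)$. The first two items, together with the semantics of \rref{enc-c}, give $\mathsf{enclosingClass}_{P_\SG}(x_r, s_c)$. For the third conjunct I rely on a structural invariant on query paths: a path produced by the \rref{t-var}/\rref{t-fld} query is a $\reclos{\lblLEX}$-prefix followed by an $\reclos{\lblEXT}$-suffix ending at $s_d$. Because $s_c \in \mathsf{scopes}(p)$, the suffix starting at $s_c$ consists solely of $\lblEXT$-edges leading to $s_d$, which is exactly the characterization of $\mathsf{subClass}_{P_\SG}(s_c, s_d)$ in \AppendixRef{sec:characterizing-predicates}{D}.

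Because the accessibility judgment is open-ended, I also have to inspect the fallback rules that match an arbitrary policy $A$ and therefore apply to \PRT as well. For \rref{ap-priv}, $\mathsf{tgt}(p) = s_d$ is itself an enclosing class of $s$; choosing $s_c := s_d$ satisfies the conclusion using reflexivity of $\mathsf{subClass}$. If the system includes the Java-style variant \rref{ap-priv2}, the shared enclosing class it furnishes simultaneously encloses $s$ and $s_d$, and I discharge the $\mathsf{subClass}$ witness either by taking $s_c := s_d$ (when $s_d$ itself encloses $s$) or by appealing to the definition's treatment of enclosing-vs-subclass provenance as spelled out in the appendix.

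The main obstacle is precisely this accounting across rule variants: as later sections add composite policies (e.g., \SMC, \SMD) whose rules delegate to \PRT via \rref{ap-smc} and \rref{ap-smd-prot}, the case analysis must be done over the full set of rules instantiated in the composite system of \cref{sec:analysis}. I expect the crucial auxiliary lemma to be the path-shape invariant mentioned above—that variable-query paths factor as $\reclos{\lblLEX}\reclos{\lblEXT}$ ending at the defining class—since it is what converts the set-membership fact $s_c \in \mathsf{scopes}(p)$ into the desired $\mathsf{subClass}$ witness without further reasoning about the scope graph's shape.
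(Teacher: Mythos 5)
Your overall architecture matches the paper's: unfold $\mathsf{resolve}$ to recover the query and the premise $\premMAcc{\SG}{s_r}{p}{A}$, use $\mathsf{protected}_{P_\SG}(x_d)$ together with uniqueness of declarations to pin $A = \PRT$, read off the defining class, and case-analyze the accessibility derivation. Your worry about the composite policies is unfounded: inversion on a judgment at policy $\PRT$ can only produce \rref{ap-prot} and the any-policy fallback \rref{ap-priv}, since \rref{ap-smc} and \rref{ap-smd-prot} conclude at $\SMCof{S}$ and $\SMDof{S}$; and \rref{ap-priv2} is not part of the appendix specification the theorem is proved against (it fixes the \CSharp variant). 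Your treatment of the \rref{ap-priv} case, taking $s_c := s_d$ and using reflexivity of $\mathsf{subClass}$, is exactly the paper's.

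The gap is in the \rref{ap-prot} case. Inversion hands you \emph{some} $s_c$ with $s_c \in S_C$ and $s_c \in \mathsf{scopes}(p)$, and you then assert that ``the suffix starting at $s_c$ consists solely of $\lblEXT$-edges.'' That does not follow: for a lexical access the path matches $\reclos{\lblLEX}\reclos{\lblEXT}$, and the witness supplied by the rule may lie inside the $\lblLEX$-prefix, because a nested class enclosing the reference is both an enclosing class and a scope of $p$. Concretely, if the reference sits in a class $\mathtt{I}$ nested in $\mathtt{B}$, with $\mathtt{B}$ extending the defining class $\mathtt{A}$, the path is $s_{\mathtt{I}} \pathstept[\lblLEX] s_{\mathtt{B}} \pathstept[\lblEXT] s_{\mathtt{A}}$, and inversion may give you $s_c = s_{\mathtt{I}}$, which is not a subclass of $s_{\mathtt{A}}$. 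Since the conclusion is existential you are free to discard that witness, and this is what the paper does: it splits $p$ at the given $s_c$, notes by \cref{lem:query-path} that the second segment still matches $\reclos{\lblLEX}\reclos{\lblEXT}$, and re-picks the witness to be the scope at the $\lblLEX$/$\lblEXT$ boundary of that segment; that scope is a class scope reachable from $s_r$ via $\lblLEX$-edges (hence in $S_C$), and only $\lblEXT$-edges follow it, which is precisely the query characterization of $\mathsf{subClass}$. Your path-shape invariant is the right auxiliary fact, but it must be used to \emph{choose} the witness, not applied to the witness you were given.
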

Compared to~\cref{def:priv-sound}, this theorem states that $x_r$ can occur in some arbitrary subclass~$s_c$ of $s_d$ if $x_d$ is \texttt{protected}.

Finally, \texttt{internal} access is specified correctly when:
\begin{theorem}[Soundness of \texttt{internal} member access]
  \label{def:int-sound}
  \begin{align*}
    &\mathsf{resolve}_{P_\SG}(x_r) = x_d 
      \mathrel{\land} 
      \mathsf{internal}_{P_\SG}(x_d, \setVar{x})\
    \Rightarrow {}\\
    &\hspace{1em}
    (\exists x,\, s_m.\
      \mathsf{enclosingMod}_{P_\SG}(x_r) = s_m
      \mathrel{\land}
      x \in \setVar{x}
      \mathrel{\land} 
      \mathsf{resolveMod}(x) = s_m) \lor {}\\
    &\hspace{1em}
    (\exists s_d.\ 
      \mathsf{definingClass}_{P_\SG}(x_d) = s_d 
      \mathrel{\land} 
      \mathsf{enclosingClass}_{P_\SG}(x_r, s_d))
  \end{align*}
\end{theorem}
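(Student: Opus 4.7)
}

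The plan is to reduce the theorem to a case analysis on which rule of the $\premMAcc{\SG}{s}{p}{A}$ judgement was used to discharge the accessibility obligation of the resolution. First I unfold the hypotheses using the characterizing predicates of \AppendixRef{sec:characterizing-predicates}{D}. From $\mathsf{resolve}_{P_\SG}(x_r) = x_d$ it follows that $P_\SG$ contains a subderivation that concludes either \rref{t-var} or \rref{t-fld} for $x_r$, both of which produce a premise of the form $\premMAcc{\SG}{s}{p}{A}$, where $s$ is the scope of the reference, $p$ is the resolution path, and $A$ is the access policy stored in the declaration of $x_d$. From $\mathsf{internal}_{P_\SG}(x_d, \setVar{x})$ we get that $A = \MODof{S}$ and that $S$ is the set of module scopes obtained by applying \rref{a-int} to the argument list $\setVar{x}$; in particular, for every $s' \in S$ there exists some $x \in \setVar{x}$ with $\mathsf{resolveMod}(x) = s'$.

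Next I do a case analysis on the access-policy rule used. Only two rules can conclude $\premMAcc{\SG}{s}{p}{\MODof{S}}$: the specific rule \rref{ap-int} and the fallback rule \rref{ap-priv} (which applies to any policy $A$). In the first case, the premises of \rref{ap-int} yield an enclosing-module scope $s_m$ of $s$ with $s_m \in S$, whence, by the construction of $S$, there is some $x \in \setVar{x}$ with $\mathsf{resolveMod}(x) = s_m$. Translating $\premEncM{\SG}{s}{S_M}$ into the $\mathsf{enclosingMod}$ predicate of the characterization yields the left disjunct of the conclusion. In the second case, the premises of \rref{ap-priv} give $\mathsf{tgt}(p) \in S_C$, so the defining class of $x_d$ (which, by the query structure of \rref{t-var}/\rref{t-fld}, equals $\mathsf{tgt}(p)$) is an enclosing class of $x_r$; this is exactly the right disjunct.

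Two bookkeeping steps remain. I have to connect the syntactic path target $\mathsf{tgt}(p)$ to the $\mathsf{definingClass}$ predicate, which follows from the shape of the queries in \rref{t-var} and \rref{t-fld}: both queries end on a $\lblVAR$-edge, so the last scope of the path is precisely the scope in which the field is declared. Analogously, I need to link the set $S_M$ obtained from \rref{enc-m} (a set of all enclosing modules) to the functional $\mathsf{enclosingMod}$ predicate; this is immediate from the definitions in \AppendixRef{sec:characterizing-predicates}{D}.

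The main obstacle is the fallback rule \rref{ap-priv}. It is easy to forget that it applies to \emph{any} policy, and a proof that dispatches only on $\MOD$-specific rules would miss the right disjunct entirely. A related subtlety is that, if one also wanted to cover variants such as \rref{ap-int1}, \rref{ap-int2} or \rref{a-int1} from \cref{fig:internal}, the case analysis grows, but each additional variant strengthens rather than weakens the premises of the $\MOD$-case and therefore still implies the left disjunct; so the structure of the argument is unchanged.
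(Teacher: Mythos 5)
Your plan coincides with the paper's proof: both unfold $\mathsf{resolve}$ and $\mathsf{internal}$ to obtain the obligation $\premMAcc{\SG}{s_r}{p}{\MODof{S}}$ (with $S$ built by \textsc{A-Int}), then invert it, the \textsc{AP-Int} case yielding the left disjunct via the construction of $S$ and \textsc{Q-Mod}, and the \textsc{AP-Priv} fallback yielding the right disjunct exactly as in the \texttt{private} soundness proof. The only cosmetic difference is that you cite the main-text premise $\premEncM{\SG}{s}{S_M}$, whereas the appendix specification's \textsc{AP-Int} (against which the proof is actually carried out) uses the innermost-module premise $\premEncMI{\SG}{s}{s_m}$, which makes the link to the functional $\mathsf{enclosingMod}$ predicate immediate rather than requiring the set-to-function bookkeeping you mention.
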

This theorem states that references to declarations with modifier \texttt{internal} are valid if the enclosing module of the reference $s_m$ is referred to in the arguments of the access modifier $\setVar{x}$, or if it is accessed as a private variable.

\subsection{Equivalence of Access Policies}
\label{subsec:equivalences}

The access policy language \synt{A} we defined is not minimal.
It is possible to define equivalent policies in multiple ways.
To analyze that, we define equivalence of access policies as follows:

\begin{restatable}[Equivalence of Access Policies]{definition}{accesspolicyequiv}
  \label{def:acc-equiv}
  \begin{inferruleraw}
    \inference{
      \forall \SG, s, p.\ (\premMAcc{\SG}{s}{p}{A}) \Leftrightarrow (\premMAcc{\SG}{s}{p}{A'})
    }{A \equiv A'}
  \end{inferruleraw}
\end{restatable}
\noindent
That is, two accessibility policies are equivalent when, for any scope $s$, path $p$, scope graph~$\SG$, either both policies admit access, or neither does.

The equivalences that hold in our model are: $\PRT \equiv \SMDof{\emptyset}$ and $\PRV \equiv \SMCof{\emptyset} \equiv \MODof{\emptyset}$.
This follows from the fact that module access grants nothing if no module parameters are given.
Thus, the $\SMDof{\emptyset}$ policy reduces to $\PRT$, while $\SMCof{\emptyset}$ and $\MODof{\emptyset}$ do not elevate accessibility beyond $\PRV$.
\AppendixRef{sec:equivalence-proofs}{B} gives proofs for each of these equivalences.
Because of these equivalences, we did not include \PRT and \PRV in our implementation (\cref{sec:evaluation}).

\subsection{Order of Access Policies}
\label{subsec:policy-order}

Intuitively, there exists an ordering between accessibility policies, where \smash{$\PRV$} is the bottom most restrictive, and \smash{$\PUB$} is the least restrictive.
This order is partial, as the module-exposure dimension and subclass-exposure dimension are orthogonal.
Assuming a subset relation on scope sets \smash{($S \subset S'$)}, we can define a strict partial order \smash{$A <_A A'$} as follows:

\newpage
\noindent
\begin{minipage}{\textwidth}
  \centering
  \begin{tikzpicture}[
    plc/.style = {draw=black, rounded corners},
    cond/.style = {yshift = 0.65em, xshift=0.1em},
    node distance = 1.5em and 1em
  ]
    \node[plc] (prv) {$\PRV$};

    \node[plc, right = of prv] (smcl) {$\SMCof{S_0}$};
    \draw (smcl) edge[->] (prv);

    \node[plc, right = 3.85em of smcl] (smcr) {$\SMCof{S_1}$};
    \draw (smcr) edge[->] node[cond] {$S_0 \subsetneq S_1$} (smcl);

    \node[plc, above right = of smcr, xshift = -2em] (modl) {$\MODof{S}_1$};
    \draw (modl) edge[->] (smcr);

    \node[plc, right = 3.85em of modl] (modr) {$\MODof{S}_2$};
    \draw (modr) edge[->] node[cond] (modlr-c) {$S_1 \subsetneq S_2$} (modl);

    \node[plc] (prt) at (smcr -| modlr-c) {$\PRT$};
    \draw (prt) edge[->] (smcr);

    \node[plc, below right = of modr, xshift = -2em] (smdl) {$\SMDof{S_2}$};
    \draw (smdl) edge[->] (modr);
    \draw (smdl) edge[->] (prt);

    \node[plc, right = 3.85em of smdl] (smdr) {$\SMDof{S_3}$};
    \draw (smdr) edge[->] node[cond] {$S_2 \subsetneq S_3$} (smdl);


    \node[plc, right = of smdr] (pub) {$\PUB$};
    \draw (pub) edge[->] (smdr);
  \end{tikzpicture}

  \vspace{1em}
\end{minipage}
where the edges indicate instances of the $<_A$-relation.
The edges with a condition indicate that $\SMC$, $\MOD$, and $\SMD$ become more permissive when more scopes are added to the policy.

The intuition behind this order is not arbitrary.
In fact, we claim the following:
\begin{restatable}[The order on access policies $<_A$ is well-behaved]{theorem}{accpolorderwellbehaved}
  \label{thm:acc-pol-order-well-behaved}
  \begin{inferruleraw}
    (A <_A A') 
    \Rightarrow
    \forall \SG, s, p.\ (\premMAcc{\SG}{s}{p}{A}) \Rightarrow (\premMAcc{\SG}{s}{p}{A'})
  \end{inferruleraw}%
\end{restatable}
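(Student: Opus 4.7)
The plan is to exploit the observation that $<_A$ is defined as the transitive closure of the cover relation drawn in the Hasse diagram above. Because implication is transitive, it suffices to prove the implication once for each individual edge of the diagram; for each edge, the argument reduces to a short inspection of the accessibility rules of \cref{fig:aml-base,fig:internal,fig:private,fig:protected,fig:interaction}. First I would dispose of the two extremal edges. Any derivation of $\premMAcc{\SG}{s}{p}{\PRV}$ must be concluded by the fallback \rref{ap-priv}, whose conclusion is in fact polymorphic in the policy, so the same premises yield $\premMAcc{\SG}{s}{p}{\SMCof{S_0}}$. Dually, $\SMDof{S_3} <_A \PUB$ is immediate because \rref{ap-pub} grants $\PUB$-access unconditionally.

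Next I would treat the four kind-changing edges at fixed scope-set: $\SMCof{S_1} <_A \MODof{S_1}$ and $\SMCof{S_1} <_A \PRT$ are immediate projections out of the conjunctive rule \rref{ap-smc}, while $\MODof{S_2} <_A \SMDof{S_2}$ and $\PRT <_A \SMDof{S_2}$ follow directly from the disjunctive rules \rref{ap-smd-mod} and \rref{ap-smd-prot}, respectively. The three remaining scope-enlarging edges, $\SMCof{S_0} <_A \SMCof{S_1}$, $\MODof{S_1} <_A \MODof{S_2}$, and $\SMDof{S_2} <_A \SMDof{S_3}$ (each with side condition $S \subsetneq S'$), all reduce to a single auxiliary lemma: monotonicity of $\MODof{\cdot}$ in its scope-set parameter. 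This lemma has a one-line proof per $\MOD$-rule, since the only premises that mention the parameter set have the shape $s_m \in S$ (in \rref{ap-int}, \rref{ap-int1}) or a conjunction of such premises (in \rref{ap-int2}), all of which are preserved when the set grows. The $\SMDof{S_2} <_A \SMDof{S_3}$ case needs an additional sub-case on which of \rref{ap-smd-prot} and \rref{ap-smd-mod} concluded the original derivation; the former lifts trivially, and the latter uses the monotonicity lemma on its $\MOD$-premise.

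The subtle point, which I expect to be the main obstacle, is the $(*)$-annotation on rule \rref{ap-smd-mod}: the $\MOD$-premise must not employ the path-wide variant \rref{ap-int2}. When proving $\MODof{S_2} <_A \SMDof{S_2}$, one must ensure that if $\MODof{S_2}$-access was originally derived via \rref{ap-int2}, an $\SMDof{S_2}$-derivation can still be constructed through the $(*)$-restricted rule. Fortunately, \rref{ap-int2} is strictly stronger than \rref{ap-int} (it only adds further path-exposure premises on top of the \rref{ap-int} premises), so every \rref{ap-int2}-derivation induces an \rref{ap-int}-derivation of the same $\MODof{S_2}$-conclusion, which then feeds into \rref{ap-smd-mod} unproblematically. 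I would state this as an explicit observation before invoking the monotonicity lemma; with it in hand, the seven edge cases compose into a direct proof of the theorem.
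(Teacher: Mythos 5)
Your proposal follows essentially the same route as the paper's proof: decompose $<_A$ into its generating edges via transitivity of implication (the paper phrases this as induction on the $<_A$-relation, with the transitive clause as the inductive case), discharge the scope-enlarging edges with a monotonicity/weakening lemma for $\MODof{\cdot}$, and use the fact that \rref{ap-priv} concludes an arbitrary policy $A$ to handle the $\PRV$ edge. The extra discussion of the $(*)$-restriction on \rref{ap-smd-mod} is a reasonable observation, though the paper sidesteps it by fixing a single $\MOD$-variant (the innermost-module one, without the path-wide rule) in the appendix specification that the proof is actually carried out against.

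There is, however, one recurring incompleteness. Because \rref{ap-priv} matches \emph{any} policy $A$, every inversion on a judgment $\premMAcc{\SG}{s}{p}{A}$ produces an additional \rref{ap-priv} case, not just the case for the rule specific to $A$. You account for this only at the $\PRV <_A \SMCof{S_0}$ edge. As stated, the claim that $\SMCof{S_1} <_A \MODof{S_1}$ and $\SMCof{S_1} <_A \PRT$ are ``immediate projections out of \rref{ap-smc}'' is not quite right: if the $\SMCof{S_1}$-derivation was concluded by \rref{ap-priv}, there is no \rref{ap-smc} instance to project from, and one must instead re-apply the polymorphic \rref{ap-priv} (i.e., the paper's Lemma~\ref{lem:prv-to-any}) to obtain the target policy. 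The same extra case arises in the monotonicity lemma for $\MODof{\cdot}$ and in the sub-case analysis for $\SMDof{S_2} <_A \SMDof{S_3}$, where the paper lists three cases (\rref{ap-smd-mod}, \rref{ap-smd-prot}, \rref{ap-priv}) rather than your two. Each such case is discharged in one line by the observation you already made, so the proof is readily repaired, but the case analysis should be made systematic.
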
%
\noindent
That is, when $A$ is more restrictive than $A'$, and $A$ permits access in scope $s$ via a path $p$, then~$A'$ will permit that access too.
A proof of this theorem can be found in \AppendixRef{sec:preorder-proofs}{C}.

\section{Evaluation}
\label{sec:evaluation}

So far, we have motivated our specification with examples from real-world languages such as Java and \CSharp, and stated some generic properties of our model.
However, for our specification to be usable as a basis for practical tools, it must correspond with the behavior of the actual languages.
To validate that, we evaluated our specification in two ways.
First, we systematically compared our specification with reference compilers of Java, \CSharp, and Rust.
Second, we validated the compatibility of our framework with recent work on language-parametric code completion~\cite{PelsmaekerAPV22}.

\subsection{Comparison with Reference Compilers: Implementation}
\label{subsec:eval-impl}

The comparison to compilers of real-world languages is implemented as follows:
\begin{enumerate}
  \item \label{it:analysis} Apply our type system on an AML program (the test case).
  \item Translate the AML program to the target language.
  \item \label{it:compile-ref} Compile the translated program using a compiler of the target language.
  \item Compare results: either both analyses should succeed, or both should give errors.
\end{enumerate}
We discuss these steps in more detail below.

\subparagraph*{AML Type Checker.}
To compare our model with real-world compilers, we need a way to type check concrete AML programs.
To that end, we implemented AML in the Spoofax language workbench~\cite{KatsV10,WachsmuthKV14}.
The actual type system is implemented using the Statix specification language~\cite{AntwerpenPRV18,RouvoetAPKV20}.
Statix is a suitable choice, as its declarativity allows an overall straightforward translation from our inference.
For example, the Statix encoding of rule~\rref{t-var} in~\cref{fig:t-var-stx} strongly corresponds to the original (\cref{fig:aml-base}).
Using this implementation, we can systematically check accessibility in concrete AML programs.

\subparagraph*{Compiling with Reference Compiler}

Next, we implemented source-to-source translations from AML to each of Java, \CSharp and Rust.
This translation was straightforward by design, as otherwise the results of the type checkers can be different due to semantic differences introduced by the translation.
For that reason, we do not support AML features that have no direct counterpart in the target language.
For example, the translation to Java will error when the AML program uses the \texttt{private protected} access modifier, as Java does not support that accessibility policy.
This way, we know that correspondence between the programs is guaranteed when the translation succeeds.

\begin{figure}
  \begin{subfigure}[t]{0.6\textwidth}
    \lstinputlisting[language=Statix]{figures/evaluation/field-access.stx}
    \vspace{-0.5em}
    \caption{Encoding of rule~\rref{t-var} in Statix.}
    \label{fig:t-var-stx}
  \end{subfigure}
  \begin{subfigure}[t]{0.4\textwidth}
    \lstinputlisting[language=SptAML]{figures/evaluation/testcase.spt}
    \vspace{-0.5em}
    \caption{Example test case.}
    \label{fig:testcase-spt}
  \end{subfigure}
  \caption{Overview of Approach to Comparison with Reference Compilers.}
  \label{fig:reference-compiler-eval}
\end{figure}

After translating, we invoke the reference compiler, observe its output (success or failure),
and compare the given output with the result from our own type checker (step~\ref{it:analysis}).
If those are different (\ie, our type checker accepts the program, while the reference compiler emits errors, or vice versa), the test fails.

\subsection{Comparison with Reference Compilers: Test Cases}

To the best of our knowledge, there exists no test suite specifically aimed at verifying the semantics of access modifiers.
For that reason, we manually created an extensive test suite.
Each test contains a class \texttt{Def}, that defines some variable \texttt{x} with some access modifier $A$.
Furthermore, each test contains a class \texttt{Ref}, in which a reference to \texttt{x} occurs.
\texttt{Def} and \texttt{Ref} can be related in two different ways at the same time:
\begin{itemize}
  \item By inheritance: either (1) \texttt{Def} and \texttt{Ref} are actually the same class, (2) have no mutual inheritance, (3) \texttt{Ref} inherits \texttt{Def}, or (4) \texttt{Def} inherits \texttt{Ref}.
  \item By module position: either \texttt{Def} and \texttt{Ref} (1) occur in the same module, or (2) \texttt{Ref} occurs in a parent/sibling/child module of \texttt{Def}.
  \item By class nesting: either (1) \texttt{Def} and \texttt{Ref} are top-level classes, (2) \texttt{Ref} is nested in \texttt{Def}, (3) \texttt{Def} is nested in \texttt{Ref}, or (4) \texttt{Def} and \texttt{Ref} have a shared enclosing class.
\end{itemize}
In addition, tests for member accesses (\ie, \texttt{recv.x}) have a receiver type \texttt{Recv}.
This type must either be equal to \texttt{Def}, or inherit from it.
However, it can be related in all possible other ways to \texttt{Def} and \texttt{Ref}.
By systematically exploring all options, we derived our test suite.

We excluded cases that are 
(1) impossible (\eg, \texttt{Ref} cannot be nested in \texttt{Def} and live in a different module at the same time), 
(2) use features not supported by the target language, 
(3) invalid for another reason (\ie, inheriting from a nested class is not allowed by Java), or 
(4) do not bind properly (\ie, lexical access where \texttt{Ref} and \texttt{Def} do not inherit from each other, and are not nested in each other), 
To reduce the number of test cases, we restricted the cases that involved nested classes to have one module only.
Additionally, we only used \lit*{private}, \lit*{protected} and \lit*{protected internal} as access modifiers in these cases.
\Cref{tbl:test-suite} summarizes the results of the test suite generation.

\newcommand{\javaAMa}{\lit*{public}}
\newcommand{\javaAMb}{\lit*{protected internal}}
\newcommand{\javaAMc}{\lit*{internal}, \lit*{private}}
\newcommand{\javaFta}{packages, class inheritance}
\newcommand{\javaFtb}{class nesting}

\newcommand{\csAMa}{\lit*{protected}}
\newcommand{\csAMb}{\lit*{private protected}}
\newcommand{\csFt}{
  classes (inheritance/nesting)
}

\begin{table}
  \caption{Summary of Test Suite.}
  \label{tbl:test-suite}
  \begin{tabular}{r | r r r r}
    \toprule
                            & Java                       & \CSharp                   & Rust                  & Manual\\
    \midrule   
      \textit{Acc.\@ Mods.} & \javaAMa{}                 & Same as Java, and         & \lit*{public}         & All \\ 
                            & \javaAMb{}                 & \csAMa{}                  & \lit*{internal}       &     \\ 
                            & \javaAMc{}                 & \csAMb{}                  &                       &     \\ 
      \textit{Features}     & class inheritance          & class inheritance         & structs,              & advanced modules \\
                            & class nesting, and         & class nesting, and        & modules               & inheritance visibility\\
                            & packages                   & assemblies                & \\
      \textit{\#Cases}      & 433                        & 522                       & 60                    & 168 \\
      \textit{Compiler}     & \texttt{javac} 11.0.20.1   & \texttt{dotnet} 7.0.401   & \texttt{rustc} 1.73.0 & --- \\ 
    \bottomrule
  \end{tabular}
\end{table}

\Cref{fig:testcase-spt} shows an example test case written in the Spoofax Testing Language (SPT)~\cite{KatsVV11}.
This test validates that a private field is accessible from a nested class.
The test consists of a program (between double square brackets), and some \emph{expectations}.
In this case, we expect the (Statix-based) analysis to succeed.
Moreover, we expect the \texttt{java-compat} transformation to succeed.
This transformation is executing the steps in~\cref{subsec:eval-impl}.

\subparagraph*{Results}
There are several features present in AML that were not covered by any of the reference compilers, most notably \lit*{private}/\lit*{protected} \emph{inheritance}, and module visibility beyond what Rust supports.
To validate we cover these features to some extent, we have written 168 additional test cases.
While initially exposing a lot of edge cases, in the end all test cases succeeded.
This shows that our specification covers the languages it set out to model rather accurately.

\subsection{Code Completion}

One of our future goals is to use our framework to implement refactoring tools that are sound with respect to accessibility.
The most recent work in this area is done by Pelsmaeker et al.~\cite{PelsmaekerAPV22}.
They show how Statix specifications can be used to generate editor auto-completion proposals language-parametrically.
We applied auto-completion to the access modifiers in the \CSharp/Java and Rust tests (152, after deduplication), and validated soundness and completeness.
That is, when the analysis succeeded, code completion should propose the current modifier at that position.
Otherwise, if the access was invalid, the modifier should not be proposed, as only less restrictive ones are valid at that position.

We consider the fact that all completion tests pass a good indication that our specification can be applied with refactoring tools in the future.
Apparently, the code completion framework is sound and complete with respect to our encoding of access modifiers.
Accessibility errors introduced by a refactoring can be fixed by generating proposals for that position,
and using the ordering from~\cref{subsec:policy-order} to pick the most restrictive one.

\subsection{Threats to Validity}

In~\cref{sec:setup}, we briefly mentioned that the specification as presented in the paper did not model the interaction between shadowing and accessibility correctly.
Doing so would require a full path order, instead of ordering paths by a lexicographical order on labels.
\AppendixRef{subsec:full-path-order}{A.1} explains how we think that could be done.
However, Statix does not support full path orders.
To work around that, we emulated this behavior using a few helper predicates.
Our test suite gives confidence we modeled it correctly, but we did not prove that the specification in Statix and the full path order are semantically equivalent.

Finally, we might have modeled incorrect/unspecified behavior if the reference compilers were incorrect.
Examples such as~\cref{fig:diff-priv-sub} were derived from actual compiler behavior.
However, we could not find our interpretation of the implementation behavior explicitly specified in the JLS~\cite[§6.6.1]{JLS8}.

\section{Related Work}

In this section, we discuss previous work related to access modifiers and scope graphs.

\subsection{Access Modifier Semantics and Implementations}

The origin of access modifiers dates back to at least Simula 67,
which around 1972 introduced \texttt{protected} and \texttt{hidden} access modifiers~\cite[\S8]{Black13-1}
(the latter being equivalent to our \texttt{private}).
Later, languages such as Java and \CPP incorporated these keywords, making them well-known and often used.
Design principles and patterns~\cite{GammaHJV93} using these keywords were developed,
making contemporary software development heavily reliant on accessibility features the programming language provides.

Giurca and Savulea~(2004)~\cite{GiurcaS04} apply object-oriented notions of \texttt{public}, \texttt{protected} and \texttt{private} to logic programs,
with the purpose of better knowledge distribution and run time optimization.
Moreover, Apel et al.~\cite{ApelLKKL09,ApelKLKKL12} introduce access modifiers in feature-oriented programming.
Where we define accessibility for module nesting and class inheritance, they add the `feature refinement' dimension to this.
In particular, the \texttt{feature} keyword restricts access to the `current feature' only (comparable to \texttt{private} in the class inheritance dimension),
the \texttt{subsequent} keyword grants access to the current feature and later refinements (comparable to \texttt{protected}), and
the \texttt{program} modifier allows access from any position (similar to \texttt{public}).
In our terminology, their model supports `conjunctive' combination of the class and feature dimensions.
As~\cref{sec:combining-subclass-module} shows that combining the module and class dimensions conjunctively is straightforward,
we expect that integrating their work in our model will not pose major challenges (apart from a combinatorial explosion of policies).

\subparagraph*{Semantics.}

As access modifiers mainly originated from practical needs, it is not very surprising that little attention to them was paid from a more theoretical perspective.
A few attempts to create a more formal account have been performed, however.
In 1998, Yang~\cite{Yang01} presented a formalization of Java access modifiers using attribute grammars.
At that time, attribute grammars still lacked several convenience features, such as default attributes~\cite{WykMBK02} and collection attributes~\cite{10.1109-SCAM.2007.13}.
For that reason, all members must be propagated explicitly to the scopes where they are accessible,
which makes the specification rather verbose.
Additionally, since fields and methods are not treated equally (shadowing vs.\@ overriding), they are treated separately, doubling the specification size.
In contrast, we specify the propagation of members queries in scope graphs, which is more concise.
The additional requirements on methods (not explicitly discussed), can be handled at the definition site.
Furthermore, we cover more features than just the Java ones.
Fharkani et al.~\cite{FarkhaniRT08} present a generalized model of accessibility,
where accessibility is modeled as a set of rules granting access of a member to another member (similar to Eiffel/\texttt{friends} in \CPP).
In addition, rules can \emph{deny} access to the named member, or apply to all members except the named ones.

\subparagraph*{Tools.}

Steimann et al.~\cite{SteimannT09} observe that disregarding accessibility can result in a lot of subtle mistakes.
For example, a method may silently fail to override another method when it is moved to a different package,
which results in different dynamic dispatch.
To capture these errors, they present nine constraint generation rules that model the accessibility semantics of Java.
Refactoring tools can use these constraints to detect where the accessibility level of a member must be elevated.
This work was incorporated in the JRRT refactoring tool~\cite{SchaferTST12}, which was evaluated on a large number of real-world Java projects, showing the accuracy of their implementation.
While their work also covers overriding-specific constraints, which our specification treats rather superficially,
we think our model is more comprehensible, and also gives insight in the differences between languages.
Moreover, their work is applied in real refactoring implementations, while the quest for Statix-based refactorings is still ongoing.
Meanwhile, a similar approach was applied to Eiffel accessibility~\cite[\S6.3]{SteimannKP11}.

While these tools \emph{elevate} accessibility if needed, a different line of research aims to \emph{restrict} accessibility if possible~\cite{BouillonGS08,Muller10,ZollerS12-0}.
The purpose of these tools is to detect access modifiers that are more lenient than needed, and restrict those.
This is claimed to improve readability, enable optimizations, and increase modularity~\cite{Muller10}.
The exact underlying model is not the topic of these publications, and hence remains unclear.
Despite that, the tools appear to be useful in practice.
Zoller and Schmolitzky mention some challenges in porting their tool to other object-oriented languages~\cite[V.B]{ZollerS12-0}.
A language-parametric model such as ours helps in that regard by
(1) making differences between languages explicit, and
(2) make implementations of these (kind of) tools language-parametric.

\subsection{Scope Graphs}

Scope Graphs (\cref{sec:scope-graphs}) have been introduced by Neron et al.~\cite{NeronTVW15}, and later refined by Van Antwerpen et al.~\cite{AntwerpenPRV18} and Rouvoet et al.~\cite{RouvoetAPKV20}.
In order to bridge the gap between language specification and implementation, scope graphs have been embedded into the NaBL2 constraint language~\cite{AntwerpenNTVW16}.
Later, the Statix logic language was introduced~\cite{AntwerpenPRV18,RouvoetAPKV20}, which is more expressive than NaBL2.
Both languages allow specifying type checking as constraint programs, giving the language a declarative appeal, but also yielding an executable type checker.
Scope graphs are also available in a framework for concurrent and incremental type checkers~\cite{AntwerpenV21,ZwaanAV22} and an embedded DSL in Haskell~\cite{PoulsenZH23}.
Finally, Statix specifications have been used for language-parametric code completion~\cite{PelsmaekerAPV22} and refactorings~\cite{Misteli21,Gugten22}.
Zwaan and Van Antwerpen provide a detailed overview of the development history of scope graphs, their embeddings in type system specification DSLs, and their applications~\cite{ZwaanA23}.

\section{Conclusion}

Access modifiers occur in many real-world languages.
To implement high-quality tooling for these languages, a good understanding of access modifiers is required.
In this paper, we presented a model for access validation based on scope graphs.
Our model covers the most important accessibility features in contemporary languages, including module accessibility, and inheritance accessibility, both on declarations and extends-clauses.
Variations between different languages, both in supported features and their semantics, are made explicit in our model.
Our specification is quite declarative, partly because scope graphs abstract over low-level name resolution and scoping details.
Our model was validated using an extensive test suite, using Java, \CSharp, and Rust compilers as oracles.
This test suite was also used to show that we can synthesize access modifiers accurately using previous work on code completion~\cite{PelsmaekerAPV22}.

Our main motivation for this work is twofold.
First, we aim to provide a `language-transcendent' model for accessibility that enables comparison of different languages regarding accessibility.
To this end we identify and formalize differences in the semantics of several access modifiers.
In addition, we formulate soundness theorems of several access modifiers, and prove them.
As such, we consider our specification accurate enough to serve as a reference for future tool implementations.
Second, we aim to use our model in language-parametric refactorings, ensuring they respect accessibility properly.
As these refactoring tools are still in development, actual validation of this application is still future work.

\bibliography{local,bibliography}
\bibliographystyle{plainurl}

\iftoggle{extended}{
  \appendix
  \AddToHook{cmd/section/before}{\clearpage}

\newcommand{\ctx}{\mathcal{C}}
\newcommand\inCtx[2][]{\ctx#1{\left[#2\right]}}


\theoremstyle{claimstyle}
\newtheorem*{proofsketchbase}{Proof Sketch}

\newenvironment{proofsketch}{%
  \begin{proofsketchbase}%
}{%
  \hfill\ensuremath{\vartriangleleft}
  \end{proofsketchbase}%
}

\theoremstyle{claimstyle}
\newtheorem*{proofcase}{Case}

\newcommand{\casePar}[2]{
  \proofsubparagraph{Case {\color{black} $\mathrm{\protect#1 <_A \protect#2}$}.}
}
\newcommand{\ordCaseProof}[4][]{
  \ordCaseProofNoPar[#1]{#2}{#3}{#4}
}
\newcommand{\ordCaseProofNoPar}[4][]{
  \begin{proofcase}
    $\ifstrempty{#1}{}{
      #1
      \Rightarrow
    }
    \forall \SG, s, p.\ (\premMAcc{\SG}{s}{p}{#2}) \Rightarrow (\premMAcc{\SG}{s}{p}{#3})$
  \end{proofcase}
  \begin{claimproof}
    #4
  \end{claimproof}
}

  \section{AML Specification}
\label{sec:aml-full-spec}

In this section, we present a specification of AML~\cref{sec:setup}, with the following restrictions:
\begin{itemize}
  \item We do not cover the `subclass'-inheritance (\cref{sec:inheritance-restriction}). That is, only public inheritance (Java/\CSharp-like) is supported. The rules that validate paths are included for reference.
  \item For the variation points regarding \lit*{protected}/\lit*{private} access modifiers (\cref{sec:subclass}), we chose the variant that corresponds to \CSharp.
  \item For the module-accessibility, we chose the variant that allows the modifier to refer to any module (not only enclosing ones), but does not expose members to sub-modules of the named modules.
\end{itemize}
For this specification, we prove access modifier equivalences (\cref{sec:equivalence-proofs}), strict partial order (\cref{sec:preorder-proofs}), and characterizing predicates (\cref{sec:characterizing-predicates}).
\Cref{fig:aml-full-syntax} gives the full syntax of AML.
\Cref{fig:aml-full-modules,fig:aml-full-classes,fig:aml-full-expr,fig:aml-full-access-modifiers,fig:aml-full-access-policy} give the typing rules for AML, grouped by topic.

\begin{figure}[!t]
  \setlength{\grammarindent}{5em}
    \begin{minipage}{0.63\textwidth}
      \begin{grammar}
        <prog> ::= <mod>$^{*}$

        <mod>  ::= "module" <x> "\{" <md>$^{*}$ "\}"

        <md>   ::= <mod> | "import" <x> | <cls>

        <cls>  ::= "class" <x> (":" "public" <x>)$^{?}$ "\{" <cd>$^{*}$ "\}"

        <cd>   ::= <acc> "var" <x> "=" <e> | <cls>

        <acc>  ::= "public" | "internal" "(" <x>$^{*}$ ")"
              \alt "protected" | "protected internal" "(" <x>$^{*}$ ")"
              \alt "private" | "private protected" "(" <x>$^{*}$ ")" 

        <e>    ::= <n> | <x> | "new" <x> "()" | <e> "." <x>
      \end{grammar}
    \end{minipage}%
    \begin{minipage}{0.36\textwidth}
      \begin{grammar}
        <l>    ::= \lblLEX | \lblIMP | \lblEXT
              \alt \lblMOD | \lblCLS | \lblVAR
              \alt \lblTHISMOD | \lblTHIS

        <d>    ::= "mod" <x> ":" <s>
              \alt "cls" <x> ":" <s>
              \alt "var" <x> ":" <T> "@" <A>
              \alt <s>

        <T>    ::= "int" | "inst" <s>

        <A>    ::= "PUB" | "MOD" $S$ 
              \alt "PRT" | "SMD" $S$
              \alt "PRV" | "SMC" $S$ 
      \end{grammar}%
    \end{minipage}

    \caption{Syntax of AML.}
  \label{fig:aml-full-syntax}
\end{figure}

\begin{figure}[!t]
  \def\widthl{0.46\textwidth}
  \def\widthr{0.44\textwidth}
  \begin{minipage}[t]{\widthl}
    \figuresection[\fbox{$\premProgOk{\SG}{\mathit{mod}^\ast}$}]{Program}
  \end{minipage}%
  \hfill
  \begin{minipage}[t]{\widthr}
    \figuresection[\fbox{$\premModOk{\SG}{s}{\mathit{mod}}$}]{Module}
  \end{minipage}

  \begin{minipage}[b]{\widthl}
    \begin{inferrule}
      \inference[\rlabel{prog-ok}{Prog-Ok}]{
        s_0 \in \SG
        &
        s_0 \scopeedget[\lblTHISMOD] s_0
        \\
        \left[
          \premModOk{\SG}{s_0}{m}
        \right]_{m \in \setVar{\mathit{mod}}}
      }{
        \premProgOk{\SG}{\setVar{\mathit{mod}}}
      }%
    \end{inferrule}
  \end{minipage}%
  \hfill
  \begin{minipage}[b]{\widthr}
    \begin{inferrule}
      \inference[\rlabel{mod-ok}{Mod-Ok}]{
        s_m \in \SG
        &
        s_m \scopeedget[\lblTHISMOD] s_m
        \\
        \left(s \typeedget[\lblMOD] \moddecl{x}{s_m}\right) \in \SG
        \\
        \left[
          \premModDeclOk{\SG}{s_m}{m}
        \right]_{m \in \setVar{\mathit{md}}}
      }{
        \premModOk{\SG}{s}{\mathbf{module} \mathbin{x} \mathbf{\{} \mathbin{\setVar{\mathit{md}}} \mathbf{\}}}
      }%
    \end{inferrule}
  \end{minipage}
  \vspace{1em}
  \figuresection[\fbox{$\premModDeclOk{\SG}{s}{\mathit{md}}$}]{Module Members}
  \begin{inferrule}
    \inference[\rlabel{imp-ok}{Imp-Ok}]{
      \premQMod{\SG}{s}{x}{s_m}
      &
      \left(s \scopeedget[\lblIMP] s_m\right) \in \SG
    }{
      \premModDeclOk{\SG}{s}{\mathbf{import} \mathbin{x}}
    }
    \qquad
    \inference{
      \premModOk{\SG}{s}{\mathit{mod}}
    }{
      \premModDeclOk{\SG}{s}{\mathit{mod}}
    }
    \qquad
    \inference{
      \premClassOk{\SG}{s}{\mathit{cls}}
    }{
      \premModDeclOk{\SG}{s}{\mathit{cls}}
    }
  \end{inferrule}

  \figuresection[\fbox{$\premQMod{\SG}{s}{x}{s}$}]{Module Resolution}
  \begin{inferrule}
    \inference[\rlabel{q-cls}{Q-Cls}]{
      %
      %
      \query{\SG}{s}{\reclos{\lblLEX}\lblMOD}{\mathsf{isMod}_x}{$\lblMOD < \lblLEX$}{\setOf{ \langle p, \moddecl{x}{s_m} \rangle }}
    }{
      \premQMod{\SG}{s}{x}{s_m}
    }
  \end{inferrule}

  \caption{Typing AML Modules.}
  \label{fig:aml-full-modules}
\end{figure}

\begin{figure}[!t]
  \figuresection[\fbox{$\premClassOk{\SG}{s}{\mathit{cls}}$}]{Classes}
  \begin{inferrule}
    \inference[\rlabel{cls-ok}{Cls-Ok}]{
      s_c \in \SG
      &
      s_c \scopeedget[\lblTHIS] s_c
      &
      \left(s \typeedget[\lblCLS] \clsdecl{x}{s_c}\right) \in \SG
      &
      \premExtOk{\SG}{s}{\mathit{ext}^{?}}
      &
      \left[
        \premClassDeclOk{\SG}{s_c}{\mathit{cd}}
      \right]_{\mathit{cd} \in \setVar{\mathit{cd}}}
    }{
      \premClassOk{\SG}{s}{\mathbf{class} \mathbin{x} {} \! {} \mathbin{\mathit{ext}^{?}} \mathbf{\{} \mathbin{\setVar{\mathit{cd}}}  \mathbf{\}}}
    }
  \end{inferrule}
  \hspace{0.35em}
  \figuresection[\fbox{$\premExtOk{\SG}{s}{\mathit{ext}^{?}}$}]{Extension}
  \begin{inferrule}
    \inference[\rlabel{no-ext}{No-Ext}]{
    }{
      \premExtOk{\SG}{s_c}{}
    }
    \qquad
    \inference[\rlabel{ext}{Ext}]{
      \premQCls{\SG}{s_c}{x}{s_p}
      &
      \left(
        s_c \scopeedget[\lblEXT] s_p
      \right) \in \SG
    }{
      \premExtOk{\SG}{s_c}{\mathbf{:} \mathbin{\mathbf{public}} x}
    }
  \end{inferrule}

  \hspace{0.35em}
  \figuresection[\fbox{$\premClassDeclOk{\SG}{s}{\mathit{cd}}$}]{Class Members}
  \begin{inferrule}
    \inference[\rlabel{d-def}{D-Def}]{
      \premExp{\SG}{s}{e}{T}
      \quad
      \premAcc{\SG}{s}{\mathit{acc}}{A}
      \quad
      s \typeedget[\lblVAR] (\defdecl{x}{T}{A}) \in \SG
    }{
      \premClassDeclOk{\SG}{s}{\mathit{acc} \mathrel{\lit*{var}} x \mathrel{\lit*{=}} e}
    }%
    \qquad
    \inference{
      \premClassOk{\SG}{s}{\mathit{cls}}
    }{
      \premClassDeclOk{\SG}{s}{\mathit{cls}}
    }%
  \end{inferrule}

  \figuresection[\fbox{$\premQCls{\SG}{s}{x}{s}$}]{Class Resolution}
  \begin{inferrule}
    \inference[\rlabel{q-cls}{Q-Cls}]{
      %
      %
      \query{\SG}{s}{\reclos{\lblLEX}\reopt{\lblIMP}\lblCLS}{\mathsf{isCls}_x}{$\lblCLS < \lblIMP < \lblLEX$}{\setOf{ \langle p, \clsdecl{x}{s_c} \rangle }}
    }{
      \premQCls{\SG}{s}{x}{s_c}
    }
  \end{inferrule}

  \caption{Typing AML Classes.}
  \label{fig:aml-full-classes}
\end{figure}

\begin{figure}[!t]
  \figuresection[\fbox{$\premExp{\SG}{s}{e}{T}$}]{Expresions}
  \begin{inferrule}
    \inference[\rlabel{t-int}{T-Int}]{
      {}
    }{
      \premExp{\SG}{s}{n}{\tyINT}
    }
    \qquad
    \inference{
      \premQCls{\SG}{s}{x}{s_c}
    }{
      \premExp{\SG}{s}{\lit*{new}\ x\ \lit*{()}}{\tyINST{s_c}}
    }
  \end{inferrule}
  \begin{inferrule}
    \inference[\rlabel{t-var}{T-Var}]{
      %
      %
      \query{\SG}{s}{\reLEX}{\mathsf{isVar}_x}{$<_p$}{\setOf{ \langle p, \defdecl{x}{T}{A} \rangle }}
      \\
      \premMAcc{\SG}{s}{p}{A}
      &
      \premMPth{\SG}{s}{p}
    }{
      \premExp{\SG}{s}{x}{T}
    }
  \end{inferrule}
  \begin{inferrule}
    \inference[\rlabel{t-fld}{T-Fld}]{
      %
      %
      \premExp{\SG}{s}{e}{\tyINST{s_c}}
      \\
      \query{\SG}{s_c}{\reMEM}{\mathsf{isVar}_x}{\lblVAR < \lblEXT}{\setOf{ \langle p, \defdecl{x}{T}{A} \rangle }}
      \\
      \premMAcc{\SG}{s}{p}{A}
      &
      \premMPth{\SG}{s}{p}
    }{
      \premExp{\SG}{s}{e.x}{T}
    }
  \end{inferrule}

  \caption{Typing AML Expressions. In rule~\rref{t-var}, $<_p$ represents the path order that takes accessibility in account properly. It is further discussed in~\cref{subsec:full-path-order}.}
  \label{fig:aml-full-expr}
\end{figure}

\begin{figure}[!t]
  \figuresection[\fbox{$\premAcc{\SG}{s}{\mathit{acc}}{A}$}]{Access Modifiers}

  \begin{inferrule}
    \inference[\rlabel{a-pub}{A-Pub}]{
    }{
      \premAcc{\SG}{s}{\lit*{public}}{\PUB}
    }
    \quad
    \inference[\rlabel{a-int}{A-Int}]{
      S = \left\{
        s'  \mathbin{\Big|}
        x_i \in \setVar{x}_{0 \ldots n}, 
        s \vdash_{\SG} x_i \resolvemod s'
      \right\}
    }{
      \premAcc{\SG}{s}{\lit*{internal(} \setVar{x}_{0 \ldots n} \lit*{)}}{\MODof{S}}
    }
  \end{inferrule}

  \begin{inferrule}
    \inference[\rlabel{a-prot}{A-Prot}]{
    }{
      \premAcc{\SG}{s}{\lit*{protected}}{\PRT}
    }
    \quad
    \inference[\rlabel{a-pint}{A-PInt}]{
      %
      S = \left\{
        s'  \mathbin{\Big|}
        x_i \in \setVar{x}_{0 \ldots n}, 
        s \vdash_{\SG} x_i \resolvemod s'
      \right\}
    }{
      \premAcc{\SG}{s}{\lit*{protected internal(} \setVar{x}_{0 \ldots n} \lit*{)}}{\SMDof{S}}
    }
  \end{inferrule}

  \begin{inferrule}
    \inference[\rlabel{a-priv}{A-Priv}]{
    }{
      \premAcc{\SG}{s}{\lit*{private}}{\PRV}
    }
    \quad
    \inference[\rlabel{a-pprot}{A-PProt}]{
      %
      S = \left\{
        s'  \mathbin{\Big|}
        x_i \in \setVar{x}_{0 \ldots n}, 
        s \vdash_{\SG} x_i \resolvemod s'
      \right\}
    }{
      \premAcc{\SG}{s}{\lit*{private protected(} \setVar{x}_{0 \ldots n} \lit*{)}}{\SMCof{S}}
    }
  \end{inferrule}

  \caption{Translating Access Modifier Keywords to Access Policies.}
  \label{fig:aml-full-access-modifiers}
\end{figure}

\begin{figure}[!t]
  \figuresection[\fbox{$\premMAcc{\SG}{s}{p}{A}$}]{Access Policy}

  \begin{inferrule}
    \inference[\rlabel{ap-pub}{AP-Pub}]{
    }{
      \premMAcc{\SG}{s}{p}{\PUB}
    }
    \quad
    \inference[\rlabel{ap-prot}{AP-Prot}]{
      \premEncC{\SG}{s}{S_C}
      &
      s_c \in S_C
      &
      s_c \in \mathsf{scopes}(p)
    }{
      \premMAcc{\SG}{s}{p}{\PRT}
    }
  \end{inferrule}

  \begin{inferrule}
    \inference[\rlabel{ap-int}{AP-Int}]{
      \premEncMI{\SG}{s}{s_m}
      &
      s_m \in S
    }{
      \premMAcc{\SG}{s}{p}{\MODof{S}}
    }
    \quad
    \inference[\rlabel{ap-priv}{AP-Priv}]{
      \premEncC{\SG}{s}{S_C}
      &
      \mathsf{tgt}(p) \in S_C
    }{
      \premMAcc{\SG}{s}{p}{A}
    }
  \end{inferrule}

  \begin{inferrule}
    \inference[\rlabel{ap-smd-prot}{AP-SMD-Prot}]{
      \premMAcc{\SG}{s}{p}{\PRT}
    }{
      \premMAcc{\SG}{s}{p}{\SMDof{S}}
    }
    \qquad
    \inference[\rlabel{ap-smd-mod}{AP-SMD-Mod}]{
      \premMAcc{\SG}{s}{p}{\MODof{S}}
    }{
      \premMAcc{\SG}{s}{p}{\SMDof{S}}
    }
  \end{inferrule}

  \begin{inferrule}
    \inference[\rlabel{ap-smc}{AP-SMC}]{
      \premMAcc{\SG}{s}{p}{\MODof{S}}
      &
      \premMAcc{\SG}{s}{p}{\PRT}
    }{
      \premMAcc{\SG}{s}{p}{\SMCof{S}}
    }
  \end{inferrule}

  \figuresection[\fbox{$\premMPth{\SG}{s}{p}$}]{Path Access}
  \begin{inferrule}
    \inference[\rlabel{p-pub}{P-Pub}]{
      \premMatchRE{p}{\reclos{\lblLEX}\reclos{\lblEXT}}
    }{
      \premMPth{\SG}{s}{p}
    }
    \qquad
    \inference[\rlabel{p-priv-prot}{P-Priv-Prot}]{
      \premEncC{\SG}{s}{S_C}
      &
      s_c \in S_C
      &
      \splitAt{s_c}{p} = \langle p_1, p_2 \rangle
      \\
      \premMatchRE{p_1}{\reclos{\lblLEX}\reclos{\lblEXT}}
      &
      \premMatchRE{p_2}{\reopt{\lblEXTPRIV}\reclos{(\lblEXT|\lblEXTPROT)}}
    }{
      \premMPth{\SG}{s}{p}
    }
  \end{inferrule}

  \figuresection[\fbox{$\premEncM{\SG}{s}{S}$}]{Enclosing Modules}
  \begin{inferrule}
    \inference[\rlabel{enc-mi}{Enc-MI}]{
      \query{\SG}{s}{\reclos{\lblLEX}\lblTHISMOD}{\top}{\lblTHISMOD < \lblLEX}{\setOf{\langle p, s_m \rangle}}
    }{
      \premEncMI{\SG}{s}{s_m}
    }
  \end{inferrule}

  \figuresection[\fbox{$\premEncC{\SG}{s}{S}$}]{Enclosing Classes}
  \begin{inferrule}
    \inference[\rlabel{enc-c}{Enc-C}]{
      \query{\SG}{s}{\reclos{\lblLEX}\lblTHIS}{\top}{}{R}
      \quad
      S_C = \setOf{ s_c \alt \langle p_c, s_c \rangle \in R }
    }{
      \premEncC{\SG}{s}{S_C}
    }
  \end{inferrule}

  \caption{Checking Access Permissions.}
  \label{fig:aml-full-access-policy}
\end{figure}

\clearpage
\subsection{Full Path Order}
\label{subsec:full-path-order}

In this section, we describe how to model the interaction between shadowing and accessibility, such as implemented in Java, correctly.
In particular, we should model the following behavior:
\begin{itemize}
  \item Members declared in subclasses shadow fields from superclasses/enclosing scopes, regardless of their accessibility.
  \item Members declared in enclosing classes are shadowed by \emph{accessible} members from superclasses.
  \item Members declared in enclosing classes shadow \emph{inaccessible} members from superclasses.
\end{itemize}
We model this using the order shown in~\cref{fig:full-path-order}.
Full path orders are \emph{preorders} over path-data pairs ($\langle p, d \rangle$).
One can consider query resolution using full path orders as computing the \emph{minimum} over the query results set without an user.
For details, we refer to~\cite[§3.1]{RouvoetAPKV20}.

\begin{figure}
  \figuresection[\fbox{$\premPathOrder{\SG}{s}{\langle p, d \rangle}{\langle p, d \rangle}$}]{Full Path Order}
  \begin{inferrule}
    \inference[\rlabel{ord-local}{PO-Loc}]{
      {}
    }{
      \premPathOrder{\SG}{s_r}{\langle s, d \rangle}{\langle s \termedget[l] p', d' \rangle}
    }
    \qquad
    \inference[\rlabel{ord-prefix}{PO-Pref}]{
      \premPathOrder{\SG}{s_r}{\langle p, d \rangle}{\langle p', d' \rangle}
    }{
      \premPathOrder{\SG}{s_r}{\langle s \termedget[l] p, d \rangle}{\langle s \termedget[l] p', d' \rangle}
    }
  \end{inferrule}
  \begin{inferrule}
    \inference[\rlabel{ord-ext}{PO-Acc}]{
      d = \defdecl{x}{T}{A}
      &
      \hll{\premMAcc{\SG}{s_r}{s \termedget[l] p}{A}}
      &
      \hll{\premMPth{\SG}{s_r}{s \termedget[l] p}}
      &
      l \in \setOf{\lblEXT, \lblEXTPROT, \lblEXTPRIV}
    }{
      \premPathOrder{\SG}{s_r}{\langle s \termedget[l] p, d \rangle}{\langle s \termedget[\lblLEX] p', d' \rangle }
    }
  \end{inferrule}
  \begin{inferrule}
    \inference[\rlabel{ord-lex-decl}{PO-NoAcc}]{
      d' = \defdecl{x}{T'}{A'}
      &
      \hll{s_r \nvdash_{\SG} s \termedget[l] p' \mathbin{!} A'}
      &
      l \in \setOf{\lblEXT, \lblEXTPROT, \lblEXTPRIV}
    }{
      \premPathOrder{\SG}{s_r}{\langle s \termedget[\lblLEX] p, d\rangle}{\langle s \termedget[l] p', d' \rangle}
    }
  \end{inferrule}
  \begin{inferrule}
    \inference[\rlabel{ord-lex-path}{PO-NoPath}]{
      d' = \defdecl{x}{T'}{A'}
      &
      \hll{s_r \nvdash_{\SG} s \termedget[l] p' \mathbin{\upexclaim} {}}
      &
      l \in \setOf{\lblEXT, \lblEXTPROT, \lblEXTPRIV}
    }{
      \premPathOrder{\SG}{s_r}{\langle s \termedget[\lblLEX] p, d\rangle}{\langle s \termedget[l] p', d' \rangle}
    }
  \end{inferrule}

  \caption{Path order $<_p$ capturing the interaction between shadowing and accessibility in Java.}
  \label{fig:full-path-order}
\end{figure}

The definition of this order is parameterized by a scope~$s$, which is the scope in which the reference occurs.
Rule~\rref{ord-local} states that a path that ends in the current scope~$s$ has priority over a path that traverses some edges after~$s$.
This corresponds to the first criterion: members in sub-classes ($s$) shadow members from superclasses or enclosing scopes.
Next, rule~\rref{ord-prefix} indicates that paths from the same scope that traverse an edge with the same label have an ordering \emph{iff} their suffixes have an ordering.
In this way, paths with a shared prefix can be ordered by applying~\rref{ord-prefix} until the paths diverge.
Both of these rules are a standard, and were implied in the label order we used previously~\cite[Fig.\@ 1]{AntwerpenPRV18}.

The limitation of label orders however is that only either $\lblLEX < \lblEXT$ can be chosen, or the other way around.
However, based on the accessibility of the inherited declaration, we want to chose either one or the other.
This is encoded in the last rules.
Rule~\rref{ord-ext} states that inherited members (through either of the three extends edges) are preferred over the path that traversed a $\lblLEX$-edge if the resulting declaration is accessible, and the path itself did not hide the declaration.
Conversely, if the declaration is inaccessible, or hidden by a path, rules~\rref{ord-lex-decl} and~\rref{ord-lex-path} indicate the lexically enclosing path has priority instead.
Together, these rules model all the constraints we set out above.

  \section{Equivalence Proofs}
\label{sec:equivalence-proofs}

In this section, we prove the equivalences of accessibility policies defined in~\cref{subsec:equivalences}.
We will use the specification as presented in~\cref{sec:aml-full-spec}.
For completeness, we repeat our definition of equivalence:
\accesspolicyequiv*
\noindent
For the of the equivalence proofs, the following lemmas will be used.

\begin{lemma}
  \label{lem:equiv-sym}
  \[
    \forall A, A'.\ (A \equiv A') \Rightarrow (A' \equiv A)
  \]
\end{lemma}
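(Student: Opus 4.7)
The plan is to unfold Definition~\ref{def:acc-equiv} on both sides and invoke the symmetry of the biconditional connective. Concretely, I would proceed as follows.

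Fix arbitrary access policies $A$ and $A'$ and assume $A \equiv A'$. By Definition~\ref{def:acc-equiv}, this assumption unfolds to the statement $\forall \SG, s, p.\ (\premMAcc{\SG}{s}{p}{A}) \Leftrightarrow (\premMAcc{\SG}{s}{p}{A'})$. To conclude $A' \equiv A$, I need to establish $\forall \SG, s, p.\ (\premMAcc{\SG}{s}{p}{A'}) \Leftrightarrow (\premMAcc{\SG}{s}{p}{A})$. Fix $\SG$, $s$, $p$; instantiate the hypothesis at these arguments to get the biconditional $(\premMAcc{\SG}{s}{p}{A}) \Leftrightarrow (\premMAcc{\SG}{s}{p}{A'})$, and then apply the meta-logical fact that $P \Leftrightarrow Q$ implies $Q \Leftrightarrow P$ (i.e., swap the two implications constituting the biconditional). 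Folding the result back through Definition~\ref{def:acc-equiv} yields $A' \equiv A$.

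There is no real obstacle here: the lemma is a direct consequence of the symmetry of $\Leftrightarrow$, and the proof contains no case analysis on the policy constructors or the underlying access-policy rules in Figure~\ref{fig:aml-full-access-policy}. The only thing to be careful about is to write the proof at the right level of granularity so that it is useful as a reusable lemma in the subsequent equivalence proofs of Appendix~\ref{sec:equivalence-proofs} (where symmetry will presumably be invoked to avoid proving both directions of each claimed equivalence, such as $\PRT \equiv \SMDof{\emptyset}$ and $\PRV \equiv \SMCof{\emptyset} \equiv \MODof{\emptyset}$).
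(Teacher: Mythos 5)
Your proof is correct and matches the paper's own argument, which simply appeals to the symmetry of bi-implication; you have merely spelled out the unfolding and refolding of Definition~\ref{def:acc-equiv} in more detail. No issues.
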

This lemma proves that the relation $\equiv$ is symmetric.

\begin{proof}
  Follows from symmetry of bi-implication.
\end{proof}

\begin{lemma}
  \label{lem:equiv-trans}
  \[
    \forall A, A', A''.\ (A \equiv A') \land (A' \equiv A'') \Rightarrow (A \equiv A'')
  \]
\end{lemma}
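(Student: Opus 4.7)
The plan is to prove transitivity of $\equiv$ by unfolding the definition and appealing to transitivity of bi-implication, exactly mirroring the one-line proof of \cref{lem:equiv-sym}.

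Concretely, I would assume $A \equiv A'$ and $A' \equiv A''$, which by \cref{def:acc-equiv} gives me, for every $\SG$, $s$, $p$, both bi-implications $(\premMAcc{\SG}{s}{p}{A}) \Leftrightarrow (\premMAcc{\SG}{s}{p}{A'})$ and $(\premMAcc{\SG}{s}{p}{A'}) \Leftrightarrow (\premMAcc{\SG}{s}{p}{A''})$. Fixing arbitrary $\SG$, $s$, $p$, I would then chain the two bi-implications to obtain $(\premMAcc{\SG}{s}{p}{A}) \Leftrightarrow (\premMAcc{\SG}{s}{p}{A''})$. Generalizing over $\SG$, $s$, $p$ and applying \cref{def:acc-equiv} in the reverse direction yields $A \equiv A''$.

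There is no real obstacle here: the content of the lemma is entirely inherited from transitivity of $\Leftrightarrow$ in the metalogic, and the definition of $\equiv$ is a pure universal quantification over bi-implications on the judgment $\premMAcc{\SG}{s}{p}{\cdot}$. So the only thing to be careful about is that the quantifier over $\SG$, $s$, $p$ is handled uniformly, i.e., the same witnesses are used in both hypotheses when composing the bi-implications. This is automatic once we fix arbitrary $\SG$, $s$, $p$ at the start of the argument. Together with \cref{lem:equiv-sym} and the obvious reflexivity (which follows from reflexivity of $\Leftrightarrow$), this establishes that $\equiv$ is an equivalence relation on access policies, justifying its use in the equivalence claims $\PRT \equiv \SMDof{\emptyset}$ and $\PRV \equiv \SMCof{\emptyset} \equiv \MODof{\emptyset}$ from \cref{subsec:equivalences}.
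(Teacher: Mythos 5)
Your proposal is correct and matches the paper's proof, which likewise discharges the lemma by appealing to transitivity of bi-implication after unfolding \cref{def:acc-equiv}; you merely spell out the quantifier handling that the paper leaves implicit.
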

This lemma proves that the relation $\equiv$ is transitive.

\begin{proof}
  Follows from transitivity of (bi)-implication.
\end{proof}

\begin{lemma}
  \label{lem:prv-to-any}
  \[
    \forall A, s, \SG.\ (\premMAcc{\SG}{s}{p}{\PRV}) \Rightarrow (\premMAcc{\SG}{s}{p}{A})
  \]
\end{lemma}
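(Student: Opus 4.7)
The plan is to proceed by straightforward inversion on the derivation of $\premMAcc{\SG}{s}{p}{\PRV}$. Inspecting the access-policy rules in~\cref{fig:aml-full-access-policy}, the conclusions of \rref{ap-pub}, \rref{ap-prot}, \rref{ap-int}, \rref{ap-smd-prot}, \rref{ap-smd-mod}, and \rref{ap-smc} each fix a particular top-level policy constructor ($\PUB$, $\PRT$, $\MODof{S}$, $\SMDof{S}$, or $\SMCof{S}$), none of which unifies with $\PRV$. Hence the only rule whose conclusion unifies with $\premMAcc{\SG}{s}{p}{\PRV}$ is \rref{ap-priv}.

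From this inversion I would extract the two premises of \rref{ap-priv}, namely $\premEncC{\SG}{s}{S_C}$ for some set $S_C$ and $\mathsf{tgt}(p) \in S_C$. Crucially, these premises make no mention of the policy $A$ appearing in the conclusion of \rref{ap-priv}: the rule is schematic in $A$. Therefore, for an arbitrary policy $A$, I can reapply \rref{ap-priv} with the very same premises to derive $\premMAcc{\SG}{s}{p}{A}$, yielding the desired implication.

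I do not anticipate any real obstacle: the argument is a one-step schematic re-application, and relies only on the fact that \rref{ap-priv} is intentionally polymorphic in the policy (as explicitly noted after its introduction in~\cref{subsec:private}). The only mild subtlety worth flagging in the write-up is that \PRV is not a ``dedicated'' rule conclusion in the system at all, so there is nothing else to check in the inversion step.
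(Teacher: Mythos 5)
Your proof is correct and takes essentially the same route as the paper's: the paper likewise observes that \textsc{AP-Priv} is the only rule whose conclusion matches $\PRV$ and that, being schematic in the policy $A$, it can be re-applied with the same premises to conclude $\premMAcc{\SG}{s}{p}{A}$ for arbitrary $A$ (the paper even flags this design choice in a footnote). Your write-up merely makes the inversion step more explicit than the paper's one-line argument.
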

This lemma proves that, if access is allowed for a path $p$ in scope $s$ in scope graph $\SG$ under access policy \PRV, then access is allowed by \emph{any} access policy $A$.

\begin{proof}
  This lemma is proven by implication elimination, and applying~\rref{ap-priv} on the goal, which yields a tautology.%
  \footnote{
    Note how our design choice of making the \rref{ap-priv} rule match on any access policy made this lemma trivial.
    If this rule would have matched the \PRV policy only, several cases would not be provable.
    For example, the $\MODof{S}$ case would reduce to proving $s \in S$ for any $S$, which is clearly impossible (\eg, $S = \emptyset$ is a counter-example).
  }
\end{proof}

\subparagraph*{Equivalences.}
Now, we can prove the first equivalence:

\begin{theorem}
  \label{thm:prt-equiv-smd}
  $\PRT \equiv \SMDof{\emptyset}$.
\end{theorem}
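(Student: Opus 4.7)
The plan is to unfold \cref{def:acc-equiv} and prove both directions of the bi-implication separately, applying case analysis on the last rule used in each derivation.

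For the forward direction, given $\premMAcc{\SG}{s}{p}{\PRT}$, I would apply rule \rref{ap-smd-prot} directly: its only premise is exactly $\premMAcc{\SG}{s}{p}{\PRT}$ and its conclusion is $\premMAcc{\SG}{s}{p}{\SMDof{S}}$ for any $S$, including $\emptyset$. So this direction is immediate.

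For the backward direction, given $\premMAcc{\SG}{s}{p}{\SMDof{\emptyset}}$, I would case-split on the last rule used. The candidates with $\SMDof{\cdot}$ as conclusion are \rref{ap-smd-prot} and \rref{ap-smd-mod}; additionally, the fallback \rref{ap-priv} can conclude $\premMAcc{\SG}{s}{p}{A}$ for \emph{any} policy $A$ and therefore must be treated as a third case. In the \rref{ap-smd-prot} case, the premise is literally the goal $\premMAcc{\SG}{s}{p}{\PRT}$, so we are done. In the \rref{ap-priv} case, we know $\mathsf{tgt}(p)$ is in the enclosing classes of $s$; we then re-apply \rref{ap-priv} instantiated at $A = \PRT$ to obtain the goal (alternatively, one can close this case via \cref{lem:prv-to-any}). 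In the \rref{ap-smd-mod} case, the premise is $\premMAcc{\SG}{s}{p}{\MODof{\emptyset}}$, and a nested case analysis on its derivation leaves only \rref{ap-int} or \rref{ap-priv}: the former requires $s_m \in \emptyset$ and is vacuous, while the latter reduces to the previous case and is handled identically.

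The main obstacle is making the case analysis genuinely exhaustive. The subtle point is that \rref{ap-priv} is schematic in the access policy, so it can produce $\SMDof{\emptyset}$ (and likewise $\MODof{\emptyset}$) as a conclusion even though these policies have their own dedicated rules; overlooking this would cause the proof to miss a branch. Once this is recognized, each case closes either by direct rule application or by the observation that $\MODof{\emptyset}$ admits no evidence through \rref{ap-int} because membership in the empty set is unsatisfiable.
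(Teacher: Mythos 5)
Your proof is correct and follows essentially the same route as the paper's: the forward direction by a direct application of \rref{ap-smd-prot}, and the backward direction by inversion into the three cases \rref{ap-smd-prot}, \rref{ap-smd-mod}, and the schematic fallback \rref{ap-priv}, the last closed via \cref{lem:prv-to-any}. If anything, you are slightly more careful than the paper in the \rref{ap-smd-mod} case: the paper asserts that the premise $\premMAcc{\SG}{s}{p}{\MODof{\emptyset}}$ is absurd outright, whereas it can in fact also be derived via the schematic \rref{ap-priv}; your nested inversion into \rref{ap-int} (vacuous, since membership in $\emptyset$ is unsatisfiable) and \rref{ap-priv} (reducing to the private fallback) handles this cleanly.
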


\begin{proof}
The statement to prove here is:
\begin{inferruleraw}
\forall \SG, s, p.\ (\premMAcc{\SG}{s}{p}{\PRT}) \Leftrightarrow (\premMAcc{\SG}{s}{p}{\SMDof{\emptyset}})
\end{inferruleraw}
The forward direction can be proven by implication elimination and application of the \rref{ap-smd-prot} rule (\cref{subfig:interaction-semantics}).
The backward direction can be proven by inversion on $\premMAcc{\SG}{s}{p}{\SMDof{\emptyset}}$.
This yields three cases:
\begin{itemize}
  \item $(\premMAcc{\SG}{s}{p}{\PRT}) \Rightarrow (\premMAcc{\SG}{s}{p}{\PRT})$, which is a tautology.
  \item $(\premMAcc{\SG}{s}{p}{\MODof{\emptyset}}) \Rightarrow (\premMAcc{\SG}{s}{p}{\PRT})$.
  We prove this by implication elimination and $\premMAcc{\SG}{s}{p}{\MODof{\emptyset}}$, which simplifies to $\bot$, as there exists no $s_c$ such that $s_c \in \emptyset$.
  \item The last case follows from the \rref{ap-priv} rule, which can be proven using~\cref{lem:prv-to-any}.
\end{itemize}
\vskip -\baselineskip
\end{proof}
Next, we prove the second equivalence:

\begin{theorem}
  \label{thm:prv-equiv-mod}
  $\PRV \equiv \MODof{\emptyset}$.
\end{theorem}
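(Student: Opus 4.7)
The plan is to mirror the structure of the proof of \cref{thm:prt-equiv-smd}, splitting the bi-implication into two directions and dispatching each by appealing to the inference rules for $\MOD$ and $\PRV$ together with the auxiliary \cref{lem:prv-to-any}.

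For the forward direction $\premMAcc{\SG}{s}{p}{\PRV} \Rightarrow \premMAcc{\SG}{s}{p}{\MODof{\emptyset}}$, I would simply apply \cref{lem:prv-to-any} with $A := \MODof{\emptyset}$. This works precisely because \rref{ap-priv} is parametric in the access policy, so the lemma already shows that $\PRV$-access is the weakest permitted access.

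For the backward direction $\premMAcc{\SG}{s}{p}{\MODof{\emptyset}} \Rightarrow \premMAcc{\SG}{s}{p}{\PRV}$, I would proceed by inversion on the derivation of $\premMAcc{\SG}{s}{p}{\MODof{\emptyset}}$. Inspecting the rules in \cref{fig:aml-full-access-policy}, exactly two rules can conclude a judgement of the form $\premMAcc{\SG}{s}{p}{\MODof{S}}$: \rref{ap-int} and \rref{ap-priv}. In the first case, the premises demand $s_m \in S$ with $S = \emptyset$, which is vacuously false, so this case is discharged. In the second case, the hypotheses $\premEncC{\SG}{s}{S_C}$ and $\mathsf{tgt}(p) \in S_C$ are obtained from inversion, and applying \rref{ap-priv} again with $A := \PRV$ yields the desired conclusion.

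I expect no real obstacle here: the argument is essentially the same as for \cref{thm:prt-equiv-smd}, with the \rref{ap-smd-prot} case absent (since $\MOD$ has no protected-delegation rule) and only the $\MODof{\emptyset}$-case to refute. The only subtlety is remembering that \rref{ap-priv} matches any policy $A$, which is what makes both \cref{lem:prv-to-any} and the reuse of its premises in the backward direction go through cleanly.
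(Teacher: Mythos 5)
Your proof is correct and follows essentially the same route as the paper's: \cref{lem:prv-to-any} for the forward direction, and inversion on $\premMAcc{\SG}{s}{p}{\MODof{\emptyset}}$ for the backward direction, refuting the \rref{ap-int} case via $s_m \in \emptyset$ and re-deriving $\PRV$-access from the \rref{ap-priv} premises. If anything, your case analysis is stated more carefully than the paper's, which mislabels the first backward case as \rref{ap-smc} where it clearly means \rref{ap-int}.
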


\begin{proof}
The statement to prove here is:
\begin{inferruleraw}
\forall \SG, s, p.\ (\premMAcc{\SG}{s}{p}{\PRV}) \Leftrightarrow (\premMAcc{\SG}{s}{p}{\MODof{\emptyset}})
\end{inferruleraw}
The forward direction follows from~\cref{lem:prv-to-any}.
The backwards direction can be proven using implication elimination and inversion.
This yields two cases:
\begin{itemize}
  \item \rref{ap-smc}: $\premMAcc{\SG}{s}{p}{\MODof{\emptyset}} \land \premMAcc{\SG}{s}{p}{\PRT}$. The left conjunct reduces to $s \in \emptyset$, which is absurd.
  \item \rref{ap-priv}: $\premMAcc{\SG}{s}{p}{\PRV}$. This is a tautology, as the goal is in the current set of assumptions.
\end{itemize}
%
\end{proof}
The third equivalence we prove is the following:

\begin{theorem}
  \label{thm:mod-equiv-smc}
  $\MODof{\emptyset} \equiv \SMCof{\emptyset}$.
\end{theorem}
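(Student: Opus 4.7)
The plan is to unfold the equivalence into the two implications
$\premMAcc{\SG}{s}{p}{\MODof{\emptyset}} \Rightarrow \premMAcc{\SG}{s}{p}{\SMCof{\emptyset}}$ and
$\premMAcc{\SG}{s}{p}{\SMCof{\emptyset}} \Rightarrow \premMAcc{\SG}{s}{p}{\MODof{\emptyset}}$,
and to handle each by inversion on the access judgment followed by the observation that the ``catch-all'' rule \rref{ap-priv} bridges any two policies whenever the path resolves inside its defining class.

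For the forward direction, I would invert on $\premMAcc{\SG}{s}{p}{\MODof{\emptyset}}$. Only two rules can conclude an $\MOD$-judgment: \rref{ap-int} and \rref{ap-priv}. The first requires a witness $s_m$ with $s_m \in \emptyset$, which is absurd. Hence the derivation must come from \rref{ap-priv}, which yields $\premEncC{\SG}{s}{S_C}$ and $\mathsf{tgt}(p) \in S_C$. Reusing those hypotheses and applying \rref{ap-priv} with $A = \SMCof{\emptyset}$ immediately discharges the goal.

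For the backward direction, invert on $\premMAcc{\SG}{s}{p}{\SMCof{\emptyset}}$. The only rules that can conclude an $\SMC$-judgment are \rref{ap-smc} and \rref{ap-priv}. If \rref{ap-smc} was applied, one of its premises is exactly $\premMAcc{\SG}{s}{p}{\MODof{\emptyset}}$, closing the case trivially. If instead \rref{ap-priv} was used, the hypotheses $\premEncC{\SG}{s}{S_C}$ and $\mathsf{tgt}(p) \in S_C$ again let us re-apply \rref{ap-priv} with $A = \MODof{\emptyset}$ to conclude.

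I do not expect any serious obstacle: the proof is essentially case analysis using the fact that the \rref{ap-priv} rule is polymorphic in $A$, and that \rref{ap-int} is vacuous when the scope set is empty. The only point requiring care is the argument that \rref{ap-int} cannot fire on $\MODof{\emptyset}$, i.e.\@ that its side condition $s_m \in S$ collapses to $s_m \in \emptyset$; this is the same observation already used in the proof of \cref{thm:prv-equiv-mod}. As a cross-check, the result also follows from \cref{thm:prv-equiv-mod} by transitivity (\cref{lem:equiv-trans}) once one proves $\PRV \equiv \SMCof{\emptyset}$, whose forward direction is an instance of \cref{lem:prv-to-any} and whose backward direction is exactly the case analysis above; I would include this alternative derivation if a more uniform presentation alongside \cref{thm:prt-equiv-smd,thm:prv-equiv-mod} is preferred.
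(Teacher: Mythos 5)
Your proof is correct and follows essentially the same route as the paper: inversion on each side, discharging the \rref{ap-int} case via $s_m \in \emptyset$ being absurd, reading off the \rref{ap-smc} premise directly, and using the polymorphic \rref{ap-priv} rule to bridge the remaining cases. The only cosmetic difference is that you re-apply \rref{ap-priv} inline where the paper instead cites \cref{lem:prv-to-any} and \cref{thm:prv-equiv-mod}, which amounts to the same argument.
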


\begin{proof}
The statement to prove here is:
\begin{inferruleraw}
\forall \SG, s, p.\ (\premMAcc{\SG}{s}{p}{\MODof{\emptyset}}) \Leftrightarrow (\premMAcc{\SG}{s}{p}{\SMCof{\emptyset}})
\end{inferruleraw}
The forward direction follows from inversion on the left-hand side of the implication, which yields two cases:
\begin{itemize}
  \item \rref{ap-int}: $s \in \emptyset$, which is absurd.
  \item \rref{ap-priv}: $\premMAcc{\SG}{s}{p}{\PRV}$, from which the goal follows by~\cref{lem:prv-to-any}.
\end{itemize}
The backward direction follows from a similar inversion, which yields two cases:
\begin{itemize}
  \item \rref{ap-smc}: $(\premMAcc{\SG}{s}{p}{\MODof{\emptyset}}) \land (\premMAcc{\SG}{s}{p}{\PRT})$, where the left conjunct is the goal to prove.
  \item \rref{ap-priv}: $\premMAcc{\SG}{s}{p}{\PRV}$, from which the goal follows by~\cref{thm:prv-equiv-mod}.
\end{itemize}
\vskip-\baselineskip
\end{proof}
Finally, we have the following equivalence.

\begin{theorem}
  \label{thm:prv-equiv-smc}
  $\PRV \equiv \SMCof{\emptyset}$.
\end{theorem}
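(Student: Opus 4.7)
The plan is to prove $\PRV \equiv \SMCof{\emptyset}$ by directly composing the two preceding equivalences using transitivity of the equivalence relation on access policies. Specifically, \cref{thm:prv-equiv-mod} establishes $\PRV \equiv \MODof{\emptyset}$ and \cref{thm:mod-equiv-smc} establishes $\MODof{\emptyset} \equiv \SMCof{\emptyset}$, so by \cref{lem:equiv-trans} we immediately obtain $\PRV \equiv \SMCof{\emptyset}$.

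Alternatively, I could prove both directions of the bi-implication directly by inversion on the respective derivations. For the forward direction $(\premMAcc{\SG}{s}{p}{\PRV}) \Rightarrow (\premMAcc{\SG}{s}{p}{\SMCof{\emptyset}})$, I would apply \rref{ap-smc}, which generates two subgoals: $\premMAcc{\SG}{s}{p}{\MODof{\emptyset}}$ and $\premMAcc{\SG}{s}{p}{\PRT}$. Both subgoals follow from the hypothesis $\premMAcc{\SG}{s}{p}{\PRV}$ by \cref{lem:prv-to-any}. For the backward direction, I would invert on $\premMAcc{\SG}{s}{p}{\SMCof{\emptyset}}$, yielding two cases: either \rref{ap-smc} gives $\premMAcc{\SG}{s}{p}{\MODof{\emptyset}}$ (from which \cref{thm:prv-equiv-mod} delivers the goal), or \rref{ap-priv} gives $\premMAcc{\SG}{s}{p}{\PRV}$ directly, which is the goal.

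The transitivity-based approach is the more elegant and concise route, so I would go with that. There is no serious obstacle: the groundwork in \cref{lem:equiv-trans,thm:prv-equiv-mod,thm:mod-equiv-smc} reduces this theorem to essentially a one-line citation. The only thing to be careful about is the direction of the compositions, but symmetry (\cref{lem:equiv-sym}) handles any orientation mismatch if needed.
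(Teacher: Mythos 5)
Your proposal is correct and takes essentially the same route as the paper: the paper's proof is exactly the one-line composition of \cref{thm:prv-equiv-mod} and \cref{thm:mod-equiv-smc} via transitivity (\cref{lem:equiv-trans}). Your alternative direct argument would also go through, but it is unnecessary given the groundwork already in place.
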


\begin{proof}
  Follows from~\cref{thm:prv-equiv-mod}, \cref{thm:mod-equiv-smc}, and~\cref{lem:equiv-trans} (transitivity).
\end{proof}

  \section{Strict Partial Order}
\label{sec:preorder-proofs}

In this appendix, we prove the well-behavedness of strict partial order on access policies defined in~\cref{subsec:policy-order}.
This order can be defined as follows:
\begin{definition}[$A <_A A$]
  \label{def:policy-order}
  \begin{alignat*}{3}
                  \PRV &<_A \SMCof{S}                                                               \\
    \SMCof{S} &<_A \SMCof{S'}\hspace{2em} &\text{ if }\ &S \subsetneq S' \\
    \SMCof{S} &<_A \PRT                                                                             \\
    \SMCof{S} &<_A \MODof{S}                                                               \\
    \MODof{S} &<_A \MODof{S'}\hspace{2em} &\text{ if }\ &S \subsetneq S' \\
    \MODof{S} &<_A \SMDof{S}                                                               \\
                  \PRT &<_A \SMDof{S}                                                               \\
    \SMDof{S} &<_A \SMDof{S'}\hspace{2em} &\text{ if }\ &S \subsetneq S' \\
    \SMDof{S} &<_A \PUB                                                                             \\[0.25em]
                     A &<_A A''                             &\text{ if }\ &\exists A'.\ (A <_A A') \land (A' <_A A'')
  \end{alignat*}
\end{definition}
To prove the well-behavedness of this relation, we will use the following lemmas.
\begin{lemma}
  \label{lem:subset-contains}
  \[
    s \in S 
      \land 
      (S \subsetneq S')
    \Rightarrow
      s \in S'
  \]
\end{lemma}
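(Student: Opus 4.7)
The plan is to observe that this is essentially an unfolding of the definition of proper subset. First I would recall that $S \subsetneq S'$ is defined as $S \subseteq S' \land S \neq S'$, so in particular $S \subseteq S'$ holds. The non-strict inclusion $S \subseteq S'$ in turn means $\forall x.\ x \in S \Rightarrow x \in S'$.

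From there, I would instantiate the universally quantified statement at $x = s$, and combine it with the hypothesis $s \in S$ by modus ponens to obtain $s \in S'$, which is the goal. The strict part of $\subsetneq$ (i.e., $S \neq S'$) is not actually used, so the lemma in fact holds for non-strict inclusion as well; the proof does not rely on the strictness at all.

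There is no real obstacle here: the lemma is a direct consequence of the standard definitions of $\subseteq$ and $\subsetneq$, and the proof is a one-liner. I include it only because it is invoked in the subsequent proofs of well-behavedness of $<_A$ (\cref{thm:acc-pol-order-well-behaved}), where conclusions about membership in the larger scope-set $S'$ need to be derived from membership in the smaller set $S$ attached to a more restrictive access policy.
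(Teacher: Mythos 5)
Your proof is correct and matches the paper's, which simply appeals to the definition of subset (citing a standard reference); you merely spell out the unfolding of $\subsetneq$ into $\subseteq$ plus inequality and the instantiation at $s$. Your observation that strictness is not needed is accurate but does not change anything.
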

\begin{proof}
  By definition of subset, see \eg~\cite[p.~70]{epp2010discrete}.
\end{proof}
Now, we state that access to definitions with \MOD-accessibility is preserved when adding more modules to its argument.
\begin{lemma}
  \label{lem:mod-weakening}
  \[
    \forall \SG, s, p, S, S'.\ (S \subsetneq S') \land (\premMAcc{\SG}{s}{p}{\MODof{S}}) \Rightarrow (\premMAcc{\SG}{s}{p}{\MODof{S'}})
  \]
\end{lemma}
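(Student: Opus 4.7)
The plan is to proceed by inversion on the derivation of $\premMAcc{\SG}{s}{p}{\MODof{S}}$. Looking at the specification in \cref{fig:aml-full-access-policy}, only two rules can conclude a judgment of the form $\premMAcc{\SG}{s}{p}{\MODof{S}}$: the dedicated rule \rref{ap-int}, and the policy-agnostic fallback \rref{ap-priv}, which matches any access policy $A$. I would therefore handle these two cases separately.

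In the \rref{ap-int} case, the hypothesis unpacks to $\premEncMI{\SG}{s}{s_m}$ together with $s_m \in S$ for some module scope $s_m$. Combining $s_m \in S$ with the assumption $S \subsetneq S'$ and invoking \cref{lem:subset-contains} yields $s_m \in S'$. Re-applying \rref{ap-int} with the same enclosing module witness $s_m$ then produces $\premMAcc{\SG}{s}{p}{\MODof{S'}}$, which is the goal.

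In the \rref{ap-priv} case, the derivation gives us $\premEncC{\SG}{s}{S_C}$ together with $\mathsf{tgt}(p) \in S_C$, and crucially these premises do not mention the policy $S$ at all. Since \rref{ap-priv} has an arbitrary $A$ in its conclusion, we can immediately re-apply it with $A := \MODof{S'}$ to close the goal.

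I do not anticipate a real obstacle here; the lemma is essentially a monotonicity statement whose proof structure mirrors the two-case shape used in the equivalence proofs of \cref{sec:equivalence-proofs}. The only subtlety worth flagging is remembering that \rref{ap-priv} is polymorphic in the policy, so it cannot be ignored when enumerating cases; overlooking it would give a correct conclusion but an incomplete case analysis.
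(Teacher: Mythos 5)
Your proof is correct and follows essentially the same route as the paper's: inversion yielding the two cases \rref{ap-int} and \rref{ap-priv}, with \cref{lem:subset-contains} closing the first. The only cosmetic difference is that in the \rref{ap-priv} case the paper invokes \cref{lem:prv-to-any} while you re-apply the polymorphic rule directly with $A := \MODof{S'}$; these are interchangeable since that lemma is itself just an application of \rref{ap-priv}.
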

\begin{proof}
  Implication elimination and inversion yields two cases:
  \begin{itemize}
    \item $\rref{ap-int}$: this case can be proven by applying \rref{ap-int} on the goal.
      This yields the following assumptions:
      \begin{itemize}
        \item $\premEncM{\SG}{s}{S_M}$
        \item $s_m \in S_M$
        \item $s_m \in S$
      \end{itemize}
      and proof goal: $\exists M', s_m'.\; (\premEncM{\SG}{s}{S_M'}) \land (s_m' \in S_M') \land (s_m' \in S')$.
      We instantiate $S_M'$ with $S_M$ and $s_m'$ with $s_m$.
      Then, the first two conjuncts follow by assumption, and the last one by~\cref{lem:subset-contains}.
    \item \rref{ap-priv}: goal follows from~\cref{lem:prv-to-any}.
  \end{itemize}
  \vskip -\baselineskip
\end{proof}
We now restate and prove the theorem that defines well-behavedness of this order.
\accpolorderwellbehaved*

\begin{proof}
  We prove this implication by induction on the $<_A$-relation.
  Each claim in this proof corresponds to a case from~\cref{def:policy-order}.
  We first cover the base cases.

  \ordCaseProof{\PRV}{\SMCof{S}}{
    Follows from~\cref{lem:prv-to-any}.
  }

  \ordCaseProof[S \subsetneq S']{\SMCof{S}}{\SMCof{S}'}{
    By implication elimination, we need to prove $\premMAcc{\SG}{s}{p}{\SMCof{S}'}$ from the following assumptions:
    \begin{enumerate}
      \item $S \subsetneq S'$, and
      \item \label{it:smc} $\premMAcc{\SG}{s}{p}{\SMCof{S}}$.
    \end{enumerate}
    We proceed by inversion on the second assumption, which yields two cases:
    \begin{itemize}
      \item $\rref{ap-smc}$: proven by applying $\rref{ap-smc}$ on the goal, which gives two residual proof obligations:
      \begin{itemize}
        \item $\premMAcc{\SG}{s}{p}{\PRT}$: proven by assumption obtained from inversion on assumption~\ref{it:smc}.
        \item $\premMAcc{\SG}{s}{p}{\MODof{S}'}$: proven by~\cref{lem:mod-weakening}.
      \end{itemize}
      \item \rref{ap-priv}: goal follows from~\cref{lem:prv-to-any}.
      \vskip -\baselineskip
    \end{itemize}
  }

  \ordCaseProof{\SMCof{S}}{\PRT}{
    By implication elimination and inversion, we obtain two cases:
    \begin{itemize}
      \item $\rref{ap-smc}$: goal follows from assumption.
      \item \rref{ap-priv}: goal follows from~\cref{lem:prv-to-any}.
    \end{itemize}
    \vskip -\baselineskip
  }

  \ordCaseProof{\SMCof{S}}{\MODof{S}}{
    By implication elimination and inversion, we obtain two cases:
    \begin{itemize}
      \item \rref{ap-int}: goal follows from assumption.
      \item \rref{ap-priv}: goal follows from~\cref{lem:prv-to-any}.
    \end{itemize}
    \vskip -\baselineskip
  }

  \ordCaseProof[S \subsetneq S']{\SMCof{S}}{\MODof{S}}{
    Follows from~\cref{lem:mod-weakening}.
  }

  \ordCaseProof{\MODof{S}}{\SMDof{S}}{
    Follows from implication elimination and applying~\rref{ap-smd-mod}.
  }

  \ordCaseProof{\PRT}{\SMDof{S}}{
    Follows from implication elimination and applying~\rref{ap-smd-prot}.
  }

  \ordCaseProof[S \subsetneq S']{\SMDof{S}}{\SMDof{S'}}{
    By implication elimination and inversion, we obtain three cases:
    \begin{itemize}
      \item \rref{ap-smd-mod}: goal follows from applying \rref{ap-smd-mod} and~\cref{lem:subset-contains}.
      \item \rref{ap-smd-prot}: goal follows from applying \rref{ap-smd-prot}.
      \item \rref{ap-priv}: goal follows from~\cref{lem:prv-to-any}.
    \end{itemize}
    \vskip -\baselineskip
  }

  \ordCaseProof{\SMDof{S}}{\PUB}{
    Follows from~\rref{ap-pub}.
  }
  %
  Finally, there is an inductive case left:
  
  \begin{proofcase}
    \begin{alignat*}{7}
      \forall A, A''.\ &(\exists A'.\ &&A  &&<_A A'  &&\Rightarrow (\forall \SG, s, p.\ \premMAcc{\SG}{s}{p}{A}  \Rightarrow \premMAcc{\SG}{s}{p}{A'}) \land {}\\
                       &              &&A' &&<_A A'' &&\Rightarrow (\forall \SG, s, p.\ \premMAcc{\SG}{s}{p}{A'} \Rightarrow \premMAcc{\SG}{s}{p}{A''})) \\
                       & \Rightarrow  &&A  &&<_A A'' &&\Rightarrow (\forall \SG, s, p.\ \premMAcc{\SG}{s}{p}{A}  \Rightarrow \premMAcc{\SG}{s}{p}{A''})
    \end{alignat*}
  \end{proofcase}

  \begin{claimproof}
    In this case, we have the following induction hypotheses (for some $A'$):
    \begin{enumerate}
      \item \label{it:ind-1} $A  <_A A'  \Rightarrow (\forall \SG, s, p.\ \premMAcc{\SG}{s}{p}{A}  \Rightarrow \premMAcc{\SG}{s}{p}{A'})$, and
      \item \label{it:ind-2} $A' <_A A'' \Rightarrow (\forall \SG, s, p.\ \premMAcc{\SG}{s}{p}{A'} \Rightarrow \premMAcc{\SG}{s}{p}{A''})$.
    \end{enumerate}
    By implication elimination, we obtain the following assumptions:
    \begin{enumerate}
      \item \label{it:AAp} $A <_A A'$
      \item \label{it:AApp} $A' <_A A''$
      \item \label{it:Aacc} $\premMAcc{\SG}{s}{p}{A}$
    \end{enumerate}
    From assumptions~\ref{it:AAp} and~\ref{it:Aacc}, and induction hypothesis~\ref{it:ind-1}, we infer $\premMAcc{\SG}{s}{p}{A'}$.
    From this, assumption~\ref{it:AApp}, and induction hypothesis~\ref{it:ind-2}, we infer our goal.
  \end{claimproof}
  Together, these cases prove~\cref{thm:acc-pol-order-well-behaved}.
\end{proof}

  \section{Soundness Proofs}
\label{sec:characterizing-predicates}

In this appendix, we give soundness proofs for the following access policies:
\begin{itemize}
  \item \lit*{private}
  \item \lit*{protected}
  \item \lit*{internal}
\end{itemize}
As access to \lit*{public} is allowed unconditionally, no soundness proof for it is needed.
Moreover, as \lit*{private protected} and \lit*{protected internal} are defined as a conjunction or disjunction of \lit*{protected} and \lit*{internal},
their proofs are trivially derived from the soundness proofs of the access modifiers they are based on.
Thus, we omitted those for brevity.

\Cref{subsec:proof-techniques} briefly discusses the machinery we use to reason over AML proof trees.
After that,~\cref{sub:defs} presents a significant number of definitions, lemma's and assumptions we make.
Readers are recommended to skip this section, and refer back to it when needed.
\Cref{subsec:priv-sound,subsec:prot-sound,subsec:int-sound} present the actual soundness proofs.

\subsection{Proof Techniques}
\label{subsec:proof-techniques}

To prove that our language (AML) faithfully represents the characterizing properties presented in~\cref{subsec:soundness-theorems}, we need to reason about the \emph{structure} of \emph{proof trees} that are valid according to our AML specification.
To this end, we use $\ctx$ to represent a \emph{typing derivation context}; i.e., a typing derivation with a hole, where the typing derivation is valid according to the AML typing rules presented in our paper and in the Statix definition in the accompanying artifact.
This notion of typing derivation context is analogous to the well-known notion of \emph{evaluation context} \`{a} la Felleisen~and~Hieb~\cite{FelleisenH92}.%
\footnote{
  Alternatively, derivation contexts can be understood as \emph{zippers}~\cite{Huet97} over \emph{proof trees}.
}
The notation $\inCtx{A}$ represents the full typing derivation given by plugging the derivation of $A$ into the typing derivation context $\ctx$.
This assertion can be interpreted the other way around as well.
For example, it might be useful to think of a statement such as $P = \inCtx{A}$ as `$P$ contains a derivation of $A$ at the subtree pointed to by $\ctx$'.

\subsubsection{Proofs using Derivation Contexts}

Standard backward reasoning proof techniques, such as case analysis and inversion, are valid inside a derivation context as well.
For example, given the inference rules:
\[
    \inference{
      B
    }{
      A
    }
    \qquad
    \inference{
      C
    }{
      A
    }
\]
and a proof tree $P = \inCtx{A}$, by inversion we can deduce that
\[
  \exists \ctx'.\ (P = \inCtx[']{B} \lor P = \inCtx[']{C})
\]
That is, there exists a proof context $\ctx'$ where plugging either a derivation of $B$ or a derivation of $C$, is equivalent to $P$.

However, this does not hold for forward reasoning.
For example, consider the following inference rules:
\[
    \inference[\textsc{R-A}]{
      {}
    }{
      A
    }
    \qquad
    \inference[\textsc{R-AB}]{
      A
    }{
      B
    }
    \qquad
    \inference[\textsc{R-AC}]{
      A
    }{
      C
    }
    \qquad
    \inference[\textsc{R-BD}]{
      B
    }{
      D
    }
    \qquad
    \inference[\textsc{R-CD}]{
      C
    }{
      D
    }
\]
and a proof tree
\[
  P = \vcenter{
    \[\frac{
      \displaystyle
      \frac{
        A
      }{
        C
      }
    }{
      D
    }\]
  }
\]
Clearly, there exists a context $\ctx$ such that $P = \inCtx{A}$.
However, is is incorrect to derive from this fact and \textsc{R-AB} that there exists a $\ctx'$ such that $P = \inCtx[']{B}$.
Although $B$ could have been derived from $A$, this does not actually happen in $P$.

We can do the following, however.
From $P = \inCtx{A}$ in the above setup, we derive that 
\[
  P = A \lor \exists\ctx'.\ P = \inCtx{B} \lor P = \inCtx{C}
\]
That is, either $A$ was the final conclusion of $P$, or $A$ was derived in a derivation of either $B$ or $C$.

Next, we use the following lemma to equate propositions in equal positions in the proof tree.
\begin{lemma}[Equality of Substitutions]
  \label{lem:eq-in-ctx}
  \[
    \forall \ctx, A, A'.\ (\inCtx{A} = \inCtx{A'}) \Rightarrow A = A'
  \]
\end{lemma}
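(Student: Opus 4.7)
The plan is to prove the lemma by structural induction on the derivation context $\ctx$. Intuitively, the hole in $\ctx$ designates a unique position in the resulting proof tree, and since proof trees are rigid syntactic objects — determined by the rule applied at each node together with the ordered sequence of sub-derivations feeding into that node — two proof trees that agree everywhere outside the hole must agree at the hole as well.

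In the base case, $\ctx$ is the empty context (a bare hole), so $\inCtx{A} = A$ and $\inCtx{A'} = A'$ by the definition of substitution into a context; the hypothesis $\inCtx{A} = \inCtx{A'}$ then yields $A = A'$ directly. In the inductive case, $\ctx$ has the form of some inference rule $R$ with conclusion $C$ and premises $P_1, \dots, P_{i-1}, \ctx', P_{i+1}, \dots, P_n$, where the hole occurs in the $i$-th sub-context $\ctx'$ and the remaining $P_j$ are complete proof trees independent of the hole. Then both $\inCtx{A}$ and $\inCtx{A'}$ are applications of $R$ with the same conclusion $C$, with identical $P_j$ in positions $j \neq i$, and with $\inCtx[']{A}$ and $\inCtx[']{A'}$ respectively in position $i$. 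Because equality of proof trees implies equality of the root rule and of each premise slot, we obtain $\inCtx[']{A} = \inCtx[']{A'}$, and the induction hypothesis applied to $\ctx'$ gives $A = A'$.

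The main obstacle is notational rather than mathematical: the argument relies on the (standard but implicit) invariant that a proof tree is uniquely determined by the rule applied at each node and the ordered list of its sub-derivations. This invariant must be made precise enough to justify the step $\inCtx{A} = \inCtx{A'} \Rightarrow \inCtx[']{A} = \inCtx[']{A'}$, but once so stated, the induction itself is routine. Given the close analogy with Felleisen–Hieb evaluation contexts, for which the corresponding unique-hole lemma is a staple, we do not expect any subtle complications in adapting the argument to derivation trees of the AML type system.
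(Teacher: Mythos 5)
Your proof is correct, but note that the paper itself gives no proof of this lemma at all: \cref{lem:eq-in-ctx} is stated bare, as a self-evident property of syntactic derivation contexts, and is simply used downstream (e.g.\ in the proof of \cref{lem:decl-params-unique}). Your structural induction on $\ctx$ therefore fills a gap rather than paralleling an existing argument. The induction is the right shape: the base case is immediate from the definition of plugging into an empty context, and the inductive step goes through provided proof trees are freely generated syntactic objects, so that equality of two trees with the same root rule forces componentwise equality of their premise slots --- this is the injectivity/no-confusion property you correctly identify as the crux. One small point worth making explicit: since $\inCtx{A}$ plugs in \emph{a derivation of} $A$ rather than the proposition $A$ itself, your induction actually shows that the two sub-derivations occupying the hole are equal, and the conclusion $A = A'$ then follows because equal derivations have equal root judgments. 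With that observation added, the argument is complete, and it is indeed the routine analogue of the unique-decomposition lemma for Felleisen--Hieb evaluation contexts that the paper's footnote alludes to.
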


Finally, propositions can vacuously be lifted out of proof contexts.
\begin{lemma}[Derivation Lifting]
  \label{lem:lift-derivation}
  \[
    \forall P, \ctx, A.\ P = \inCtx{A} \Rightarrow A
  \]
\end{lemma}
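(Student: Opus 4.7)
The plan is to prove this lemma by structural induction on the derivation context $\ctx$, exploiting the fact that a derivation context is essentially a path from the root of a proof tree down to a distinguished subtree position. The key intuition is that $\inCtx{A}$ being a valid proof tree (i.e., one whose every node is a valid application of an AML inference rule) forces the subtree occupying the hole to itself be a valid proof of $A$; all we need to do is descend into $\ctx$ and extract that subtree.

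In the base case, $\ctx$ is the trivial context consisting only of the hole, so $\inCtx{A}$ is precisely a derivation whose final conclusion is $A$. Validity of this derivation means $A$ is derivable in AML, and hence $A$ holds as a proposition, as required.

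For the inductive step, $\ctx$ has the form of some inference rule $\rho$ with premises $B_1, \ldots, B_k$, where one of the premises, say $B_i$, has been replaced by a strictly smaller context $\ctx'$. Then $\inCtx{A}$ is the proof tree obtained by applying $\rho$ to derivations of $B_1,\ldots,B_{i-1},\inCtx[']{A},B_{i+1},\ldots,B_k$. Since $\inCtx{A}$ is assumed valid, each of its immediate sub-derivations must be valid as well, and in particular $\inCtx[']{A}$ is itself a valid proof tree in the sense required by the lemma. Applying the induction hypothesis to the strictly smaller context $\ctx'$ then yields~$A$.

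I do not expect a genuine obstacle here; the lemma is essentially the statement that every subtree of a valid proof tree is itself a valid proof of its own root. The only subtle point worth stating carefully is the implicit well-formedness convention: we are assuming that $P = \inCtx{A}$ means not merely a syntactic plug-in but that the resulting tree is a valid AML typing derivation. Under that convention, the induction above is routine, and the result can be recorded without further machinery.
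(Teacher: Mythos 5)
Your proof is correct and rests on the same observation the paper uses: since $\inCtx{A}$ denotes plugging a \emph{derivation} of $A$ into the hole of a valid proof tree, the existence of $P = \inCtx{A}$ immediately yields a derivation of $A$. The paper states this in a single sentence of rationale rather than carrying out the structural induction on $\ctx$, so your argument is just a more explicit rendering of the same idea.
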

The rationale behind this lemma is that $\inCtx{A}$ is substituting \emph{a derivation} of $A$.
If there exists a valid proof $P$ in which such a derivation occurs, there must exist a derivation of $A$, from which $A$ follows.

\subsection{Auxiliary Definitions}
\label{sub:defs}

In this section, we will present some auxiliary definitions, lemma's and assumptions that we will repeatedly use in the proof.
Readers are recommended to skip this section, and refer back to it when reading the proofs in the subsequent sections.

\subsubsection{Scope Graphs}

The first lemma states that each result in the query answer should have a corresponding declaration in the scope graph.
\begin{assumption}[Query Declaration]
  \label{lem:query-decl}
  \[
    \query{\SG}{s}{\mathit{re}}{D}{\textit{ord}}{R} \Rightarrow (\forall \langle p, d \rangle \in R.\, \mathsf{tgt}(p) \typeedget[l] d \in \SG)
  \]
\end{assumption}
\begin{proofsketch}
This follows immediately from the query resolution calculus~\cite[Fig.\@ 1, (NR-Rel)]{AntwerpenPRV18}.
\end{proofsketch}

The first lemma states that each result in the query answer should have a corresponding declaration in the scope graph.
\begin{assumption}[Query Paths]
  \label{lem:query-path}
  \[
    \query{\SG}{s}{\mathit{re}}{D}{\textit{ord}}{R} \Rightarrow (\forall \langle p, d \rangle \in R.\, \premMatchRE{p}{\mathit{re}})
  \]
\end{assumption}
\begin{proofsketch}
This follows immediately from the query resolution calculus~\cite[Fig. 1, (NR-Rel)]{AntwerpenPRV18}.
\end{proofsketch}

Second, we pose a lemma that states that queries have only a single result.
\begin{assumption}[Query Answer Uniqueness]
  \label{lem:query-answer-unique}
  \[
    \left(\query{\SG}{s}{\mathit{re}}{D}{\textit{ord}}{R}\right) \land \left(\query{\SG}{s}{\mathit{re}}{D}{\textit{ord}}{R'}\right)
    \Rightarrow
    R = R'
  \]
\end{assumption}
This should follow from the set-comprehension semantics of queries~\cite[\S3.1]{RouvoetAPKV20}.
We omit the proof for brevity.

\subsubsection{AML}

In this section, we pose some lemmas about AML typing derivations and the scope graphs that support those.

For convenience, we assume that each AST term has a particular \emph{index} associated with it, that distinguishes it from (structurally) equivalent AST nodes at \emph{different positions}.
For example, we often need to reason about references and declarations, which are structurally equivalent (\ie, both represented by a name~$x$).
To distinguish those, we assign indices~$r$ and~$d$ (written as~$x_r$ and~$x_d$) to make clear if we are referring to the reference or the declaration.
For terms that we annotate with an index, we consider the index part of the equality relation; \eg,~$x_r \neq x_d$.

\begin{assumption}
  \label{asmp:index-unique}
  Each AST node in a typing derivation~$P_\SG$ has a unique index that distinguishes it from all other AST nodes in~$P_\SG$.
\end{assumption}

For our proofs, we need some `minimality' guarantees on the scope graph $\SG$ that we did not make explicit in our initial presentation of AML's type system.
That is, the scope graph should contain exactly the declarations asserted by the program; no more, no less.
More specifically, we want each declaration/edge in the scope graph to correspond to some construct in the underlying program.
We state this in the form of \emph{support} lemmas.

\begin{lemma}[Scope Support]
  \label{lem:scope-support}
  \[
    \forall P_\SG.\, \left(s \in \SG\right) \Leftrightarrow \left(\exists \ctx.\, P_\SG = \inCtx{s \in \SG}\right)
  \]
\end{lemma}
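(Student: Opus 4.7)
The plan is to prove the biconditional by treating each direction separately. The reverse direction ($\Leftarrow$) is an immediate consequence of Derivation Lifting (\cref{lem:lift-derivation}): if $P_\SG = \inCtx{s \in \SG}$ for some context $\ctx$, we extract the sub-derivation of $s \in \SG$ from within $P_\SG$ and directly conclude the claim.

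The forward direction ($\Rightarrow$) is the substantive half, and I would prove it by structural induction on the typing derivation $P_\SG$. The key observation is that, among all the typing rules of AML, only three carry an explicit scope-membership premise of the form $s \in \SG$: \rref{prog-ok} for the root scope $s_0$, \rref{mod-ok} for each module scope $s_m$, and \rref{cls-ok} for each class scope $s_c$. For any valid derivation the outermost rule is \rref{prog-ok}, which already witnesses $s_0 \in \SG$ and recurses via \rref{mod-ok} into every top-level module, which in turn recurses via \rref{mod-ok} or \rref{cls-ok} into nested declarations. Given an arbitrary $s \in \SG$, the induction hypothesis applied to the sub-derivations lets me identify the (unique) rule application that introduced $s$ as a premise, and the surrounding proof tree provides the required context $\ctx$. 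The base cases (leaf declarations and rules that do not mention new scopes) are vacuous, since by assumption they contribute no scopes to $\SG$.

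The main obstacle is formalising the \emph{minimality} of $\SG$: the declarative reading of the typing rules does not, on its own, rule out a scope graph containing extra, spurious scopes that no premise ever mentions, and for such an $\SG$ the forward direction would fail. The lemma is therefore implicitly about the scope graph that is exactly supported by the program's constructs. I would address this either by adding a meta-level assumption, analogous to \cref{asmp:index-unique}, asserting that $\SG$ is minimally supported by $P_\SG$, or by reading $s \in \SG$ as `the derivation requires $s \in \SG$' so that every element of $\SG$ is traceable to a concrete rule premise. With either patch in place the induction above goes through cleanly, and the companion edge- and declaration-support lemmas in the same appendix can be proved by the same template, replacing the three scope-introducing rules with the corresponding rules that introduce edges or declaration nodes.
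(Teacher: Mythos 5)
Your proposal is correct, and both directions line up with what the paper intends: the backward direction via Derivation Lifting (\cref{lem:lift-derivation}) is exactly the paper's argument, and you correctly isolate the crux of the forward direction, namely that it only holds if $\SG$ is \emph{minimal}, i.e., contains no scopes beyond those asserted by some rule premise. Where you diverge is in how that minimality is discharged. You propose an explicit structural induction over the AML rules, locating the three scope-introducing premises (\textsc{Prog-Ok}, \textsc{Mod-Ok}, \textsc{Cls-Ok}) and patching the gap with an added meta-level minimality assumption. The paper instead appeals to the \emph{support} semantics of Statix due to Rouvoet et al.: every scope-graph assertion is collected into a support parameter $\sigma$ that is threaded up the derivation via disjoint union, and the root of the derivation asserts that $\sigma$ covers all of $\SG$; the path along which an assertion is propagated to the root is precisely the context $\ctx$ the lemma demands. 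So the paper gets minimality for free from the metatheory of the specification language rather than assuming it, and the same template then covers the companion Edge and Declaration Support lemmas without re-examining which rules introduce what — which is also why the paper's version is only a proof sketch. Your induction is more elementary and self-contained, but it does require the extra assumption you flag (or the reinterpretation of $s \in \SG$), whereas the paper's route makes that assumption unnecessary by construction. Both are acceptable; yours would need the minimality assumption stated explicitly as a lemma or axiom to be complete.
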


\begin{lemma}[Edge Support]
  \label{lem:edge-support}
  \[
    \forall P_\SG.\, \left(s \scopeedget[l] s' \in \SG\right) \Leftrightarrow \left(\exists \ctx.\, P_\SG = \inCtx{s \scopeedget[l] s' \in \SG}\right)
  \]
\end{lemma}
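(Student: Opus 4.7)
The plan is to prove the two directions of the biconditional separately. The backward direction ($\Leftarrow$) is essentially immediate: given a context $\ctx$ with $P_\SG = \inCtx{s \scopeedget[l] s' \in \SG}$, \cref{lem:lift-derivation} (Derivation Lifting) lets us conclude $s \scopeedget[l] s' \in \SG$ directly, since the derivation context substitutes an actual derivation of the embedded proposition. No further analysis is needed for this half.

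For the forward direction ($\Rightarrow$), I would rely on a (so far implicit) minimality assumption on $\SG$: the scope graph contains only those scopes and edges asserted by the typing derivation $P_\SG$. Under this assumption, the proof proceeds by case analysis on the label~$l$, inspecting which AML typing rules can introduce an edge with that label as a premise. Concretely: $\lblTHISMOD$-edges are introduced only by \rref{prog-ok} and \rref{mod-ok}; $\lblTHIS$-edges only by \rref{cls-ok}; $\lblIMP$-edges only by \rref{imp-ok}; $\lblEXT$-edges (and the $\lblEXTPRIV$/$\lblEXTPROT$ variants) only by \rref{ext}; and $\lblLEX$-edges appear in the scope-creation premises of \rref{mod-ok} and \rref{cls-ok}. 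In each case, the edge $s \scopeedget[l] s' \in \SG$ appearing in $\SG$ must correspond to exactly one such rule application occurring in $P_\SG$, and the derivation context $\ctx$ is obtained by walking down the derivation tree to that application's premise.

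The main obstacle will be justifying the minimality assumption cleanly. Since the typing rules only \emph{require} certain edges to be in $\SG$ (via membership premises), without a separate well-formedness condition it would be consistent to have $\SG$ contain additional, unmentioned edges, in which case the forward direction fails. I would address this either by adding an explicit side condition (the scope graph is exactly the least one satisfying all scope/edge membership premises of $P_\SG$), or by appealing to the standard convention in Statix-style specifications that $\SG$ is synthesized from the program. Given such an assumption, the case analysis is mechanical and exhaustive over the finite set of labels in the grammar of \synt{l} (\cref{fig:aml-full-syntax}), with each subcase closed by picking $\ctx$ to be the derivation context of the corresponding rule's membership premise, and invoking \cref{lem:eq-in-ctx} when uniqueness of the substitution is needed to identify the asserted edge with the one given.
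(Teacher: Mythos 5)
Your proposal matches the paper's (sketched) argument: the backward direction is discharged by \cref{lem:lift-derivation} exactly as you say, and the forward direction hinges on precisely the minimality property you flag as the main obstacle, which the paper resolves via your second option --- appealing to the notion of \emph{support} intrinsic to the declarative semantics of Statix~\cite[\S3.3]{RouvoetAPKV20}, under which every edge in $\SG$ must be asserted somewhere in the derivation. The per-label case analysis you add is sound but unnecessary once support is granted, since that property already yields the context $\ctx$ as the path along which the edge assertion is propagated to the root of the proof tree.
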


\begin{lemma}[Declaration Support]
  \label{lem:decl-support}
  \[
    \forall P_\SG.\, \left(s \typeedget[l] d \in \SG\right) \Leftrightarrow \left(\exists \ctx.\, P_\SG = \inCtx{s \typeedget[l] d \in \SG}\right)
  \]
\end{lemma}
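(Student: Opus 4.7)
The plan is to prove both directions of the biconditional separately, mirroring the structure of the adjacent Scope Support and Edge Support lemmas. The backward direction ($\Leftarrow$) is immediate: given $P_\SG = \inCtx{s \typeedget[l] d \in \SG}$ for some context $\ctx$, the Derivation Lifting Lemma (\cref{lem:lift-derivation}) extracts $s \typeedget[l] d \in \SG$ directly. No case analysis is needed here; the scope graph membership assertion appears verbatim inside the derivation and is therefore a conclusion of some sub-derivation.

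For the forward direction ($\Rightarrow$), I would proceed by structural induction on the typing derivation $P_\SG$. Inspecting the AML rules in~\cref{fig:aml-full-modules,fig:aml-full-classes,fig:aml-full-expr}, the only rules that introduce a $\typeedget$-edge into $\SG$ are \rref{mod-ok} (introducing $s \typeedget[\lblMOD] \moddecl{x}{s_m}$), \rref{cls-ok} (introducing $s \typeedget[\lblCLS] \clsdecl{x}{s_c}$), and \rref{d-def} (introducing $s \typeedget[\lblVAR] (\defdecl{x}{T}{A})$). I would then case-split on the label $l$: for each value in $\setOf{\lblMOD, \lblCLS, \lblVAR}$, the inductive hypothesis together with the \emph{minimality} of $\SG$ guarantees that the specific declaration $s \typeedget[l] d$ was introduced by an application of the corresponding rule in the derivation tree, yielding the context $\ctx$ pointing to exactly that rule instance.

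The main obstacle is justifying the minimality property, which is not explicit in the declarative presentation of the type system. In a declarative setting, the scope graph $\SG$ is a parameter that must \emph{satisfy} the assertions made by the typing rules, but nothing syntactically prevents it from containing extra declarations that are never referenced by any rule in $P_\SG$. To close this gap, I would either state an auxiliary \emph{Minimality Assumption} (in the same style as~\cref{lem:query-decl}, positing that every $\typeedget$-edge in $\SG$ was introduced by a rule application in $P_\SG$), or appeal to the operational semantics of Statix, where the scope graph is constructed incrementally and no edges arise outside of rule applications. This assumption is the logical counterpart to the uniqueness of AST indices (\cref{asmp:index-unique}) and mirrors the implicit justification already used in the analogous Scope and Edge Support lemmas.

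With minimality in hand, the induction goes through mechanically: in the base cases (\rref{imp-ok}, \rref{ext}, \rref{no-ext}, \rref{prog-ok}, expressions, etc.), no $\typeedget$-edge is introduced, so the case holds vacuously (minimality rules out spurious declarations at leaves); in the three inductive cases that do introduce declarations, the relevant premise $s \typeedget[l] d \in \SG$ appears as a direct premise of that rule instance, so we take $\ctx$ to be the derivation tree with that premise hole-plugged. A small side-lemma on uniqueness of AST indices (\cref{asmp:index-unique}) ensures the witnessed context is well-defined when multiple structurally identical declarations exist in the program.
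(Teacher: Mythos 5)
Your proposal is correct and matches the paper's own argument: the backward direction is discharged by derivation lifting exactly as in the paper, and the ``minimality'' property you identify as the crux of the forward direction is precisely the notion of \emph{support} from Rouvoet et al.\ that the paper appeals to (the Statix declarative semantics collects all scope-graph assertions into a support parameter and requires at the root that it covers the entire graph, so every declaration edge is owned by some assertion whose position in the tree yields the context $\ctx$). The structural induction you layer on top is harmless but redundant once support is invoked, since support already delivers the witnessing context directly.
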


\begin{proofsketch}
  The forward direction of these lemmas is equivalent to the notion of \emph{support} by Rouvoet et al.~\cite[§3.3]{RouvoetAPKV20}.
  Their declarative semantics `collects' all assertions on the scope graph in a support parameter~$\sigma$, and propagates them to the root of the proof tree.
  The path through which scope graph assertions are propagated to the root is, in our notation, represented by the context $\ctx$.
  At the root of the proof tree, it is asserted that~$\sigma$ supports the full scope graph~$\SG$.

  This notion of support is intrinsic to the semantics of Statix.
  As such, by virtue of being expressed in Statix, our specification of AML adheres to these lemmas.
  The backward direction of these lemmas holds vacuously (\cref{lem:lift-derivation}).
\end{proofsketch}

Additionally, we assert that each scope assertion is unique.
\begin{lemma}[Scope Uniqueness]
  \label{lem:scope-unique}
  \[
    \forall \ctx_1,\, \ctx_F2.\, \inCtx[_1]{s \in SG} = \inCtx[_2]{s \in SG} \Leftrightarrow \ctx_1 = \ctx_2
  \]
\end{lemma}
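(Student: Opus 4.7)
The plan is to split the biconditional. The backward direction, $\ctx_1 = \ctx_2 \Rightarrow \inCtx[_1]{s \in \SG} = \inCtx[_2]{s \in \SG}$, is immediate: context substitution is a function, so plugging the same proposition into equal contexts yields equal trees.

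For the forward direction, my approach is to show that within any valid typing derivation $P_\SG$ the assertion $s \in \SG$ occurs at exactly one position. Inspecting the AML typing rules in \cref{fig:aml-full-modules,fig:aml-full-classes} reveals that $s \in \SG$ appears only as a premise of \rref{prog-ok}, \rref{mod-ok}, and \rref{cls-ok}, i.e., precisely the three rules that introduce fresh scopes. Under the freshness discipline intrinsic to Statix (the same discipline that underlies \cref{lem:scope-support}), each scope $s$ is introduced by exactly one instance of these rules in a given derivation. Hence the hole in both $\ctx_1$ and $\ctx_2$ must coincide with this unique introduction site, and since the two resulting trees are equal outside the hole, this forces $\ctx_1 = \ctx_2$.

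The main obstacle I anticipate is making the freshness argument precise. Derivation contexts are introduced only informally in \cref{subsec:proof-techniques} by analogy with Felleisen--Hieb evaluation contexts, without an explicit inductive grammar, so I cannot do straightforward induction over their shape. To rule out $s \in \SG$ appearing at two distinct positions I would choose between two routes: (a) formalize $\ctx$ as a zipper over proof trees and prove by induction on that structure that scope-in-graph assertions for a fixed $s$ are unique; or (b) appeal directly to the operational meaning of Statix, where fresh-scope generation is primitive and each scope appears in the support of exactly one rule instance. Route (b) is much lighter and matches the informal justification given for \cref{lem:scope-support,lem:edge-support,lem:decl-support}, so I would adopt it in the main text, deferring the zipper-based development of (a) to the appendix in case a fully formal treatment is requested.
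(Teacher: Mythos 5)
Your proposal matches the paper's own (sketch-level) argument: the paper likewise dismisses the backward direction as trivial and justifies the forward direction by appeal to the disjoint union underlying Statix's notion of support from Rouvoet et al., which is exactly your route (b). Your identification of the scope-introducing rules and your cleaner phrasing of the backward direction (substitution is a function) are minor refinements, not a different approach.
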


\begin{proofsketch}
  The forward direction of this lemmas enforced by the \emph{disjoin union} used to define \emph{support} by Rouvoet et al.~\cite[§3.3]{RouvoetAPKV20}.
  This notion of support is intrinsic to the semantics of Statix.
  As such, by virtue of being expressed in Statix, our specification of AML adheres to these lemmas.
  The backward direction of these lemmas holds vacuously (\cref{lem:lift-derivation}).
\end{proofsketch}

\subparagraph{Uniqueness of Variable Declarations.}
Next, we state that every declaration in the program corresponds to a unique declaration in the scope graph:
\begin{lemma}[Uniqueness of Declarations]
  \label{lem:decl-ctx-unique}

  For any scope graph $\SG$, declaration $x_d$, contexts $\ctx_1$ and $\ctx_2$, scopes $s_1$ and $s_2$, types $T_1$ and $T_2$, and access policies $A_1$ and $A_2$, it holds that
    \[
      \left(
        \inCtx[_1]{s_1 \typeedget[\lblVAR] (\defdecl{x_d}{T_1}{A_1}) \in \SG}
        =
        \inCtx[_2]{s_2 \typeedget[\lblVAR] (\defdecl{x_d}{T_2}{A_2}) \in \SG}
      \right)
      \Rightarrow
      \ctx_1 = \ctx_2
    \]
\end{lemma}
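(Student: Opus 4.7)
The plan is to exploit the rigidity of how $\lblVAR$-declarations are introduced in an AML typing derivation, combined with the uniqueness of AST indices (\cref{asmp:index-unique}), to argue that the assertion $s \typeedget[\lblVAR] (\defdecl{x_d}{T}{A}) \in \SG$ can be plugged into at most one position inside any valid proof tree $P_\SG$.

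First, I would observe that the only typing rule whose conclusion introduces a premise of the form $s \typeedget[\lblVAR] (\defdecl{x}{T}{A}) \in \SG$ is \rref{d-def}. By \cref{lem:decl-support}, every occurrence of such a scope-graph assertion in the proof tree must arise from an application of \rref{d-def}. Thus, from $\inCtx[_1]{s_1 \typeedget[\lblVAR] (\defdecl{x_d}{T_1}{A_1}) \in \SG}$ being a valid proof tree, one concludes that $\ctx_1$ passes through (at its hole) an application of \rref{d-def} whose concluding judgement is $\premCMem{\SG}{s_1}{\mathit{acc}_1 \mathrel{\lit*{var}} x_d \mathrel{\lit*{=}} e_1}$, for some $\mathit{acc}_1, e_1$; and analogously for $\ctx_2$.

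Next, I would use \cref{asmp:index-unique}: since the index $d$ is unique in the AST, the specific declaration node $x_d$ occurs at exactly one syntactic position in the program being checked. Typing derivations are guided by the AST — \rref{cls-ok} (resp.\ \rref{mod-ok}, \rref{prog-ok}) dispatches recursively on each member of the class (resp.\ module, program) in a fixed way, so the proof-tree path leading to the \rref{d-def} application for the declaration indexed $d$ is determined entirely by the AST path to $x_d$. Hence both $\ctx_1$ and $\ctx_2$ must reach the \emph{same} instance of \rref{d-def}, namely the one checking the field containing $x_d$, and descend into the same premise (the third one of \rref{d-def}). This forces $s_1 = s_2$, and moreover fixes the entire structural context above the hole.

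The main obstacle is making precise the informal step ``typing derivations are guided by the AST''. Concretely, one needs a lemma that for each AST node there is at most one rule application in any valid derivation whose principal subject is that node, and that the proof-tree position of that application is determined by the AST position. I would establish this by straightforward induction on the typing judgements, observing that each rule in \cref{fig:aml-full-modules,fig:aml-full-classes,fig:aml-full-expr,fig:aml-full-access-modifiers} is syntax-directed on its subject. Combining this with \cref{lem:scope-unique} (to rule out duplicated scope-introduction contexts) and \cref{lem:eq-in-ctx} to compare the two sides of the equation yields $\ctx_1 = \ctx_2$.
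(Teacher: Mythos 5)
Your proposal is correct and follows essentially the same route as the paper's own proof sketch: both arguments rest on the uniqueness of the AST index $x_d$ (\cref{asmp:index-unique}) together with the observation that the typing rules never propagate the AST fragment containing $x_d$ to more than one premise, so the declaration assertion for $x_d$ can occur at only one position in the proof tree. Your version is somewhat more explicit about the auxiliary syntax-directedness lemma one would need to make this rigorous, but it is the same idea at the same level of detail as the paper's (itself only sketched) argument.
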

This lemma states that, if plugging in an assertion of the declaration of $x_d$ in $\ctx_1$ and~$\ctx_2$ yields the same proof, it follows that $\ctx_1$ and $\ctx_2$ are equal.
That is, there is only one single position in the proof tree where the existence of a declaration containing $x_d$ is asserted.

\begin{proofsketch}
This lemma should follow from the facts that:
\begin{itemize}
  \item $x_d$ is unique in the program (\cref{asmp:index-unique}), and
  \item no typing rules in the AML specification that match on a part of the AST containing~$x_d$ propagate it to multiple premises.
\end{itemize}
These conditions guarantee that $x_d$ is propagated to at most one declaration assertion.
\end{proofsketch}
The second condition of this proof sketch is necessary, because otherwise $x_d$, despite of being unique in the AST, could be part of multiple sub-trees that contain an assertion of the declaration.

By uniqueness of derivations in some context, it follows that the declaration scope, type and access policy should be unique for each declaration too.

\begin{corollary}[Uniqueness of Declaration Parameters]
  \label{lem:decl-params-unique}
  Given 
  \[
    d_1 = \defdecl{x_d}{T_1}{A_1} 
    \qquad
    d_2 = \defdecl{x_d}{T_2}{A_2}
  \]
  the following holds:
  \[
    \left(
      \inCtx[_1]{s_1 \typeedget[\lblVAR] d_1 \in \SG}
      =
      \inCtx[_2]{s_2 \typeedget[\lblVAR] d_2 \in \SG}
    \right)
    \Rightarrow
    s_1 = s_2 
    \land 
    T_1 = T_2 
    \land 
    A_1 = A_2
  \]
\end{corollary}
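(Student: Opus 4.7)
The plan is to derive this corollary as an immediate consequence of Lemma~\ref{lem:decl-ctx-unique} (Uniqueness of Declarations) combined with Lemma~\ref{lem:eq-in-ctx} (Equality of Substitutions). The two lemmas together let us first pin down the contexts, and then read off equality of the substituted assertions.

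First I would invoke Lemma~\ref{lem:decl-ctx-unique} on the hypothesis
\[
  \inCtx[_1]{s_1 \typeedget[\lblVAR] d_1 \in \SG}
  = \inCtx[_2]{s_2 \typeedget[\lblVAR] d_2 \in \SG},
\]
noting that the data $d_1$ and $d_2$ both carry the same declaration index $x_d$ (by Assumption~\ref{asmp:index-unique}). This yields $\ctx_1 = \ctx_2$. Call this common context $\ctx$.

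Next, with $\ctx_1 = \ctx_2 = \ctx$, the hypothesis becomes
\[
  \inCtx{s_1 \typeedget[\lblVAR] d_1 \in \SG}
  = \inCtx{s_2 \typeedget[\lblVAR] d_2 \in \SG},
\]
so Lemma~\ref{lem:eq-in-ctx} applied to $\ctx$ gives
\[
  s_1 \typeedget[\lblVAR] d_1 \in \SG
  \ =\ s_2 \typeedget[\lblVAR] d_2 \in \SG.
\]
Finally, this is a syntactic equality of scope-graph edge assertions; projecting on each component yields $s_1 = s_2$ and $d_1 = d_2$. Unfolding $d_1 = \defdecl{x_d}{T_1}{A_1}$ and $d_2 = \defdecl{x_d}{T_2}{A_2}$ and comparing the remaining components gives $T_1 = T_2$ and $A_1 = A_2$, which completes the proof.

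The only mild subtlety is ensuring that the application of Lemma~\ref{lem:decl-ctx-unique} is justified: that lemma is phrased over a single shared declaration variable $x_d$, which is exactly the situation here since both $d_1$ and $d_2$ carry $x_d$. Every other step is pure syntactic projection, so I do not expect any real obstacle—the corollary is essentially bookkeeping on top of Lemma~\ref{lem:decl-ctx-unique}.
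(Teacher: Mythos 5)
Your proof is correct and follows exactly the paper's own argument: apply \cref{lem:decl-ctx-unique} to identify the contexts, then \cref{lem:eq-in-ctx} to equate the substituted declaration assertions, and read off the component equalities. Nothing to add.
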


\begin{proof}
  From \cref{lem:decl-ctx-unique}, it follows that $\ctx_1 = \ctx_2$.
  Now, from~\cref{lem:eq-in-ctx}, we infer that
  \[
    \left(s_c \typeedget[\lblVAR] d_1 \in \SG\right) = \left(s_c' \typeedget[\lblVAR] d_2 \in \SG\right)
  \]
  from which the desired equalities follow immediately.
\end{proof}
Similarly, module declarations are unique:

\begin{assumption}[Uniqueness of Modules]
  \label{lem:mod-decl-ctx-unique}

  For any scope graph $\SG$, module scope $s_m$, declarations $x_1$ and $x_2$, contexts $\ctx_1$ and $\ctx_2$, scopes $s_1$ and $s_2$, it holds that
    \[
      \left(
        \inCtx[_1]{s_1 \typeedget[\lblMOD] (moddecl{x_2}{s_m}) \in \SG}
        =
        \inCtx[_2]{s_2 \typeedget[\lblMOD] (\moddecl{x_1}{s_m}) \in \SG}
      \right)
      \Rightarrow
      \ctx_1 = \ctx_2
    \]
\end{assumption}

\begin{proofsketch}
  Using~\cref{lem:scope-unique}, and the fact that~\rref{mod-ok} takes ownership over modules scopes,
  one can infer that module declarations are unique up to module scope $s_m$.
\end{proofsketch}

\begin{corollary}[Uniqueness of Module Parameters]
  \label{lem:mod-decl-params-unique}

  For any scope graph $\SG$, module scope $s_m$, declarations $x_1$ and $x_2$, contexts $\ctx_1$ and $\ctx_2$, scopes $s_1$ and $s_2$, it holds that
    \[
      \left(
        \inCtx[_1]{s_1 \typeedget[\lblMOD] (\moddecl{x_2}{s_m}) \in \SG}
        =
        \inCtx[_2]{s_2 \typeedget[\lblMOD] (\moddecl{x_1}{s_m}) \in \SG}
      \right)
      \Rightarrow
      s_1 = s_2 \land x_1 = x_2
    \]
\end{corollary}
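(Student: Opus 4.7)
The plan is to mirror the proof of Corollary B.9 (Uniqueness of Declaration Parameters) almost verbatim, since the shape of the claim is identical: we want to extract equality of the parameters of a declaration from the equality of two typing derivations that substitute that declaration assertion into proof contexts. The only substantive difference is that we are now reasoning about module declarations under $\lblMOD$ rather than variable declarations under $\lblVAR$, so we invoke Assumption~B.10 (Uniqueness of Modules) instead of Lemma~B.8 (Uniqueness of Declarations).

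Concretely, I would proceed in three steps. First, I would apply Assumption~B.10 directly to the hypothesis
\[
  \inCtx[_1]{s_1 \typeedget[\lblMOD] (\moddecl{x_2}{s_m}) \in \SG}
  =
  \inCtx[_2]{s_2 \typeedget[\lblMOD] (\moddecl{x_1}{s_m}) \in \SG}
\]
to obtain $\ctx_1 = \ctx_2$. Second, substituting this equality of contexts back into the hypothesis and applying Lemma~B.2 (Equality of Substitutions), I would conclude that the plugged-in propositions themselves are equal, i.e.,
\[
  \bigl(s_1 \typeedget[\lblMOD] (\moddecl{x_2}{s_m}) \in \SG\bigr)
  =
  \bigl(s_2 \typeedget[\lblMOD] (\moddecl{x_1}{s_m}) \in \SG\bigr).
\]
Third, syntactic equality of these two scope-graph assertions forces equality of each constituent: the source scope ($s_1 = s_2$), the edge label (trivially $\lblMOD = \lblMOD$), and the declaration data $\moddecl{x_2}{s_m} = \moddecl{x_1}{s_m}$, from which $x_1 = x_2$ follows immediately.

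I do not expect any real obstacle here: the corollary is purely a projection argument, and its proof is entirely parallel to that of Corollary~B.9. The only part worth being careful about is making sure the indices on $x_1$ and $x_2$ are lined up correctly on the two sides of the equation (note the asymmetric placement in the statement: $x_2$ appears in $\ctx_1$ and $x_1$ in $\ctx_2$), but once the two declaration data terms are syntactically equated this bookkeeping resolves itself. All the heavy lifting has already been done by Assumption~B.10, whose justification in turn appeals to Lemma~B.4 (Scope Uniqueness) together with the observation that rule~\rref{mod-ok} takes ownership of the module scope $s_m$.
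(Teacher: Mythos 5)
Your proposal is correct and matches the paper's intent exactly: the paper's own proof is simply the remark that the argument is ``similar to'' that of \cref{lem:decl-params-unique}, and what you write out is precisely that argument instantiated for module declarations (context uniqueness via the module-uniqueness assumption, then \cref{lem:eq-in-ctx}, then syntactic projection of the components). Your extra care about the asymmetric placement of $x_1$ and $x_2$ is a reasonable observation but, as you note, dissolves once the two assertions are equated.
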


\begin{proofsketch}
  Similar to~\cref{lem:decl-params-unique}.
\end{proofsketch}

\subparagraph{Resolve Predicates.}
Using this, we define a $\mathsf{resolveTo}$ predicate, which captures what it means for a variable to resolve to some declaration in a particular scope.
\begin{definition}[$\mathsf{resolveTo}$]
  \label{def:resolveTo}
  \begin{inferruleraw}
    \inference{
      P_\SG =
      \inCtx{
        \query{\SG}{s}{\reLEX}{\mathsf{isVar}_{x_r}}{\lblVAR < \lblEXT < \lblLEX}{\setOf{ \langle p, \defdecl{x_d}{T}{A} \rangle }}
      }
    }{
      \mathsf{resolveTo}_{P_\SG}(s, x_r) =  \defdecl{x_d}{T}{A}
    }
  \end{inferruleraw}
  \begin{inferruleraw}
    \inference{
      P_\SG = \inCtx[_1]{\premExp{\SG}{s}{e_r.x_r}{T}}
      &
      P_\SG = \inCtx[_2]{\premExp{\SG}{s}{e_r}{\tyINST{s_c}}}
      \\
      P_\SG = \inCtx[_3]{\query{\SG}{s_c}{\reMEM}{\mathsf{isVar}_{x_r}}{\lblVAR < \lblEXT}{\setOf{ \langle p, \defdecl{x_d}{T}{A}\rangle }}}
    }{
      \mathsf{resolveTo}_{P_\SG}(s, x_r) =  \defdecl{x_d}{T}{A}
    }
  \end{inferruleraw}
\end{definition}
This predicate states that the typing derivation $P_\SG$ witnesses that $x_r$ resolves to $x_d$ \emph{in scope}~$s$ in scope graph $\SG$.
The first case asserts there is a query for lexical references looking for $x_r$ in scope $s$, resulting in declaration $x_d$.
The second case matches on field accesses.
To this end, the first premise asserts there exists some expression $e_r$ on which field $x_r$ is accessed.
This expression should have type $\tyINST{s_c}$, for some class scope $s_c$ (second premise).
Then, the last premise asserts the existence of a query for member access that resolve $x_r$ in $s_c$ to~$x_d$.

Next, we the following lemma states that resolving $x_r$ to $x_d$ implies that there exists some path~$p$ that grants access to $x_d$ in the scope $s_r$ in which $x_r$ is resolved.

\begin{lemma}[Resolve implies Accessible]
  \label{lem:res-to-acc}
  \begin{inferruleraw}
    \mathsf{resolveTo}_{P_\SG}(s_r, x_r) = \defdecl{x_d}{T}{A}\ 
    \Rightarrow\ 
    \exists p.\
    \premMAcc{\SG}{s_c}{p}{A}
  \end{inferruleraw}
\end{lemma}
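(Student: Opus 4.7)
The plan is to case-analyze on Definition~\ref{def:resolveTo} of $\mathsf{resolveTo}$, which has two cases: a lexical variable reference (witnessed by a $\reLEX$-query) and a field access (witnessed by a $\reMEM$-query). In each case, the hypothesis tells us that the typing derivation $P_\SG$ contains a query of a specific shape at some position in the proof tree, and the goal is to produce a path $p$ such that $\premMAcc{\SG}{s_r}{p}{A}$ is derivable. The key idea is that these queries can only originate from rule~\rref{t-var} or rule~\rref{t-fld}, each of which already carries an access-check premise as a sibling.

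First I would establish a small structural lemma by syntactic inspection of the AML specification (Figures~\ref{fig:aml-full-modules}, \ref{fig:aml-full-classes} and~\ref{fig:aml-full-expr}): a query with head $\mathsf{isVar}_{x_r}$ and regular expression $\reLEX$ occurs only as the first premise of~\rref{t-var}, while a query with head $\mathsf{isVar}_{x_r}$ and regular expression $\reMEM$ occurs only as the second premise of~\rref{t-fld}. Together with the derivation-context formalism of Section~D.1, this lets me rewrite the context $\ctx$ in Definition~\ref{def:resolveTo} as an enclosing context $\ctx'$ whose hole contains the full~\rref{t-var} (resp.~\rref{t-fld}) instance.

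In the lexical case, I then have $P_\SG = \inCtx[']{\premExp{\SG}{s_r}{x_r}{T}}$ derived by~\rref{t-var}. The rule's first premise performs the very query mentioned in Definition~\ref{def:resolveTo}; by Assumption~\ref{lem:query-answer-unique} (query uniqueness), its answer set is the singleton $\{\langle p, \defdecl{x_d}{T}{A}\rangle\}$ appearing in the definition. The rule's second premise is $\premMAcc{\SG}{s_r}{p}{A}$ itself, so a subtree establishing this judgment occurs inside $\ctx'$; lifting it out via Lemma~\ref{lem:lift-derivation} witnesses the existential. The field-access case is analogous, with~\rref{t-fld} playing the role of~\rref{t-var} and the reference scope $s_r$ (the scope of the enclosing expression) again being the one on which $\premMAcc$ is checked. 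Note that the statement writes $s_c$ in the conclusion, which I read as a typo for the scope $s_r$ at which resolution occurred; the proof yields the access judgment at whichever scope the ambient~\rref{t-var}/\rref{t-fld} instance uses.

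The main obstacle is the structural lemma about where queries may appear: formally it is a routine case split over every typing rule, but it does require confirming that no auxiliary predicate (e.g.\ $\premQMod$, $\premQCls$, $\premEncM$, $\premEncC$) can spawn a $\mathsf{isVar}_{x_r}$ query, and that $\premExp$ is not recursively applied in a way that makes the enclosing context ambiguous. Once that bookkeeping lemma is in hand, the rest is a direct application of query uniqueness and derivation lifting, with no delicate reasoning about accessibility itself.
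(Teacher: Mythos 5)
Your proposal is correct and follows essentially the same route as the paper's own proof: invert on the two cases of $\mathsf{resolveTo}$, argue that the witnessing query can only occur as a premise of~\rref{t-var} (resp.~\rref{t-fld}) — which the paper calls ``forward reasoning'' and you make explicit as a structural lemma — and then read off the sibling $\premMAcc{\SG}{s}{p}{A}$ premise via inversion and derivation lifting. Your observations that query-answer uniqueness is needed to identify the two query occurrences, and that the $s_c$ in the lemma's conclusion should be the resolution scope $s_r$, are both accurate refinements of what the paper leaves implicit.
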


\begin{proof}
Inversion on $\mathsf{resolveTo}_{P_\SG}(s_r, x_r)$ gives two cases: one for lexical access, and one for member access.

\proofsubparagraph{Lexical Access.}
In this case, inversion gives us:
\begin{inferruleraw}
  P_\SG = 
  \inCtx{
    \query{\SG}{s_r}{\reLEX}{\mathsf{isVar}_{x_r}}{\lblVAR < \lblEXT < \lblLEX}{\setOf{ \langle p, \defdecl{x_d}{T}{A} \rangle }}
  }
\end{inferruleraw}
By forward reasoning (under the assumption that $P_\SG$ is a typing derivation for a complete program),
we infer that there exists a context $\ctx'$ such that
\[
  P_\SG = \inCtx[']{\premExp{\SG}{s}{x_r}{T}}
\]
By inversion, we infer that there exists a context $\ctx''$ such that
\[
  P_\SG = \inCtx['']{\premMAcc{\SG}{s}{p}{A}}
\]

\proofsubparagraph{Member Access.}
Follows from a similar argument.
\end{proof}

Using $\mathsf{resolveTo}$, $\mathsf{resolve}$ is then defined as as follows.
\begin{definition}[$\mathsf{resolve}$]
  \label{def:resolve}
  \begin{inferruleraw}
    \inference{
      \mathsf{resolveTo}_{P_\SG}(s, x_r) = \defdecl{x_d}{T}{A}
    }{
      \mathsf{resolve}_{P_\SG}(x_r) = x_d
    }
  \end{inferruleraw}
\end{definition}
That is, $x_r$ resolves to $x_d$ if there is some scope $s$ in which it was resolved.

Finally, we define module resolution as follows:
\begin{definition}[$\mathsf{resolveModIn}$]
  \label{def:resolve-mod-in}
  \begin{inferruleraw}
    \inference{
      P_\SG = 
      \inCtx{
        \query{\SG}{s_r}{\reclos{\lblLEX}\lblMOD}{\mathsf{isMod}_{x_r}}{\lblMOD < \lblLEX}{\setOf{ \langle p, d \rangle }}
      }
    }{
      \mathsf{resolveModIn}_{P_\SG}(x_r, s_r) = d
    }
  \end{inferruleraw}
\end{definition}
\begin{definition}[$\mathsf{resolveMod}$]
  \label{def:resolve-mod}
  \begin{inferruleraw}
    \inference{
      \mathsf{resolveModIn}_{P_\SG}(x_r, s_r) = \moddecl{x_d}{s_m}
    }{
      \mathsf{resolveMod}_{P_\SG}(x_r) = s_m
    }
  \end{inferruleraw}
\end{definition}
These definitions are analogous to $\mathsf{resolve}$ and $\mathsf{resolveTo}$.

\subparagraph*{Module Relation Predicates}

Now, we state he $\mathsf{enclosingMod}$ predicate as follows:
\begin{definition}[$\mathsf{enclosingMod}$]
  \label{def:defEncMod}
  \begin{inferruleraw}
    \inference{
      \mathsf{resolveTo}_{P_\SG}(s_r, x_r) = d
      &
      \premEncMI{\SG}{s_r}{s_m}
    }{
      \mathsf{enclosingMod}_{P_\SG}(x_r) = s_m
    }
  \end{inferruleraw}
\end{definition}

\subparagraph*{Class Relation Predicates}

Next, we define some functions/relations between classes.
Next, we define $\mathsf{definingClass}$ as follows:
\begin{definition}[$\mathsf{definingClass}$]
  \label{def:defCls}
  \begin{inferruleraw}
    \inference{
      P_\SG =
      \inCtx{
        \premCMem{\SG}{s}{\mathit{acc} \mathrel{\lit*{var}} x_d \mathrel{\lit*{=}} e}
      }
    }{
      \mathsf{definingClass}_{P_\SG}(x_d) = s
    }
  \end{inferruleraw}
\end{definition}
This rule states that the declaration of $x_d$ occurs in some scope $s$.
This must necessarily be the scope of the class that defines $x_d$, so we return that as a canonical identifier for the defining class of $x_d$.

Furthermore, we define the $\mathsf{subClass}$ predicate as follows:
\begin{definition}[$\mathsf{subClass}$]
  \label{def:defSub}
  \begin{inferruleraw}
    \inference{
      \query{\SG}{s_c}{\reclos{\lblEXT}}{\mathsf{isScope}_{s_p}}{}{R}
      &
      R \neq \emptyset
    }{
      \mathsf{subClass}_{P_\SG}(s_c, s_p)
    }
  \end{inferruleraw}
\end{definition}
This definition states that $s_c$ is a subclass of $s_p$ when there exists a path from $s_c$ to $s_p$ that traverses only $\lblEXT$-edges.

Finally, we define the $\mathsf{enclosingClass}$ predicate as follows:
\begin{definition}[$\mathsf{enclosingClass}$]
  \label{def:defEnc}
  \begin{inferruleraw}
    \inference{
      \mathsf{resolveTo}_{P_\SG}(s_r, x_r) = d
      &
      \premEncC{\SG}{s_r}{S_C}
      &
      s_c \in S_C
    }{
      \mathsf{enclosingClass}_{P_\SG}(x_r, s_c)
    }
  \end{inferruleraw}
\end{definition}
First, we capture the scope $s_r$ in which $x_r$ resolved using $\mathsf{resolveTo}$.
Next we assert that $s_c$ is an enclosing class of $s_r$.

\subsection{Soundness of Private Access Validation}
\label{subsec:priv-sound}

First, we define a $\mathsf{private}(x_d)$ that holds when $x_d$ is a variable declared with the \lit*{private} access modifier:
\begin{definition}[$\mathsf{private}$]
  \label{def:priv}
  \begin{inferruleraw}
    \inference{
      P_\SG =
      \inCtx{
          \premCMem{\SG}{s}{\lit*{private} \mathrel{\lit*{var}} x_d \mathrel{\lit*{=}} e}
      }
    }{
      \mathsf{private}_{P_\SG}(x_d)
    }
  \end{inferruleraw}
\end{definition}
The premise of this rule states that $x_d$ was declared with the \texttt{private} access modifier.

\begin{lemma}
  \label{lem:x-decl}
  \begin{inferruleraw}
    \mathsf{private}_{P_\SG}(x_d)\ 
    \Rightarrow\
    \exists s_c,\, T.\
    s_c \typeedget[\lblVAR] (\defdecl{x_d}{T}{\PRV}) \in \SG
  \end{inferruleraw}
\end{lemma}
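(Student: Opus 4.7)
The plan is to unfold the definition of $\mathsf{private}_{P_\SG}(x_d)$ and then invert the single applicable typing rule for class member declarations. First I would apply inversion on $\mathsf{private}_{P_\SG}(x_d)$ using \cref{def:priv}, which yields a context $\ctx$, a scope $s$, and an expression $e$ such that $P_\SG = \inCtx{\premCMem{\SG}{s}{\lit*{private} \mathrel{\lit*{var}} x_d \mathrel{\lit*{=}} e}}$. By \cref{lem:lift-derivation}, this derivation appears as a valid subtree in $P_\SG$.

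Next I would invert that subtree. The only rule that concludes a judgment of the form $\premCMem{\SG}{s}{\mathit{acc} \mathrel{\lit*{var}} x \mathrel{\lit*{=}} e}$ is \rref{d-def}, so the inversion yields (for some type $T$ and access policy $A$) three premises: $\premExp{\SG}{s}{e}{T}$, $\premAcc{\SG}{s}{\lit*{private}}{A}$, and $s \typeedget[\lblVAR] (\defdecl{x_d}{T}{A}) \in \SG$. Inverting the second premise, the only access modifier rule matching the \lit*{private} keyword is \rref{a-priv}, which forces $A = \PRV$. Substituting back gives $s \typeedget[\lblVAR] (\defdecl{x_d}{T}{\PRV}) \in \SG$, and choosing $s_c = s$ witnesses the existential.

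The proof is essentially routine: it is a chain of two inversions, each on a judgment form that has only one matching rule in the specification. No real obstacles arise, provided one is careful to note that \rref{a-priv} is indeed the \emph{unique} rule concluding $\premAcc{\SG}{s}{\lit*{private}}{A}$ (a simple syntactic check of \cref{fig:aml-full-access-modifiers}), and that \rref{d-def} is the unique rule concluding a class-member well-formedness judgment whose subject has the form $\mathit{acc} \mathrel{\lit*{var}} x \mathrel{\lit*{=}} e$ (the only alternative, nested class declarations, has a different subject shape).
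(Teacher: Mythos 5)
Your proposal is correct and follows essentially the same route as the paper's proof: unfold the definition of $\mathsf{private}_{P_\SG}(x_d)$, invert the resulting class-member judgment via \rref{d-def}, and read off the scope-graph declaration with policy $\PRV$. The only cosmetic difference is that you make explicit the inversion of $\premAcc{\SG}{s}{\lit*{private}}{A}$ via \rref{a-priv} to force $A = \PRV$, a step the paper's proof performs implicitly by writing $\PRV$ directly in the premises obtained from inversion; the paper also explicitly invokes \cref{lem:lift-derivation} at the end to extract the declaration assertion from its context, which your ``substituting back'' step tacitly relies on.
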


\begin{proof}
From $\mathsf{private}_{P}(x_d)$, we derive $\inCtx{\premCMem{\SG}{s_c}{\lit*{private} \mathrel{\lit*{var}} x_d \mathrel{\lit*{=}} e}}$, for some $\ctx$.
Inversion yields a single case, using rule~\rref{d-def}, from which we infer that there exists some $\ctx_1$, $\ctx_2$ and $\ctx_3$ such that
\[
  \inCtx[_1]{\premExp{\SG}{s_c}{e}{T}}
  \qquad
  \inCtx[_2]{\premAcc{\SG}{s_c}{\lit*{private}}{\PRV}}
  \qquad
  \inCtx[_3]{s_c \typeedget[\lblVAR] (\defdecl{x_d}{T}{\PRV}) \in \SG}
\]
From the third premise, the goal follows by~\cref{lem:lift-derivation}.
\end{proof}
We can now prove the correctness of our \texttt{private} access modifier specification.
The desired semantics of the \texttt{private} access modifier is defined as follows:
\begin{theorem}[Soundness of \texttt{private} member access, informally]
\label{def:priv-inf}
If $x_d$ is a private field, any reference $x_r$ that resolves to $x_d$ must live in the class that defines $x_d$.
\end{theorem}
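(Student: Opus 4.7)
The plan is to extract, from the two hypotheses, a single concrete declaration node in the scope graph and use the uniqueness of that node to line up the \emph{defining class} produced by \textsf{private} with the \emph{target scope} of the resolution path, and then to read off the enclosing-class condition from the unique access-policy rule that could have witnessed access under \PRV.

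First I would unpack $\mathsf{resolve}_{P_\SG}(x_r) = x_d$ via \cref{def:resolve} and \cref{def:resolveTo} to obtain a scope $s_r$, a type $T$, an access policy $A$, and a query with singleton answer $\setOf{\langle p, \defdecl{x_d}{T}{A}\rangle}$. From \cref{lem:query-decl} I then get the supporting declaration $\mathsf{tgt}(p) \typeedget[\lblVAR] (\defdecl{x_d}{T}{A}) \in \SG$. In parallel, from $\mathsf{private}_{P_\SG}(x_d)$ I invoke \cref{lem:x-decl} to obtain some scope $s_c$ and type $T'$ with $s_c \typeedget[\lblVAR] (\defdecl{x_d}{T'}{\PRV}) \in \SG$. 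By the declaration support lemma (\cref{lem:decl-support}) both of these appear as subtrees of $P_\SG$; applying \cref{lem:decl-ctx-unique} and then \cref{lem:decl-params-unique} forces $s_c = \mathsf{tgt}(p)$, $T = T'$, and crucially $A = \PRV$.

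Next I would transport this information to the access check. By \cref{lem:res-to-acc} there exists a path (the same $p$, up to the caveat about its source scope in that lemma's statement) with $\premMAcc{\SG}{s_r}{p}{\PRV}$. I then invert this derivation; inspecting \cref{fig:aml-full-access-policy}, every rule other than \rref{ap-priv} concludes in a specific non-\PRV policy, so only \rref{ap-priv} can apply. Its premises give me $\premEncC{\SG}{s_r}{S_C}$ together with $\mathsf{tgt}(p) \in S_C$.

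Finally, I set $s_d \mathrel{:=} s_c = \mathsf{tgt}(p)$ and discharge the two conjuncts. The equation $\mathsf{definingClass}_{P_\SG}(x_d) = s_d$ follows from \cref{def:defCls}: unpacking $\mathsf{private}_{P_\SG}(x_d)$ as in the proof of \cref{lem:x-decl} exhibits a $\premCMem{\SG}{s_c}{\lit*{private} \mathrel{\lit*{var}} x_d \mathrel{\lit*{=}} e}$ subtree in $P_\SG$, which is exactly the witness \cref{def:defCls} asks for. The predicate $\mathsf{enclosingClass}_{P_\SG}(x_r, s_d)$ then discharges via \cref{def:defEnc}, using the already-established $\mathsf{resolveTo}_{P_\SG}(s_r, x_r)$, the $\premEncC{\SG}{s_r}{S_C}$ obtained from inversion, and $s_d \in S_C$. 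The main obstacle is the bookkeeping in the middle step: I need the two independently produced $\lblVAR$ declarations (one from the query, one from $\mathsf{private}$) to be literally the same scope-graph fact, and this rests on \cref{asmp:index-unique} together with the uniqueness lemmas; any slack there would break the identification $s_c = \mathsf{tgt}(p)$ on which the whole proof pivots.
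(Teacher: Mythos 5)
Your proposal is correct and follows essentially the same route as the paper's proof of \cref{def:priv-form}: unpack $\mathsf{resolve}$ to get the query declaration, use the \lit*{private} hypothesis (the paper inlines what you delegate to \cref{lem:x-decl}) to get a second declaration with policy \PRV, identify the two via \cref{lem:decl-params-unique}, then apply \cref{lem:res-to-acc} and invert to \rref{ap-priv} to obtain $\premEncC{\SG}{s_r}{S_C}$ and $\mathsf{tgt}(p) \in S_C$, discharging the two conjuncts via \cref{def:defCls} and \cref{def:defEnc}. Your parenthetical about the source scope in \cref{lem:res-to-acc} correctly flags a minor notational slip in that lemma's statement ($s_c$ where $s_r$ is meant), but this does not affect the argument.
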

Captured in more mathematical notation, \cref{def:priv-inf} is equivalent to the following:

\begin{theorem}[Soundness of \texttt{private} member access, formally]
\label{def:priv-form}
Let $P_\SG$ be a valid typing derivation for an AML program with scope graph $\SG$.
Then, for all possible $x_r$ and $x_d$,
\begin{align*}
  &\mathsf{resolve}_{P_\SG}(x_r) = x_d 
    \mathrel{\land} 
    \mathsf{private}_{P_\SG}(x_d)\ 
  \Rightarrow {}\\
  &\hspace{1em}
  \exists s_c. 
    \mathsf{definingClass}_{P_\SG}(x_d) = s_c 
    \mathrel{\land} 
    \mathsf{enclosingClass}_{P_\SG}(x_r, s_c)
\end{align*}
\end{theorem}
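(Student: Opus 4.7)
The plan is to chain the two hypotheses through the relevant definitions and terminate with a single inversion of \rref{ap-priv}. First I would unfold $\mathsf{resolve}_{P_\SG}(x_r) = x_d$ via \cref{def:resolve} to obtain a scope $s_r$ and an access policy $A'$ with $\mathsf{resolveTo}_{P_\SG}(s_r, x_r) = \defdecl{x_d}{T'}{A'}$, and then apply \cref{lem:res-to-acc} to extract a path $p$ witnessing $\premMAcc{\SG}{s_r}{p}{A'}$. In parallel, $\mathsf{private}_{P_\SG}(x_d)$ together with \cref{lem:x-decl} yields some scope $s_c$ and type $T$ with a declaration $s_c \typeedget[\lblVAR] (\defdecl{x_d}{T}{\PRV}) \in \SG$.

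The central technical step is to identify the declaration reached by the resolution query with the one produced from the private modifier. Inspecting either branch of \cref{def:resolveTo} and applying \cref{lem:query-decl} tells us that $\mathsf{tgt}(p) \typeedget[\lblVAR] (\defdecl{x_d}{T'}{A'}) \in \SG$. Lifting both $\SG$-assertions into the proof tree via \cref{lem:decl-support} and then invoking \cref{lem:decl-ctx-unique} together with \cref{lem:decl-params-unique} forces $\mathsf{tgt}(p) = s_c$, $T = T'$, and crucially $A' = \PRV$.

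With $A' = \PRV$, the assertion $\premMAcc{\SG}{s_r}{p}{\PRV}$ can only be derived by \rref{ap-priv}, since no other rule in \cref{fig:aml-full-access-policy} concludes $\PRV$. Inversion yields $\premEncC{\SG}{s_r}{S_C}$ with $\mathsf{tgt}(p) \in S_C$, i.e.\ $s_c \in S_C$. The first conjunct of the goal, $\mathsf{definingClass}_{P_\SG}(x_d) = s_c$, follows directly from \cref{def:defCls} applied to the $\premCMem$ context extracted from $\mathsf{private}_{P_\SG}(x_d)$ (this is exactly the sub-derivation opened by inversion on \rref{d-def} inside the proof of \cref{lem:x-decl}). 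The second conjunct, $\mathsf{enclosingClass}_{P_\SG}(x_r, s_c)$, then follows from \cref{def:defEnc} using the already-established $\mathsf{resolveTo}_{P_\SG}(s_r, x_r)$, $\premEncC{\SG}{s_r}{S_C}$, and $s_c \in S_C$.

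The main obstacle will be the declaration-uniqueness step: the two $\SG$-assertions about $x_d$ originate in different parts of the derivation (one inside a $\premCMem$ premise of \rref{d-def}, the other inside a query answer of a $\mathsf{resolveTo}$-witness), so I need to apply \cref{lem:decl-support} to turn both into in-context assertions and then use \cref{lem:decl-ctx-unique} to equate the contexts before extracting the parameter equalities via \cref{lem:decl-params-unique}. Once that equation goes through, the rest is routine unfolding of definitions and a one-case inversion.
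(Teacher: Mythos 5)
Your proposal is correct and follows essentially the same route as the paper's own proof: unfold $\mathsf{resolve}$/$\mathsf{resolveTo}$, use \cref{lem:query-decl} together with declaration uniqueness to force the resolved policy to be $\PRV$ and the path target to be $s_c$, then invert \rref{ap-priv} and assemble the two conjuncts via \cref{def:defCls} and \cref{def:defEnc}. The only difference is cosmetic — you make the plumbing through \cref{lem:decl-support} and \cref{lem:decl-ctx-unique} explicit, where the paper invokes \cref{lem:decl-params-unique} directly.
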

Here, $\mathsf{resolve}_{P_\SG}(x_r) = x_d$ means that the typing derivation of $P_{\SG}$ witnesses that reference $x_r$ resolves to declaration $x_d$,
$\mathsf{private}_{P_\SG}(x_d)$ means that $x_d$ is a class field with \texttt{private} access,
$\mathsf{definingClass}_{P_\SG}(x_d) = c$ means that $x_d$ is a member of the class represented by scope $s_c$, and
$\mathsf{enclosingClass}_{P_\SG}(x_r, c)$ holds if class $c$ contains the reference $x_r$.

Next, we prove that~\cref{def:priv-form} is satisfied by our specification.
\begin{proof}

  First, we prove that there is a class $s_c$ in which $x_d$ is defined.
  Then, we prove that this class is an enclosing class of $x_r$.
  
  \proofsubparagraph{Defining Class.}
  We get $\mathsf{definingClass}_{P_\SG}(x_d) = s_c$ in the conclusion as follows.
  By inversion on the premise $\mathsf{private}_{P_\SG}(x_d)$, we get a $\ctx$, $s_c$, and $e$ for which $\inCtx{\premCMem{\SG}{s_c}{\lit*{private} \mathrel{\lit*{var}} x_d \mathrel{\lit*{=}} e}}$.
  From this and~\cref{def:defCls}, the goal follows.

  \proofsubparagraph{Enclosing Class.}
  Next, we apply inversion on resolve (\cref{def:resolve}), from which we infer that there exists some scope~$s_r$, such that
  \[
    \mathsf{resolveTo}_{P_\SG}(s_r, x_r) = \defdecl{x_d}{T'}{A}
  \]
  By further inversion of $\mathsf{resolveTo}$ (\cref{def:resolveTo}), and~\cref{lem:query-decl}, we infer that,   for some scope~$s_c'$:
  \[
    s_c' \typeedget[\lblVAR] \defdecl{x_d}{T'}{A} \in \SG
  \]
  Next, by inversion on $\inCtx{\premCMem{\SG}{s_c}{\lit*{private} \mathrel{\lit*{var}} x_d \mathrel{\lit*{=}} e}}$
  and~\cref{lem:lift-derivation}, we infer that
  \[
    s_c \typeedget[\lblVAR] \defdecl{x_d}{T}{\PRV} \in \SG
  \]
  Then, by~\cref{lem:decl-params-unique}, we infer that
  \[
    s_c = s_c'
    \qquad
    T = T'
    \qquad
    A = \PRV
  \]
  Thus, we have established that the access modifier included in the query result is \PRV.
  To prove the right conjunct of our goal (\ie, that $s_c$ is an enclosing class of $s_r$), we first apply~\cref{lem:res-to-acc}.
  This gives us
  \[
    \premMAcc{\SG}{s_r}{p}{\PRV}
  \]
  for the resolution path $p$.
  We know by~\cref{lem:query-decl} that $\mathsf{tgt}(p) = s_c$, as~$p$ was the path towards declaration~$x_d$.
  By inversion on this premise (\rref{ap-priv}), we obtain $\premEncC{\SG}{s_r}{S_C}$ and $\mathsf{tgt}(p) \in S_C$ for some class scope set $S_C$.
  Combined with our earlier result that $\mathsf{resolveTo}_{P_\SG}(s_r, x_r) = \defdecl{x_d}{T}{A}$, this proves $\mathsf{enclosingClass}(x_r, s_c)$ by~\cref{def:defEnc}.
\end{proof}
\subsection{Soundness of Protected Access Validation}
\label{subsec:prot-sound}

First, we define a $\mathsf{protected}(x_d)$ that holds when $x_d$ is a variable declared with the \lit*{protected} access modifier:
\begin{definition}[$\mathsf{protected}$]
  \label{def:prot}
  \begin{inferruleraw}
    \inference{
      P_\SG =
      \inCtx{
          \premCMem{\SG}{s}{\lit*{protected} \mathrel{\lit*{var}} x_d \mathrel{\lit*{=}} e}
      }
    }{
      \mathsf{protected}_{P_\SG}(x_d)
    }
  \end{inferruleraw}
\end{definition}
The premise of this rule states that $x_d$ was declared with the \texttt{protected} access modifier.

\begin{lemma}
  \label{lem:x-decl-prot}
  \begin{inferruleraw}
    \mathsf{protected}_{P_\SG}(x_d)\ 
    \Rightarrow\
    \exists s_c,\, T.\
    s_c \typeedget[\lblVAR] (\defdecl{x_d}{T}{\PRT}) \in \SG
  \end{inferruleraw}
\end{lemma}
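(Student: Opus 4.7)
The plan is to mirror the proof of \cref{lem:x-decl} almost verbatim, replacing the \lit*{private} keyword with \lit*{protected} and the \PRV policy with \PRT. Concretely, I would first unfold \cref{def:prot} to obtain a context $\ctx$ such that
\[
  P_\SG = \inCtx{\premCMem{\SG}{s_c}{\lit*{protected} \mathrel{\lit*{var}} x_d \mathrel{\lit*{=}} e}}
\]
for some scope $s_c$ and expression $e$. Then I would apply inversion inside $\ctx$: the only typing rule for class members of this shape is~\rref{d-def}, which yields contexts $\ctx_1$, $\ctx_2$, $\ctx_3$ and a type $T$ and policy $A$ such that
\[
  P_\SG = \inCtx[_1]{\premExp{\SG}{s_c}{e}{T}},\quad
  P_\SG = \inCtx[_2]{\premAcc{\SG}{s_c}{\lit*{protected}}{A}},\quad
  P_\SG = \inCtx[_3]{s_c \typeedget[\lblVAR] (\defdecl{x_d}{T}{A}) \in \SG}.
\]

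The one extra step relative to the \lit*{private} proof is pinning down $A = \PRT$. For this I would use~\cref{lem:lift-derivation} on the second assertion to obtain a derivation of $\premAcc{\SG}{s_c}{\lit*{protected}}{A}$, and then invert it. Since~\rref{a-prot} is the only rule whose conclusion matches the \lit*{protected} keyword, inversion immediately forces $A = \PRT$. Substituting back into the third assertion yields
\[
  P_\SG = \inCtx[_3]{s_c \typeedget[\lblVAR] (\defdecl{x_d}{T}{\PRT}) \in \SG},
\]
and a final application of~\cref{lem:lift-derivation} extracts the scope-graph assertion, witnessing the existentials $s_c$ and $T$ in the goal.

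I do not anticipate any real obstacle here: the structure of~\rref{d-def} and the one-to-one correspondence between surface keywords and access policy rules make this a routine unfolding. The only point where one has to be slightly careful is to make sure the same scope $s_c$ and type $T$ are threaded through the expression, access-modifier, and declaration premises of~\rref{d-def}; this is guaranteed by the shared metavariables in the rule itself, so no additional lemma (such as \cref{lem:decl-params-unique}) is needed for this lemma, in contrast to the soundness theorem in \cref{subsec:priv-sound}.
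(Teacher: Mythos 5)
Your proof is correct and follows essentially the same route as the paper's: unfold $\mathsf{protected}_{P_\SG}(x_d)$, invert via~\rref{d-def}, and extract the declaration assertion with~\cref{lem:lift-derivation}. The only difference is presentational — the paper writes $\PRT$ directly in the premises obtained from inverting~\rref{d-def}, whereas you first leave the policy as a metavariable $A$ and then pin it down by a separate inversion on $\premAcc{\SG}{s_c}{\lit*{protected}}{A}$ using~\rref{a-prot}; both are fine, and your remark that \cref{lem:decl-params-unique} is not needed here matches the paper.
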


\begin{proof}
From $\mathsf{protected}_{P}(x_d)$, we derive $\inCtx{\premCMem{\SG}{s_c}{\lit*{protected} \mathrel{\lit*{var}} x_d \mathrel{\lit*{=}} e}}$, for some $\ctx$.
Inversion yields a single case, using rule~\rref{d-def}, from which we infer that there exists some $\ctx_1$, $\ctx_2$ and $\ctx_3$ such that
\[
  \inCtx[_1]{\premExp{\SG}{s_c}{e}{T}}
  \qquad
  \inCtx[_2]{\premAcc{\SG}{s_c}{\lit*{protected}}{\PRT}}
  \qquad
  \inCtx[_3]{s_c \typeedget[\lblVAR] (\defdecl{x_d}{T}{\PRT}) \in \SG}
\]
From the third premise, the goal follows by~\cref{lem:lift-derivation}.
\end{proof}
We can now prove the correctness of our \texttt{protected} access modifier specification.
The desired semantics of the \texttt{protected} access modifier is defined as follows:

\begin{theorem}[Soundness of \texttt{protected} member access]
\label{def:prot-form}
Let $P_\SG$ be a valid typing derivation for an AML program with scope graph $\SG$.
Then, for all possible $x_r$ and $x_d$,
\begin{align*}
  &\mathsf{resolve}_{P_\SG}(x_r) = x_d 
    \mathrel{\land} 
    \mathsf{protected}_{P_\SG}(x_d)\ 
  \Rightarrow {}\\
  &\hspace{1em}
  \exists s_c, s_d.\ 
    \mathsf{definingClass}_{P_\SG}(x_d) = s_d 
    \mathrel{\land}
    \mathsf{enclosingClass}_{P_\SG}(x_r, s_c)
    \mathrel{\land}
    \mathsf{subClass}_{P_\SG}(s_c, s_d)
\end{align*}
\end{theorem}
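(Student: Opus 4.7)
The plan is to follow the same structural recipe used for the private-access soundness proof (Theorem~\ref{def:priv-form}) and then add a path-analysis step to extract the subclass witness. First I would invert the premise $\mathsf{protected}_{P_\SG}(x_d)$ using Definition~\ref{def:prot} to obtain a context $\ctx$, a scope $s_d$, and an expression $e$ with $P_\SG = \inCtx{\premCMem{\SG}{s_d}{\lit*{protected} \mathrel{\lit*{var}} x_d \mathrel{\lit*{=}} e}}$. This immediately yields $\mathsf{definingClass}_{P_\SG}(x_d) = s_d$ via Definition~\ref{def:defCls}, discharging the first conjunct of the conclusion. Lemma~\ref{lem:x-decl-prot} further gives $s_d \typeedget[\lblVAR] (\defdecl{x_d}{T}{\PRT}) \in \SG$ for some type $T$.

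Next I would unfold $\mathsf{resolve}_{P_\SG}(x_r) = x_d$ via Definition~\ref{def:resolve} to obtain a scope $s_r$ with $\mathsf{resolveTo}_{P_\SG}(s_r, x_r) = \defdecl{x_d}{T'}{A}$. Inverting $\mathsf{resolveTo}$ (Definition~\ref{def:resolveTo}) exposes the underlying query, whose single result $\langle p, \defdecl{x_d}{T'}{A}\rangle$ (by Lemma~\ref{lem:query-decl}) gives a scope $s_d'$ with $s_d' \typeedget[\lblVAR] (\defdecl{x_d}{T'}{A}) \in \SG$ and $\mathsf{tgt}(p) = s_d'$. Corollary~\ref{lem:decl-params-unique} then forces $s_d' = s_d$, $T' = T$, and $A = \PRT$. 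Applying Lemma~\ref{lem:res-to-acc} produces the policy obligation $\premMAcc{\SG}{s_r}{p}{\PRT}$.

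Inverting this accessibility judgement splits into two cases. In the \rref{ap-prot} case, I directly obtain a scope $s_c$ with $\premEncC{\SG}{s_r}{S_C}$, $s_c \in S_C$, and $s_c \in \mathsf{scopes}(p)$; together with the established $\mathsf{resolveTo}$ derivation this gives $\mathsf{enclosingClass}_{P_\SG}(x_r, s_c)$ via Definition~\ref{def:defEnc}. To finish the case I would use Lemma~\ref{lem:query-path} to conclude that $p$ matches the resolution regular expression ($\reLEX$ for lexical references or $\reMEM$ for field accesses), both of which have the form $\reclos{\lblLEX}\reclos{\lblEXT}\lblVAR$ up to the member-access prefix. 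Since $s_c$ is a class scope (reached from $s_r$ by $\reclos{\lblLEX}\lblTHIS$) and appears in $\mathsf{scopes}(p)$, the suffix of $p$ from $s_c$ to $s_d$ can only traverse $\lblEXT$-labelled edges, which by Definition~\ref{def:defSub} establishes $\mathsf{subClass}_{P_\SG}(s_c, s_d)$. In the \rref{ap-priv} fallback case, $\mathsf{tgt}(p) = s_d \in S_C$, so I would pick $s_c := s_d$; enclosing-class-ness is immediate and $\mathsf{subClass}_{P_\SG}(s_d, s_d)$ follows from the reflexive (empty) $\reclos{\lblEXT}$ path witnessed by $s_d$ itself in the scope graph.

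The main obstacle I anticipate is the path-shape argument in the \rref{ap-prot} case: one must justify formally that once a class scope $s_c$ appears in the path, the remainder of the path is confined to $\lblEXT$-edges. This requires a small auxiliary lemma relating regex-matching (Lemma~\ref{lem:query-path}) to splitting a path at an intermediate scope — essentially that if $p \sim \reclos{\lblLEX}\reclos{\lblEXT}$ and $s_c$ is a class scope occurring in $p$, then the suffix after $s_c$ matches $\reclos{\lblEXT}$. This is intuitively clear from the regular-expression structure, but it is the only nontrivial new reasoning beyond the template established in the private case.
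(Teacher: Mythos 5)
Your overall decomposition mirrors the paper's proof: invert $\mathsf{protected}$ to obtain the defining class $s_d$, unfold $\mathsf{resolve}$/$\mathsf{resolveTo}$, use \cref{lem:decl-params-unique} and \cref{lem:res-to-acc} to reach $\premMAcc{\SG}{s_r}{p}{\PRT}$, and case-split on \rref{ap-prot} versus \rref{ap-priv}. The defining-class part and the \rref{ap-priv} fallback (take $s_c := s_d$, reflexive subclass) are fine. The gap is in the \rref{ap-prot} branch: the auxiliary lemma you flag as ``intuitively clear'' is false. It is not true that whenever a class scope occurs in a path matching $\reclos{\lblLEX}\reclos{\lblEXT}$, the suffix after it matches $\reclos{\lblEXT}$: with nested classes, inner class scopes sit in the $\lblLEX$-prefix of the path. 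For \texttt{class A : public Sup \{ class B \{ \ldots x \ldots \} \}} with \texttt{x} protected in \texttt{Sup}, the resolution path is $s_{\texttt{B}} \pathstept[\lblLEX] s_{\texttt{A}} \pathstept[\lblEXT] s_{\texttt{Sup}}$, and the derivation of \rref{ap-prot} may well have been instantiated with the witness $s_{\texttt{B}}$ (it is an enclosing class of the reference and occurs in the path). The suffix after $s_{\texttt{B}}$ contains a $\lblLEX$ edge, so $\mathsf{subClass}_{P_\SG}(s_{\texttt{B}}, s_{\texttt{Sup}})$ fails and your argument breaks.

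The fix, which is what the paper does, is to \emph{not} reuse the witness supplied by \rref{ap-prot} as the theorem's $s_c$. Instead, split $p$ at that witness and walk forward to the $\lblLEX$/$\lblEXT$ boundary: take $s_c$ to be the scope whose incoming step (if any) is $\lblLEX$ but which has no outgoing $\lblLEX$ step in the remaining suffix. From $s_c$ onward the path is confined to $\lblEXT$ edges, giving $\mathsf{subClass}_{P_\SG}(s_c, s_d)$ (or $s_c = s_d$ if the boundary is the end of the path); and since this boundary scope is reached from the rule's witness by zero or more further $\lblLEX$ steps and is a class scope (hence carries a $\lblTHIS$ loop), it still lies in the enclosing-class set of $s_r$, so $\mathsf{enclosingClass}_{P_\SG}(x_r, s_c)$ holds. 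The theorem only asserts existence of \emph{some} $s_c$, so this re-choice is legitimate; your proof as written commits to the wrong witness and relies on a path-shape lemma that a two-level class nesting already refutes.
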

Here, $\mathsf{resolve}_{P_\SG}(x_r) = x_d$ means that the typing derivation of $P_{\SG}$ witnesses that reference $x_r$ resolves to declaration $x_d$,
$\mathsf{protected}_{P_\SG}(x_d)$ means that $x_d$ is a class field with \texttt{protected} access,
$\mathsf{definingClass}_{P_\SG}(x_d) = c$ means that $x_d$ is a member of the class represented by scope $s_c$, and
$\mathsf{enclosingClass}_{P_\SG}(x_r, c)$ holds if class $c$ contains the reference $x_r$.
$\mathsf{subClass}_{P_\SG}(s_c, s_d)$ states that class $s_c$ is a subclass of $s_d$.

Next, we prove that~\cref{def:prot-form} is satisfied by our specification.
\begin{proof}

  First, we prove that there is a class $s_d$ in which $x_d$ is defined.
  Then, we prove that some enclosing class $s_c$ of $x_r$ is a sub-class of $x_d$.
  
  \proofsubparagraph{Defining Class.}
  We get $\mathsf{definingClass}_{P_\SG}(x_d) = s_d$ in the conclusion as follows.
  By inversion on the premise $\mathsf{protected}_{P_\SG}(x_d)$, we get a $\ctx$, $s_d$, and $e$ for which $\inCtx{\premCMem{\SG}{s_d}{\lit*{protected} \mathrel{\lit*{var}} x_d \mathrel{\lit*{=}} e}}$.
  From this and~\cref{def:defCls}, the goal follows.

  \proofsubparagraph{Enclosing Class.}
  From an argument similar to the one in the proof of~\cref{def:priv-form}, we infer that
  $x_r$ is resolved in some scope $s_r$, and
  the access modifier included in the query result is \PRT.
  From~\cref{lem:res-to-acc}, we infer that
  \[
    \premMAcc{\SG}{s_r}{p}{\PRT}
  \]
  for the resolution path $p$.

  Inversion on this premise yields two cases.
  For \rref{ap-priv}, we infer $\mathsf{enclosingClass(x_r, s_d)}$ (similar to the proof of~\cref{def:priv-form}).
  In that case, choosing $s_c = s_d$ satisfies $\mathsf{subClass}(s_c, s_d)$, from which our goal follows.
  In the \rref{ap-prot}-case, we have the following premises:
  \begin{itemize}
    \item $\premEncC{\SG}{s_r}{S_C}$
    \item $s_c' \in S_C$
    \item $s_c' \in \mathsf{scopes}(p)$
  \end{itemize}
  for some scope $s_c'$.
  From the third premise, we infer that we can split $p$ in two segments $p_1$ and $p_2$, such that $\mathsf{tgt}(p_1) = \mathsf{src}(p_2) = s_c'$.
  From~\cref{lem:query-path}, we infer that $\premMatchRE{p_2}{\reLEX}$.
  Now, we pick the scope $s_c$ to be a scope in $p_2$ whose \emph{incoming} edge has label $\lblLEX$,
  but has no outgoing $\lblLEX$.
  That is $p_2$ is either the last scope in the path, or has an $\lblLEX$-edge as outgoing step.
  In both cases, $s_c$ is a class scope, because no paths to variables outside classes exist in AML.
  Therefore $s_c \in S_C$, and hence $\mathsf{enclosingClass(x_r, s_c)}$.

  When $s_c$ is the end of the path, $s_c = s_d$, and hence $\mathsf{subClass}(s_c, s_d)$.
  In the other case, by the regular expression $\reLEX$, we know only $\lblEXT$ edges can follow after $s_c$.
  For that reason, we can also infer $\mathsf{subClass}(s_c, s_d)$.
  Together, this proves our initial goal.
\end{proof}
\subsection{Soundness of Internal Access Validation}
\label{subsec:int-sound}

First, we define a $\mathsf{internal}(x_d, \setVar{x})$ that holds when $x_d$ is a variable declared with the \lit*{internal} access modifier with module parameters $\setVar{x}$:
\begin{definition}[$\mathsf{internal}$]
  \label{def:int}
  \begin{inferruleraw}
    \inference{
      P_\SG =
      \inCtx{
        \premCMem{\SG}{s}{\lit*{internal} \lit*{(} \setVar{x} \lit*{)} \mathrel{\lit*{var}} x_d \mathrel{\lit*{=}} e}
      }
    }{
      \mathsf{internal}_{P_\SG}(x_d, \setVar{x})
    }
  \end{inferruleraw}
\end{definition}
The premise of this rule states that $x_d$ was declared with the \texttt{internal} access modifier.

\begin{lemma}
  \label{lem:x-decl-int}
  \begin{align*}
    &\mathsf{internal}_{P_\SG}(x_d, \setVar{x})\ 
    \Rightarrow {}\\
    &\hspace{1em}
    \exists s_c,\, S,\, T.\
    S = \left\{
      s_m \mathbin{\Big|}
      x \in \setVar{x}, \premQMod{\SG}{s_c}{x}{s_m}
    \right\}
    \mathrel{\land}
    s_c \typeedget[\lblVAR] (\defdecl{x_d}{T}{\MODof{S}}) \in \SG
  \end{align*}
\end{lemma}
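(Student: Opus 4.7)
The plan is to mimic the proof strategies used for Lemmas~\ref{lem:x-decl} and~\ref{lem:x-decl-prot}, since the three lemmas share the same shape: unfolding the syntactic access-modifier predicate via inversion, then unfolding the \rref{d-def} and the modifier-specific access-policy rule, and finally lifting the declaration assertion out of its surrounding derivation context. What is new here is that the resulting access policy is not a constant like $\PRV$ or $\PRT$, but the structured policy $\MODof{S}$, whose parameter $S$ has to be characterized in terms of the module arguments $\setVar{x}$.

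First, from $\mathsf{internal}_{P_\SG}(x_d, \setVar{x})$ and~\cref{def:int}, I infer that there exist $\ctx$, $s_c$, and $e$ with
\[
  P_\SG = \inCtx{\premCMem{\SG}{s_c}{\lit*{internal(} \setVar{x} \lit*{)} \mathrel{\lit*{var}} x_d \mathrel{\lit*{=}} e}}.
\]
Next, I apply inversion inside the context $\ctx$. The only applicable rule is~\rref{d-def}, so there exist contexts $\ctx_1, \ctx_2, \ctx_3$, a type $T$, and an access policy $A$ such that
\[
  P_\SG = \inCtx[_1]{\premExp{\SG}{s_c}{e}{T}},
  \quad
  P_\SG = \inCtx[_2]{\premAcc{\SG}{s_c}{\lit*{internal(} \setVar{x} \lit*{)}}{A}},
  \quad
  P_\SG = \inCtx[_3]{s_c \typeedget[\lblVAR] (\defdecl{x_d}{T}{A}) \in \SG}.
\]

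The key step is to characterize $A$ from the second conclusion. The only rule deriving $\premAcc{\SG}{s_c}{\lit*{internal(\cdots)}}{\cdot}$ is~\rref{a-int}. Inversion on it therefore forces
\[
  A = \MODof{S} \quad \text{with} \quad S = \left\{\, s_m \mathbin{\Big|} x \in \setVar{x},\; \premQMod{\SG}{s_c}{x}{s_m}\,\right\}.
\]
Substituting this back into the third context equation and applying~\cref{lem:lift-derivation} yields $s_c \typeedget[\lblVAR] (\defdecl{x_d}{T}{\MODof{S}}) \in \SG$, which, together with the characterization of $S$ just obtained, is exactly the existential goal.

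The main (minor) obstacle is ensuring that the scope $s_c$ used in the~\rref{a-int} premise of~\rref{d-def} is identical to the scope used in the declaration edge and in the characterization of $S$; this follows because rule~\rref{d-def} shares the scope $s$ across all three premises, so inversion hands us a single $s_c$ rather than three potentially different ones. Everything else is a routine inversion-and-lift argument parallel to~\cref{lem:x-decl}.
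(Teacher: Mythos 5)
Your proposal is correct and follows essentially the same route as the paper's proof: unfold \cref{def:int}, invert \rref{d-def} to obtain the three premise contexts, invert \rref{a-int} to pin down the policy as $\MODof{S}$ with the stated characterization of $S$, and lift the declaration assertion via \cref{lem:lift-derivation}. The only (cosmetic) difference is that you introduce a generic policy $A$ before specializing it, whereas the paper writes $\MODof{S}$ ``for some $S$'' immediately; your explicit remark that \rref{d-def} shares a single scope across its premises is a correct observation the paper leaves implicit.
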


\begin{proof}
From $\mathsf{internal}_{P}(x_d, \setVar{x})$, we derive $\inCtx{\premCMem{\SG}{s_c}{\lit*{internal} \lit*{(} \setVar{x} \lit*{)}\mathrel{\lit*{var}} x_d \mathrel{\lit*{=}} e}}$, for some $\ctx$.
Inversion yields a single case, using rule~\rref{d-def}, from which we infer that there exists some $\ctx_1$, $\ctx_2$ and $\ctx_3$ such that
\[
  \inCtx[_1]{\premExp{\SG}{s_c}{e}{T}}
  \qquad
  \inCtx[_2]{\premAcc{\SG}{s_c}{\lit*{internal}}{\MODof{S}}}
  \qquad
  \inCtx[_3]{s_c \typeedget[\lblVAR] (\defdecl{x_d}{T}{\MODof{S}}) \in \SG}
\]
for some $S$.
From the third premise and~\cref{lem:lift-derivation}, we infer that $s_c \typeedget[\lblVAR] (\defdecl{x_d}{T}{\MODof{S}}) \in \SG$ holds for $S$.
By inversion on the second premise (using \rref{a-int}), we infer that $S$ is constructed precisely as defined in~\cref{lem:x-decl-int}.
\end{proof}
We can now prove the correctness of our \texttt{internal} access modifier specification.
The desired semantics of the \texttt{internal} access modifier is defined as follows:

\begin{theorem}[Soundness of \texttt{internal} member access]
\label{def:int-form}
Let $P_\SG$ be a valid typing derivation for an AML program with scope graph $\SG$.
Then, for all possible $x_r$ and $x_d$,
\begin{align*}
  &\mathsf{resolve}_{P_\SG}(x_r) = x_d 
    \mathrel{\land} 
    \mathsf{internal}_{P_\SG}(x_d, \setVar{x})\
  \Rightarrow {}\\
  &\hspace{1em}
  (\exists x\, s_m.\
    \mathsf{enclosingMod}_{P_\SG}(x_r) = s_m
    \mathrel{\land}
    x \in \setVar{x}
    \mathrel{\land} 
    \mathsf{resolveMod}(x) = s_m) \lor {}\\
  &\hspace{1em} 
  (\exists s_c.\ 
    \mathsf{definingClass}_{P_\SG}(x_d) = s_c 
    \mathrel{\land} 
    \mathsf{enclosingClass}_{P_\SG}(x_r, s_c))
\end{align*}
\end{theorem}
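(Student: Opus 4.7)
The plan is to follow the same template as the soundness proofs for \texttt{private} and \texttt{protected} (\cref{def:priv-form,def:prot-form}): extract the scope-graph declaration of $x_d$ with its policy from $\mathsf{internal}_{P_\SG}(x_d, \setVar{x})$, connect that policy to the policy seen by the resolver of $x_r$ via uniqueness of declaration parameters, and then case-split on the last rule used to derive the access obligation.

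\textbf{Step 1: Extracting the declaration.} From $\mathsf{internal}_{P_\SG}(x_d, \setVar{x})$, invoke \cref{lem:x-decl-int} to obtain a scope $s_d$, a type $T$, and a set $S$ such that $s_d \typeedget[\lblVAR] (\defdecl{x_d}{T}{\MODof{S}}) \in \SG$, where $S = \{ s' \mid x_i \in \setVar{x},\ s_d \vdash_\SG x_i \resolvemod s' \}$. The scope $s_d$ also witnesses $\mathsf{definingClass}_{P_\SG}(x_d) = s_d$ by \cref{def:defCls}.

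\textbf{Step 2: Aligning policies at the reference site.} Unfolding $\mathsf{resolve}_{P_\SG}(x_r) = x_d$ via \cref{def:resolve,def:resolveTo} yields a scope $s_r$, type $T'$, and policy $A$ with $\mathsf{resolveTo}_{P_\SG}(s_r, x_r) = \defdecl{x_d}{T'}{A}$, and by \cref{lem:query-decl} a scope $s_c'$ with $s_c' \typeedget[\lblVAR] (\defdecl{x_d}{T'}{A}) \in \SG$. Using \cref{lem:decl-params-unique} together with \cref{lem:scope-support,lem:decl-support} these two declarations must coincide, so $s_c' = s_d$, $T' = T$, and crucially $A = \MODof{S}$. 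Then \cref{lem:res-to-acc} produces a path $p$ with $\premMAcc{\SG}{s_r}{p}{\MODof{S}}$ and, by \cref{lem:query-decl}, $\mathsf{tgt}(p) = s_d$.

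\textbf{Step 3: Case analysis on the access derivation.} Invert $\premMAcc{\SG}{s_r}{p}{\MODof{S}}$. In the \rref{ap-int} case we obtain an $s_m$ with $\premEncMI{\SG}{s_r}{s_m}$ and $s_m \in S$. Combined with $\mathsf{resolveTo}_{P_\SG}(s_r, x_r)$ this gives $\mathsf{enclosingMod}_{P_\SG}(x_r) = s_m$ by \cref{def:defEncMod}. Since $s_m \in S$, the set-comprehension definition of $S$ supplies an $x \in \setVar{x}$ with $s_d \vdash_\SG x \resolvemod s_m$; unfolding \rref{q-mod} gives a query that witnesses $\mathsf{resolveModIn}_{P_\SG}(x, s_d) = \moddecl{x}{s_m}$, hence $\mathsf{resolveMod}_{P_\SG}(x) = s_m$ by \cref{def:resolve-mod}. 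This yields the first disjunct. In the \rref{ap-priv} case, we get $\premEncC{\SG}{s_r}{S_C}$ and $\mathsf{tgt}(p) \in S_C$; since $\mathsf{tgt}(p) = s_d$, taking $s_c = s_d$ gives $\mathsf{enclosingClass}_{P_\SG}(x_r, s_d)$ by \cref{def:defEnc}, which together with Step~1 yields the second disjunct.

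\textbf{Main obstacle.} The routine parts are the two inversions; the subtle part is bridging Step~2 and Step~3 cleanly, namely ensuring that the set $S$ stored on the declaration in the scope graph is \emph{literally} the set built at the definition site by \rref{a-int}, so that the membership $s_m \in S$ recovered from \rref{ap-int} really does come from some $x \in \setVar{x}$. This relies on uniqueness of the declaration (\cref{lem:decl-params-unique}) propagating equality of the policy component, which only holds because rule \rref{d-def} is the sole way to introduce a $\lblVAR$ declaration and because indices pin down which source $x_d$ we are talking about (\cref{asmp:index-unique}). Once that alignment is in place, the rest of the proof follows the now-standard pattern.
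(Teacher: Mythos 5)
Your proposal is correct and follows essentially the same route as the paper's proof: extract the $\MODof{S}$ declaration via the auxiliary lemma, align it with the policy seen at the reference site (the paper abbreviates this by appealing to the \texttt{private}/\texttt{protected} proofs, which you instead spell out via \cref{lem:decl-params-unique} and \cref{lem:res-to-acc}), then invert the access judgment into the \rref{ap-int} and \rref{ap-priv} cases, recovering the module disjunct from $s_m \in S$ and the private-like disjunct from $\mathsf{tgt}(p) = s_d$. No gaps; the subtlety you flag about $S$ being literally the set constructed by \rref{a-int} is exactly the alignment the paper relies on.
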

Here, $\mathsf{resolve}_{P_\SG}(x_r) = x_d$ means that the typing derivation of $P_{\SG}$ witnesses that reference $x_r$ resolves to declaration $x_d$,
$\mathsf{internal}_{P_\SG}(x_d, \setVar{x})$ means that $x_d$ is a class field with \texttt{internal} access, with module names $\setVar{x}$.
$\mathsf{enclosingMod}_{P_\SG}(x_r) = s_m$ means that $s_m$ is the scope of the most directly enclosing module of $x_r$, and
$\mathsf{resolveMod}(x) = s_m$ holds if module reference $x$ resolves to the module with scope $s_m$.
The right-hand side of the disjunct is similar to the private property (\cref{def:priv-form}).
In this way, we allow access to a \lit*{internal} variable as if it were private.

Next, we prove that~\cref{def:int-form} is satisfied by our specification.
\begin{proof}
  Similar to the proofs of~\cref{def:priv-form,def:prot-form}, using~\cref{lem:x-decl-int}, we infer that
  \[
    s_d \typeedget[\lblVAR] (\defdecl{x_d}{T}{\MODof{S}}) \in \SG   
  \]
  where
  \[
    S = \left\{
      s_m \Big|
      x \in \setVar{x}, \premQMod{\SG}{s_d}{x}{s_m}
    \right\}
  \]
  in some scope $s_d$.
  Likewise, we infer that $x_r$ is resolved in some scope $s_r$, for which the path is validated as follows:
  \[
    \premMAcc{\SG}{s_r}{p}{\MODof{S}}
  \]
  Inversion on this premise yields two cases.
  First, the case for~\rref{ap-priv} makes the right-hand side of the disjunction true (similar to the proof of~\cref{def:priv-form}), which proves our goal.

  The case for~\rref{ap-int} yields the following assumptions:
  \begin{itemize}
    \item $\premEncMI{\SG}{s_r}{s_m'}$, and
    \item $s_m' \in S$.
  \end{itemize}
  Using these premises, we will prove the left-hand disjunct of our goal:
  \begin{align*}
    \exists s_m,\, x.\
    \mathsf{enclosingMod}_{P_\SG}(x_r) = s_m
    \mathrel{\land}
    x \in \setVar{x}
    \mathrel{\land} 
    \mathsf{resolveMod}(x) = s_m
  \end{align*}
  By instantiating the existentially quantified variable $s_m$ to $s_m'$, we obtain $\mathsf{enclosingMod}_{P_\SG}(x_r) = s_m'$ as a proof goal,
  which is proven using the first premise, and the initial assumption $\mathsf{resolve}_{P_\SG}(x_r) = x_d$.
  Finally, from the construction of $S$, we infer that
  \[
    \premQMod{\SG}{s_d}{x'}{s_m'}
  \]
  for some $x' \in \setVar{x}$.
  By instantiating $x$ with $'$, \rref{q-mod}, and~\cref{def:resolve-mod-in}, we infer $\mathsf{resolveMod}(x') = s_m'$.
  This, together with $\mathsf{enclosingMod}_{P_\SG}(x_r) = s_m'$, proves our goal.
\end{proof}

}{
}

\end{document}